\documentclass[a4paper,UKenglish,cleveref, autoref,numberwithinsect]{lipics-v2019}
\newif\iflong\longtrue
\nolinenumbers
\hideLIPIcs

\usepackage[utf8]{inputenc}
\usepackage[T1]{fontenc}
\usepackage{amsmath,amssymb}
\usepackage[arrow, matrix, curve]{xy}
\usepackage{amsthm}
\usepackage[disable]{todonotes}
\usepackage{graphicx}
\usepackage{hyperref}
\usepackage{float}
\usepackage{dsfont}
\usepackage{thm-restate}
\usepackage{braket}
\usepackage{tikz}
\usepackage{verbatim}
\usepackage{enumitem}
\usepackage{mathrsfs}
\usepackage{algorithm}

\usepackage[noend]{algpseudocode}

\definecolor{myred}{rgb}{1,0.25,0.25}

\newtheorem{redrule}{Rule}[section]
\theoremstyle{definition}

\newcommand{\Oh}{\ensuremath{\mathcal{O}}}

\theoremstyle{definition}

\newcommand{\todog}[1]{\todo[color=yellow!90]{ #1}}

\definecolor{myred}{rgb}{1,0.25,0.25}

\newcommand{\pe}{\xi}

\newcommand{\pc}{\lambda_{c}}
\newcommand{\core}{\mathscr{C}}
\newcommand{\peri}{\mathscr{P}}
\newcommand{\out}{\text{out}}

\renewcommand{\mod}{\text{ mod }}
\newcommand{\aux}{\text{aux}}
\newcommand{\cc}{component cover}
\newcommand{\badstuffhappens}{\text{NP} \subseteq \text{coNP} / \text{poly}}

\title{Maximum Edge-Colorable Subgraph and Strong Triadic Closure Parameterized by Distance to Low-Degree Graphs}
\titlerunning{Maximum Edge-Colorable Subgraph and Strong Triadic Closure}

%\author{Anonymous Authors }{Anonymous}{}{}{}

\author{Niels Grüttemeier}{Philipps-Universität Marburg, Fachbereich Mathematik und Informatik,  Marburg, Germany}{niegru@informatik.uni-marburg.de}{https://orcid.org/0000-0002-6789-2918}{}

\author{Christian Komusiewicz}{Philipps-Universität Marburg, Fachbereich Mathematik und Informatik,  Marburg, Germany}{komusiewicz@informatik.uni-marburg.de}{https://orcid.org/0000-0003-0829-7032}{}

\author{Nils Morawietz}{Philipps-Universität Marburg, Fachbereich Mathematik und Informatik,  Marburg, Germany }{morawietz@informatik.uni-marburg.de}{}{}

%TODO mandatory, please use full name; only 1 author per \author macro; first two parameters are mandatory, other parameters can be empty. Please provide at least the name of the affiliation and the country. The full address is optional

\authorrunning{N.~Grüttemeier, C.~Komusiewicz, N.~Morawietz}%TODO mandatory. First: Use abbreviated first/middle names. Second (only in severe cases): Use first author plus 'et al.'

\Copyright{Niels~Grüttemeier, Christian~Komusiewicz, Nils~Morawietz}%TODO mandatory, please use full first names. LIPIcs license is "CC-BY";  http://creativecommons.org/licenses/by/3.0/

\ccsdesc[500]{Theory of computation~Parameterized complexity and exact algorithms}
\ccsdesc[500]{Theory of computation~Fixed parameter tractability}

\keywords{Graph coloring, social networks, parameterized complexity, kernelization}%TODO mandatory; please add comma-separated list of keywords

\begin{document}

\maketitle
\begin{abstract}
  Given an undirected graph~$G$ and integers~$c$ and~$k$, the \textsc{Maximum Edge-Colorable
    Subgraph} problem asks whether we can delete at most~$k$ edges in~$G$ to obtain a graph
  that has a proper edge coloring with at most~$c$ colors. We show that \textsc{Maximum
    Edge-Colorable Subgraph} admits, for every fixed~$c$, a linear-size problem kernel when
  parameterized by the edge deletion distance of~$G$ to a graph with maximum degree~$c-1$. This
  parameterization measures the distance to instances that, due to Vizing's famous theorem, are
  trivial yes-instances. For~$c\le 4$, we also provide a linear-size kernel for the same
  parameterization for \textsc{Multi Strong Triadic Closure}, a related edge coloring problem
  with applications in social network analysis.  We provide further results for \textsc{Maximum
    Edge-Colorable Subgraph} parameterized by the vertex deletion distance to graphs where
  every component has order at most~$c$ and for the list-colored versions of both problems.
\end{abstract}
%\newpage
\todo[inline]{ToDo:\\
- List-Problems raus (?)\\
- Parameter verteidigen\\
- $\pe_{c-1}$ and~$\pc$ are not related\\
- Granovetter zitieren\\
}
\section{Introduction}
Edge coloring and its many variants form a fundamental problem family in algorithmic graph
theory~\cite{CH82,HK86,Hol81,JT11}. In the classic \textsc{Edge Coloring}
problem, the input is a graph~$G$ and an integer~$c$ and the task is to decide whether~$G$ has a \emph{proper edge coloring}, that is, an assignment of colors to the
edges of a graph such that no pair of incident edges receives the same color, with at most~$c$ colors.  The number of necessary colors for a proper
edge coloring of a graph~$G$ is closely related to the degree of~$G$: Vizing's
famous theorem states that any graph~$G$ with maximum degree~$\Delta$ can be edge-colored
with~$\Delta+1$ colors~\cite{V64}, an early example of an additive approximation
algorithm. Later it was shown that \textsc{Edge Coloring} is NP-hard for~$c=3$~\cite{Hol81},
and in light of Vizing's result it is clear that the hard instances for~$c=3$ are exactly the
subcubic graphs. Not surprisingly, the NP-hardness extends to every fixed~$c \geq 3$~\cite{LG83}.

In the more general \textsc{Maximum Edge-Colorable Subgraph (ECS)} problem, we are given an
additional integer~$k$ and want to decide whether we can delete at most~$k$ edges in the input
graph~$G$ so that the resulting graph has a proper~edge coloring with~$c$~colors. \textsc{ECS} is NP-hard
for~$c=2$~\cite{FOW02} and it has received a considerable amount of interest for small constant values
of~$c$ such as~$c=2$~\cite{FOW02,Kos09},~$c=3$~\cite{Kos09,Kow09,Riz09},
and~$c\le 7$~\cite{KK14}. Feige et al.~\cite{FOW02} mention that \textsc{ECS} has applications in call admittance in telecommunication networks. Given the large amount of algorithmic literature on this problem, it
is surprising that there is, to the best of our knowledge, no work on fixed-parameter
algorithms for ECS. This lack of interest
may be rooted in the NP-hardness of \textsc{Edge Coloring} for every fixed~$c\ge 3$, which implies that
ECS is not fixed-parameter tractable with respect to~$k+c$ unless P=NP.

Instead of the parameter~$k$, we consider the parameter~$\pe_{c-1}$ which we define as the minimum
number of edges that need to be deleted in the input graph to obtain a graph with maximum
degree~$c-1$. This is a distance-from-triviality parameterization~\cite{GHN04}: Due to
Vizing's Theorem, the answer is always yes if the input graph has maximum degree~$c-1$. We
parameterize by the edge-deletion distance to this trivial case. Observe that the number of vertices with degree at least~$c$ is at most~$2 \pe_{c-1}$. If we consider \textsc{Edge Coloring} instead of \textsc{ECS}, the instances with maximum degree larger than~$c$ are trivial no-instances. Thus, in non-trivial instances, the parameter~$\pe_{c-1}$ is essentially the same as the number of vertices that have degree~$c$. This is, arguably, one of the most natural parameterizations for \textsc{Edge Coloring}. We achieve a  kernel that has linear size
for every fixed~$c$.
\begin{restatable}{theorem}{xiecs}
% \begin{theorem}
  \label{Theorem: ECS xi kernel}
\textsc{ECS} admits a problem kernel with at most~$4 \pe_{c-1} \cdot c$ vertices and~$\Oh(\xi_{c-1} \cdot c^2)$ edges that can be computed in~$\Oh(n+m)$ time.
% \end{theorem}  
\end{restatable}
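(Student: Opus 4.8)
The plan is to reduce, in linear time, to the part of $G$ lying close to the high-degree vertices, since Vizing's theorem makes everything else trivial to colour. Write $H := \{v \in V(G) : \deg_G(v) \ge c\}$ for the set of \emph{high-degree} vertices and $L := V(G) \setminus H$ for the remaining \emph{low-degree} vertices. First I would establish the parameter bound $|H| \le 2\xi_{c-1}$: any edge-deletion set witnessing $\xi_{c-1}$ removes, at each $v \in H$, at least $\deg_G(v) - (c-1) \ge 1$ incident edges, and every deleted edge is charged to at most two such vertices, so $|H| \le \sum_{v \in H}(\deg_G(v) - (c-1)) \le 2\xi_{c-1}$. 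By Vizing's theorem $G[L]$ has maximum degree $\le c-1$ and is properly $c$-edge-colourable; more usefully, every $w \in L$ keeps at least two free colours in any partial $c$-colouring, which is the source of all the flexibility exploited below.

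Next I would introduce the reduction rules. Rule~1 deletes every connected component of $G$ avoiding $H$: such a component has maximum degree $\le c-1$, needs $0$ deletions, and so changes neither the answer nor~$k$. Rule~2 is the core rule and caps, for each $u \in H$, the number of retained neighbours: while $\deg_G(u) > 2c$, we delete an incident edge $uw$ (preferring $w \in L$) and decrease~$k$ by one (reporting a no-instance if $k$ becomes negative). The intuition is that a vertex of degree~$d$ colours at most~$c$ of its edges in any solution, so at least $d-c$ are forced deletions; keeping a pool of $2c$ candidate edges leaves enough slack to realise whichever $\le c$ edges an optimal solution actually colours at~$u$. After both rules are exhausted, the kernel consists of~$H$ together with at most $2c$ associated vertices per high-degree vertex (namely $u$ itself and up to $2c-1$ retained neighbours).

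The correctness of Rule~2 is the main obstacle, and I would argue it by an exchange argument in two directions. For the reduced instance obtained by deleting $uw$ and decrementing~$k$, any solution of the reduced instance, with $uw$ reinserted, is a solution of the original with the same budget, because $w$ has at least two free colours and so $uw$ can always be coloured. The harder direction starts from an optimal deletion set~$D$ of the original instance and must transform it into one containing~$uw$: since $u$ colours at most~$c$ of its at least~$2c$ retained incident edges, a Hall-type argument on the bipartite incidence between the edges at~$u$ and the~$c$ colour classes yields an alternative coloured edge of the same colour that can be swapped for the kept copy of~$uw$. The delicate point is that a retained edge may be an $H$--$H$ edge whose other endpoint is itself degree-constrained and possibly shared, so a naive swap can cascade; I would control this by deleting $L$-neighbours first and, when only $H$-neighbours remain, routing the exchange along an alternating/augmenting path so that each affected high-degree vertex still colours at most~$c$ edges. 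Pinning down this augmenting argument, and the exact constant $2c$ that makes it go through, is the crux of the proof.

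Finally I would assemble the bounds. The kernel keeps $H$ (at most $2\xi_{c-1}$ vertices) and associates at most $2c$ vertices with each $u \in H$, giving at most $2\xi_{c-1} \cdot 2c = 4\xi_{c-1}\cdot c$ vertices. For the edges, every high-degree vertex now has degree at most~$2c$, so all edges incident to~$H$ number $\Oh(\xi_{c-1}\cdot c)$; the remaining edges lie among the retained low-degree vertices, of which there are $\Oh(\xi_{c-1}\cdot c)$, each of degree at most $c-1$, contributing $\Oh(\xi_{c-1}\cdot c^2)$ edges in total. Computing $H$ and all degrees takes $\Oh(n+m)$ time, and both rules can be applied in a single linear-time sweep using degree buckets, so the whole kernelization runs in $\Oh(n+m)$ time, as claimed.
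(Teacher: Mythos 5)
Your central reduction rule (Rule~2: while $\deg_G(u)>2c$ for some high-degree vertex~$u$, delete an incident edge~$uw$, preferring $w\in L$, and decrease~$k$ by one) is not safe, so the argument cannot be completed along these lines. It is true that any solution makes at least $\deg_G(u)-c$ edges at~$u$ weak, but it does not follow that the \emph{specific} edge you delete can be taken to be one of them: there are instances in which every optimal labeling colors exactly that edge. For $c=3$, let~$u$ be adjacent to a pendant vertex~$w$ and to six vertices $x_1,\dots,x_6$, and let each~$x_i$ form a~$K_4$ with three private vertices. If~$ux_i$ is strong, the four mutually incident edges at~$x_i$ would need four distinct colors, so at least one edge inside the $i$-th~$K_4$ is weak, and these penalties land in pairwise disjoint~$K_4$'s. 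Writing~$S$ for the set of strong edges at~$u$, every labeling therefore has at least $(7-|S|)+(|S|-1)=6$ weak edges if $uw\in S$, and at least $(7-|S|)+|S|=7$ weak edges if $uw\notin S$. Hence $\mathrm{opt}(G)=6$ (color~$uw$, make all~$ux_i$ weak, color each~$K_4$ properly), \emph{every} optimal labeling colors~$uw$, and the same count gives $\mathrm{opt}(G-uw)=6$. Since $\deg_G(u)=7>2c$ and~$w$ is the unique neighbor of~$u$ in~$L$, your rule deletes precisely~$uw$ and decrements~$k$, turning the yes-instance $(G,3,6)$ into the no-instance $(G-uw,3,5)$. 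This also certifies that the exchange/augmenting argument you defer as ``the crux'' cannot be pinned down: the statement it would have to establish is false. (Incidentally, your justification of the easy direction is also flawed --- coloring the reinserted~$uw$ requires a color missing at \emph{both}~$u$ and~$w$, which need not exist --- though that direction is trivially repaired by making~$uw$ weak.)

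There is a second, independent gap: even granting both rules, the claimed size bound does not follow. Rule~1 deletes only components avoiding~$H$, and Rule~2 only deletes edges at vertices of degree more than~$2c$; for $c\ge 3$, a component consisting of a vertex of degree exactly~$2c$ with~$2c$ arbitrarily long paths attached is touched by neither rule, although $\xi_{c-1}=c+1$ there. So one must remove low-degree vertices that are far from~$H$ but lie in components containing~$H$-vertices, and this is exactly where the paper's work lies: its single rule keeps only $\core\cup N(\core)$ for the endpoint set~$\core$ of a $2$-approximate edge-deletion set, leaves~$k$ unchanged, and proves safeness by re-inserting the removed edges one at a time. Each re-inserted edge has an endpoint whose \emph{entire closed neighborhood} has degree at most~$c-1$, and precisely this hypothesis makes the Vizing-fan machinery (Lemma~\ref{Lemma: Fan Equation}, applied via Lemma~\ref{Lemma: Add edges of low degree}) available, so the edge can be colored without creating any new weak edge. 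Your rule instead attempts to (re)color edges at the high-degree vertex~$u$ itself, where the fan condition fails --- as the counterexample above confirms.
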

Herein,~$n$ denotes the number of vertices of the input graph~$G$ and~$m$ denotes the
number of edges.  This kernel is obtained by making the following observation about the
proof of Vizing's Theorem: When proving that an edge can be safely colored with one of~$c$
colors, we 
only need to consider the closed neighborhood of one endpoint of this edge. This allows us to show that all vertices
which have degree at most~$c-1$ and only neighbors of degree at most~$c-1$ can be safely
removed.

Next, we consider ECS parameterized by the size~$\pc$ of a smallest vertex set~$D$ such
that deleting~$D$ from~$G$ results in a graph where each connected component has at
most~$c$ vertices. The parameter~$\pc$ presents a different distance-from-triviality
parameterization, since a graph with connected components of order at most~$c$ can
trivially be colored with~$c$ edge colors. Moreover, observe that~$\pc$ is never larger than the vertex cover number which is a popular structural parameter. Again, we obtain a linear-vertex kernel for~$\pc$
when~$c$ is fixed.
% We investigate this parameter mostly to
% better understand differences and similarities between the kernelizability of
% \textsc{Multi-STC} and ECS.  \textsc{Multi-STC} does not admit a polynomial kernel for
% this parameter even for~$c=1$: \textsc{STC} does not admit a polynomial kernel if
% parameterized by the number of strong edges~\cite{GK18} which---in nontrivial
% instances---is bigger than the size of a maximal matching~$M$. Since the vertex cover
% number is never larger than~$2|M|$, this implies that STC has no polynomial kernel
% for~$s$. In contrast, ECS admits a linear-vertex kernel when parameterized by~$s$ for
% every fixed~$c$.
\begin{restatable}{theorem}{vckernel}
% \begin{theorem}
  \label{Theorem: Vertex cover ECS kernel}
  \textsc{ECS} admits a problem kernel with~$\Oh(c^3 \cdot \pc)$ vertices.%, where~$s$ is the size of a minimum vertex cover of the input graph~$G$. % The kernel can be computed in
  % polynomial time.%, if~$S$ is known.
% \end{theorem}
\end{restatable}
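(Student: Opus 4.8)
The plan is to work with a modulator $D\subseteq V(G)$ such that every connected component of $G-D$ has at most~$c$ vertices. Since computing a smallest such set is itself hard, I would first compute an \emph{approximate} modulator in polynomial time: a set is a valid modulator if and only if it is a hitting set for all connected subgraphs on exactly~$c+1$ vertices, so repeatedly growing such a subgraph (by BFS inside any oversized component) and adding all its $c+1$ vertices yields, by the standard $f$-approximation for hitting sets with sets of size~$f$, a modulator with $|D|=\Oh(c\cdot\pc)$. After fixing~$D$, the components $C_1,\dots,C_t$ of $G-D$ each have at most~$c$ vertices and hence maximum degree at most~$c-1$, so by Vizing's Theorem each is $c$-edge-colorable on its own. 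The first reduction rule deletes every component with no edge to~$D$: such a component is colorable with~$0$ deletions, so removing it changes neither the optimal number of deletions nor the answer, and $k$ is left untouched. The whole difficulty is therefore to bound the number of \emph{crossing components}, those joined to~$D$ by at least one edge, by $\Oh(c^2\cdot\pc)$; together with $|D|=\Oh(c\cdot\pc)$ this gives the claimed $\Oh(c^3\cdot\pc)$ vertices.

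To bound the crossing components I would use the capacity constraint that in any $c$-edge-colorable graph every vertex has degree at most~$c$; hence in every solution each $v\in D$ keeps at most~$c$ of its incident crossing edges, so at most~$c$ components can be \emph{active} at~$v$. This invites a marking argument, or equivalently an application of the expansion lemma to the bipartite incidence graph between $D$ and the components: if the number of crossing components exceeds $(c+1)\cdot|D|$, there are a nonempty $X\subseteq D$ and a set~$Y$ of components with $N(Y)\subseteq X$ such that each vertex of~$X$ is matched to at least $c+1$ distinct components of~$Y$. Because the components in~$Y$ send edges only into~$X$, their interaction with the rest of the instance is confined to~$X$. The reduction rule then removes one matched component $C\in Y$ and decreases $k$ by the number of edges between~$C$ and~$D$: once $C$ is disconnected from~$D$ it is colorable at no further cost, so this is exactly the cost charged to~$C$ in any solution that does not use it.

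The correctness of this rule is the main obstacle. I have to show that some optimal edge-deletion set can be assumed to delete \emph{all} edges between the removed component~$C$ and~$D$. Starting from an optimal solution with a proper $c$-edge-coloring of the surviving graph, suppose $C$ still has surviving crossing edges. Since every $v\in X$ has at least $c+1$ private matched components in~$Y$ but can be active with at most~$c$ of them, at each relevant $v$ there is a matched component $C'\neq C$ that the solution leaves inactive, and the idea is to reroute each surviving crossing edge of~$C$ at~$v$ to such a~$C'$. A single reroute is feasible because $C'$ is an independent small component: pick $u'\in C'$ of internal degree at most $c-1$, color~$C'$ internally with $c$ colors, and apply a color permutation inside~$C'$ (which does not affect the rest of the graph) so that the color freed at~$v$ becomes free at~$u'$; then the rerouted edge can reuse that color. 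The genuinely delicate point is that~$C$ may have several surviving crossing edges, possibly to several vertices of~$X$ at once, so the rerouting must be carried out simultaneously for a whole bundle while keeping every replacement~$C'$ feasible and never exceeding capacity~$c$ at any modulator vertex; making this simultaneous exchange go through without increasing the number of deletions, exploiting that the factor $c+1$ leaves a spare component at every vertex of~$X$, is where the bulk of the work lies.

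Once the rule is shown to be safe, applying it exhaustively leaves at most $(c+1)\cdot|D|=\Oh(c^2\cdot\pc)$ crossing components, each with at most~$c$ vertices, so the reduced instance has $\Oh(c^3\cdot\pc)$ vertices in total. The incidence graph, the matching, and the expansion can all be computed in polynomial time, so the kernelization is polynomial. Finally I would verify the budget bookkeeping: since each removed component is charged exactly its number of crossing edges while the first rule leaves~$k$ unchanged, the reduced instance is a yes-instance if and only if the original one is.
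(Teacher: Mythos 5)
Your skeleton coincides with the paper's: an $\Oh(c)$-factor approximate order-$c$ component cover~$D$, a rule deleting components of $G-D$ with no edge to~$D$, an application of the Expansion Lemma to the bipartite incidence structure between~$D$ and the components of $G-D$, and the final counting. The genuine gap is exactly the step you yourself flag as ``where the bulk of the work lies'': the forward direction of your rule, namely that some optimal labeling of~$G$ makes \emph{all} edges between the removed component~$C$ and~$D$ weak. Your local rerouting sketch does not close it, and as stated it is flawed. Suppose $v\in X$ has a strong edge into~$C$ with color~$\alpha$ and $C'$ is an inactive matched component at~$v$ with matched edge $\{v,u'\}$. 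The swap needs a color that is free at~$u'$; but $u'$ may already be incident with $c$ strong edges, namely its internal edges plus strong crossing edges to \emph{other} vertices of~$X$ (the Expansion Lemma confines $N(Y)$ to~$X$, not to~$\{v\}$). The colors of those crossing edges cannot be altered by a permutation inside~$C'$, since any recoloring of $G[C']$ must still avoid them at every vertex of~$C'$; when $u'$ is saturated, no color can be freed, and even the single reroute fails. The ``simultaneous bundle'' version you defer is therefore not a routine extension but the actual missing proof, and solution-modifying exchanges of this local kind run into cascading recolorings across~$X$.

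The paper sidesteps this entirely by a different rule: it deletes $X\cup Y$ wholesale (including the modulator vertices~$X$) and decreases~$k$ by $|E_G(X,V)|-c\cdot|X|$. With that charging, the lower bound is a pure capacity argument (each $x\in X$ keeps at most~$c$ strong incident edges in any proper labeling), and the upper bound is a \emph{global} relabeling: every edge incident to~$X$ is made weak except the $c|X|$ matched edges of~$M$, which receive colors $1,\dots,c$ at each~$x$; since each component of $G[Y]$ contains at most one matched vertex, its interior is colored via Vizing's Theorem and two color classes are swapped so that the matched vertex avoids the color of its matched edge. Your per-component rule can in fact be repaired by the same idea --- do not modify the given labeling edge by edge, but rebuild all labels on $E(X,Y\setminus C)\cup E(Y\setminus C)$ from scratch: at each $x\in X$, reuse the colors that the old labeling spent on strong edges from~$x$ into~$Y$ on equally many matched edges into distinct components of $Y\setminus C$ (possible because $x$ has at least~$c$ matched components there), make all other $X$--$Y$ edges weak, and finish with Vizing plus a color swap inside each component; this saves the required $|E_G(C,X)|$ weak edges. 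But that argument, which is essentially the paper's construction, is absent from your write-up, so as it stands the safety of your key rule --- and hence the kernel bound --- is unproven.
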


We then consider \textsc{Multi Strong Triadic Closure (Multi-STC)} a closely related
edge coloring problem with applications in social network analysis~\cite{ST14}. In \textsc{Multi-STC}, we are given a graph~$G$ and two integers~$k$ and~$c$ and aim to find a coloring of
the edges with one \emph{weak} and at most~$c$ \emph{strong colors} such that every pair of
incident edges that forms an induced path on three vertices does not receive the same strong
color and the number of weak edges is at most~$k$. The idea behind this problem is to uncover the different strong relation types in social networks by using the following assumption: if one person has for example two colleagues, then these two people know each other and should also be connected in the social network. %are colleagues themselves and thus they should be connected by an edge of a similar type.
 In other words, if a vertex has two neighbors that are not adjacent to each other, then this is evidence that either the strong interaction types with these two neighbors are different or one of the interaction types is merely weak. \iflong

\fi Combinatorically, there are two crucial
differences to ECS: First, two incident edges may receive the same strong color if the subgraph
induced by the endpoints is a triangle. Second, instead of deleting edges to obtain a graph
that admits such a coloring, we may label edges as weak. In~ECS this does not make a
difference; in \textsc{Multi-STC}, however, deleting an edge may destroy triangles which would add an additional
constraint on the coloring of the two remaining triangle edges.

In contrast to ECS, \textsc{Multi-STC} is NP-hard already for~$c=1$~\cite{ST14}. This special case is known
as \textsc{Strong Triadic Closure} (STC). Not surprisingly, \textsc{Multi-STC} is NP-hard
for all fixed~$c\ge 2$~\cite{BGKS19}. Moreover, for~$c\ge 3$ \textsc{Multi-STC} is NP-hard even if~$k=0$, that is, even if
every edge has to be colored with a strong color. STC and Multi-STC
have received a considerable amount of interest
recently~\cite{ST14,GHK+18,GK18,BGKS19,KNP17,KP17}.

Since the edge coloring for \textsc{Multi-STC} is a relaxed version of a proper edge coloring,
we may observe that Vizing's Theorem implies the following: If the input graph~$G$ has degree at
most~$c-1$, then the instance is a yes-instance even for~$k=0$. Hence, it is very natural to
apply the parameterization by~$\pe_{c-1}$ also for \textsc{Multi-STC}. We succeed to transfer
the kernelization result from ECS to \textsc{Multi-STC} for~$c\le 4$.  In fact, our result
for~$c=3$ and~$c=4$ can be extended to the following more general result.% for
% arbitrary~$c\ge 3$.

\begin{theorem} \label{Theorem: STC xi} \textsc{Multi-STC} admits a problem kernel
  with~$\Oh(\pe_{\lfloor \frac{c}{2} \rfloor + 1} \cdot c)$ vertices and
  $\Oh(\pe_{\lfloor \frac{c}{2} \rfloor + 1} \cdot c^2)$~edges, when limited to instances with~$c \geq 3$. The kernel can be computed
  in~$\mathcal{O}(n+m)$ time.
\end{theorem}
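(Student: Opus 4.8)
The plan is to prove the theorem with a single reduction rule applied exhaustively, mirroring the strategy behind Theorem~\ref{Theorem: ECS xi kernel}. Set $d := \lfloor \frac{c}{2}\rfloor + 1$, so that the parameter is $\pe_{d}$. I would use the following \textbf{Reduction Rule}: if there is a vertex $v$ such that $v$ and every neighbor of $v$ has degree at most $d$, then delete $v$ and keep $k$ unchanged. The reason the threshold drops from the value $c-1$ used for ECS to roughly $\frac{c}{2}$ is exactly the observation highlighted after Theorem~\ref{Theorem: ECS xi kernel}: there, to color an edge one only inspects the closed neighborhood of \emph{one} endpoint, whereas an induced-$P_3$ constraint in \textsc{Multi-STC} can be contributed by the neighborhoods of \emph{both} endpoints of an edge, so the admissible degree is essentially halved. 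The kernel is the instance obtained once the rule no longer applies, and there are three things to verify: (i) safety, (ii) the size bound, and (iii) the running time, of which (i) is by far the hardest.

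For the size bound I would argue exactly as for ECS. After the rule is inapplicable, every surviving vertex either has degree at least $d+1 = \lfloor\frac{c}{2}\rfloor+2$ or has a neighbor of that degree. Letting $S$ be the set of vertices of degree at least $d+1$, the standard counting argument gives $|S| \le 2\pe_{d}$ and $\sum_{v\in S}\deg(v) \le 2\pe_{d} + d\,|S| = \Oh(\pe_{d}\cdot c)$. Every surviving vertex lies in $S$ or is a neighbor of a vertex of $S$, so the number of vertices is $\Oh(\pe_{d}\cdot c)$; since the surviving vertices outside $S$ have degree at most $d$ and the edges incident to $S$ number $\Oh(\pe_{d}\cdot c)$, the number of edges is $\Oh(\pe_{d}\cdot c^2)$. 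For the running time I would maintain vertex degrees with a queue of deletion candidates; each deletion affects only the degrees of the deleted vertex's neighbors, so for fixed $c$ the whole process runs in $\Oh(n+m)$ time.

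The core of the proof is safety, i.e.\ that $(G,k)$ is a yes-instance if and only if $(G-v,k)$ is. The forward direction is immediate: $G-v$ is an induced subgraph of $G$, so every induced $P_3$ of $G-v$ is one of $G$, and restricting a valid coloring of $G$ to $G-v$ is valid and does not increase the number of weak edges. For the backward direction I would take a valid coloring $\phi$ of $G-v$ with at most $k$ weak edges and extend it by coloring all $\deg(v)$ edges at $v$ \emph{strongly}, leaving the weak count untouched. Coloring $vu_i$ strongly needs a color differing from the strong colors of all edges that form an induced $P_3$ with $vu_i$, namely the edges $vu_j$ with $u_j\not\sim u_i$ and the edges $u_iw$ with $w\not\sim v$; hence the number of forbidden colors is at most $(\deg(v)-1)+(\deg(u_i)-1) - 2\,|N(v)\cap N(u_i)| \le 2(d-1) - 2\,|N(v)\cap N(u_i)| \le 2\lfloor\tfrac{c}{2}\rfloor$. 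If $c$ is odd this is at most $c-1$, so a free strong color always exists and a one-by-one greedy assignment of the edges at $v$ finishes the extension.

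The only obstruction, and the step I expect to be the main difficulty, is even $c$ together with an edge $vu_i$ that lies in no triangle ($N(v)\cap N(u_i)=\varnothing$) and for which the $\lfloor\frac{c}{2}\rfloor$ colors already present at $v$ and the $\lfloor\frac{c}{2}\rfloor$ colors present at $u_i$ happen to be disjoint, so that all $c$ colors are forbidden and greedy fails. I would resolve this by a Vizing-style recoloring, exploiting that the degree bound forces every vertex of $N[v]$ to miss at least $\lceil\frac{c}{2}\rceil-1$ strong colors. Concretely, working in the conflict graph on the edges of $G$ (two edges adjacent iff they form an induced $P_3$), pick a color $\alpha$ present at $v$ but missing at $u_i$ and a color $\beta$ present at $u_i$ but missing at $v$, and swap $\alpha$ and $\beta$ along the two-colored alternating component containing the obstructing edge at $u_i$; the abundance of missing colors is what forces this component to terminate without looping back through $v$, thereby freeing a color for $vu_i$. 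The two delicate points I would have to check are that such an $(\alpha,\beta)$-swap preserves validity at vertices outside $N[v]$ (it does, because swapping two colors on one connected component of the conflict graph keeps the coloring proper) and that the edges of $v$ already colored in earlier steps remain correctly colored, so that the $\deg(v)$ edges can be processed in turn until all of them are strong.
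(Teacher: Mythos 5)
Your treatment of odd~$c$ is correct and is essentially the paper's own argument (Proposition~\ref{Prop: Odd c all good}: every periphery edge conflicts with at most~$2\lfloor\frac{c}{2}\rfloor < c$ other edges, so a greedy extension always succeeds). The genuine gap is the even case, and it is not a repairable technical difficulty: your reduction rule is simply unsafe for even~$c$. The paper's Figure~\ref{Figure: Counterexample ECS Rule for STC} is an explicit counterexample with~$c=4$ and~$k=0$. There, the vertex~$a$ incident with the three dotted edges has degree~$3 = \lfloor\frac{4}{2}\rfloor+1$, and so do all of its neighbors, so your rule would delete~$a$. But the gadgets force, in any $4$-colored STC-labeling without weak edges, all blue edges into one common strong color and all red edges into a second, different strong color; each dotted edge forms induced~$P_3$s with a blue and a red edge and must avoid both of those colors, and since the three dotted edges pairwise form induced~$P_3$s at~$a$, they would need three distinct colors out of the remaining two. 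Hence~$G$ is a no-instance while~$G-a$ is a yes-instance. Because~$G$ admits no valid labeling at all, no extension or recoloring argument---Vizing-style or otherwise---can establish the backward direction of your rule; the equivalence you set out to prove is false. (Your Kempe-chain sketch also fails for a structural reason: in an STC-labeling a color class need not be a matching, since two triangle edges may share a strong color, so swapping two colors on a component of your conflict graph does not in general preserve validity. But the counterexample shows the rule is beyond repair regardless.)

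This obstacle is exactly why the paper's proof of the even case takes a different route. It defines \emph{good} periphery components (Definition~\ref{Def: Good PC}) and proves goodness only for restricted classes: components that are isolated, contain a vertex of degree less than~$\lfloor\frac{c}{2}\rfloor+1$, contain a triangle, or contain a cycle (Rules~\ref{Rule: Isolated PC}--\ref{Rule: delete cycle PCs}, proved via the weak-edge-moving Lemma~\ref{Lemma: Move the weak edges} and the cycle-rotation Lemmas~\ref{Lemma: Turn around cycles} and~\ref{Lemma: Move strong color in cycles}). Crucially, the components that survive all these rules---trees in which every vertex has degree exactly~$\lfloor\frac{c}{2}\rfloor+1$---are \emph{not} deleted; instead their non-close vertices are bounded by roughly~$3\pe_{\lfloor\frac{c}{2}\rfloor+1}$ by a counting argument: such a tree has no more internal vertices of degree at least three than it has leaves, and every leaf is adjacent to~$\frac{c}{2}$ core vertices, of which there are few. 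Any correct proof must likewise keep these tree components in the kernel rather than remove them, which is the central point your proposal misses.
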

For~$c=5$, this gives a linear-size kernel for the parameter~$\pe_3$, for~$c=6$, a
linear-size kernel for the parameter~$\pe_4$ and so on.  Our techniques to prove
Theorem~\ref{Theorem: STC xi} are very loosely inspired by the proof of Vizing's Theorem
but in the context of \textsc{Multi-STC} several obstacles need to be overcome. As a
result, the proof differs quite substantially from the one for ECS. Moreover, in contrast do \textsc{ECS}, \textsc{Multi-STC} does not admit a polynomial kernel when parameterized by the vertex cover number~\cite{GK18} which excludes almost all popular structural parameters.

We then show how far our kernelization for~$\pe_{t}$ can be lifted to generalizations of ECS
and \textsc{Multi-STC} where each edge may choose its color only from a specified list of
colors, denoted as \textsc{Edge List ECS (EL-ECS)} and \textsc{Edge List Multi-STC (EL-Multi-STC)}. We show that for~$\pe_2$ we
obtain a linear kernel for every fixed~$c$. 
\begin{restatable}{theorem}{elkernel}
% \begin{theorem}
  \label{kernel}
For all~$c\in \mathds{N}$, \textsc{EL-ECS} and \textsc{EL-Multi-STC} admit an~$11\pe_2$-edge and~$10\pe_2$-vertex kernel for \textsc{EL-ECS} that can be computed in~$\Oh(n^2)$ time.
% \end{theorem}
\end{restatable}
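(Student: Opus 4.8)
The plan is to exploit that, for the parameter $\pe_2$, deleting a suitable edge set leaves a graph of maximum degree two, i.e.\ a disjoint union of paths and cycles, on which list edge colorings can be analyzed almost entirely locally. First I would compute in polynomial time a minimum edge set $S$ with $|S|=\pe_2$ whose deletion yields maximum degree two; this is a degree-constrained subgraph problem solvable via matching techniques within the allotted $\Oh(n^2)$ time. Let $W$ be the set of endpoints of $S$, so $|W|\le 2\pe_2$, and call the vertices of $W$ \emph{hubs}; every vertex outside $W$ has degree at most two in $G$. A double-counting argument bounds the total hub degree: each hub keeps at most two edges in $G-S$, and both endpoints of every $S$-edge lie in $W$, so $\sum_{a\in W}\deg_G(a)\le 2|S|+2|W|\le 6\pe_2$. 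In particular there are only $\Oh(\pe_2)$ edges incident to $W$, hence only $\Oh(\pe_2)$ maximal \emph{segments}, i.e.\ maximal paths of non-hub (degree-$\le 2$) vertices that either join two hubs, dangle from one hub, or form a component disjoint from $W$.

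Next I would dispose of the parts invisible to the parameter. An edge with empty list must be deleted, so I apply the reduction that removes it and decrements $k$. A segment forming a component disjoint from $W$ is an isolated path or cycle; for \textsc{EL-ECS} the minimum number of deletions making it list-colorable is computed by a linear dynamic program (state: color of, or deletion of, the current edge), after which the component is removed and $k$ decremented accordingly. For \textsc{EL-Multi-STC} the same works, since a path or cycle of length at least four contains no triangle, so the strong-color constraint on induced $P_3$'s coincides with a proper coloring; the only extra case is a triangle component, handled by trying the constantly many colorings. Finally I would observe that an internal segment edge whose list has size at least three can always be colored last, independently of its at most two incident edges, so such an edge \emph{decouples} the two sides of the segment; this localizes all genuine constraints to maximal runs of list-size-$\le 2$ edges.

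The core step is to shorten each hub-attached segment to constant length. For a segment with boundary edges $f_a,f_b$ (incident to hubs $a,b$), I compute by dynamic programming its \emph{interface cost}: for each status of $f_a$ and of $f_b$ (deleted, or colored by a specific color of its list) the minimum number of internal deletions needed to complete a valid coloring. I then replace the whole segment by a constant-size gadget realizing the same cost function, folding the unavoidable additive part (e.g.\ the $\lfloor \ell/2\rfloor$ deletions forced by a length-$\ell$ run of equal singleton lists) into $k$. The main obstacle is exactly to show that this cost function always admits a realization of size independent of $c$: runs of singleton lists can force long-range, pattern-dependent deletions, and the interface must be transmitted faithfully through the two boundary edges alone. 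The key insight making this possible is that, after removing the size-$\ge 3$ decoupling edges, every run behaves like a proper $2$-coloring problem on a path whose only unavoidable conflicts are between adjacent equal singleton lists; these are resolved greedily, the residual interface depends only on the colors still available at the two ends, and this finite information is encoded by a gadget of bounded size.

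It remains to count and to bound the running time. After the reductions the kernel consists of the $\le 2\pe_2$ hubs together with the constantly many vertices and edges of each of the $\Oh(\pe_2)$ reduced segments; tuning the gadget sizes yields at most $10\pe_2$ vertices and $11\pe_2$ edges. Computing $S$, running the segment dynamic programs, and applying the reduction rules all fit within $\Oh(n^2)$ time. The argument for \textsc{EL-Multi-STC} is identical up to the localized treatment of triangles noted above, which can occur only at the two ends of a segment and hence does not affect the asymptotics.
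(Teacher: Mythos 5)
Your overall skeleton---a hub set of $\Oh(\pe_2)$ vertices, isolated paths and cycles solved exactly by dynamic programming and removed, the remaining hub-attached segments shrunk to constant length, then a counting argument---is the same as the paper's (its Rules~\ref{Rule: delete full or empty edges}--\ref{Rule: cycles} and Algorithm~\ref{Algo} perform the first two steps, Rules~\ref{Rule: split disjoint neighbors}--\ref{Rule: size one-two-two} the shrinking). The genuine gap is in the shrinking step, which is where essentially all of the paper's work lies. You reduce it to the claim that, once interior edges with lists of size at least three are removed, every run behaves like a proper $2$-coloring problem on a path whose only unavoidable conflicts are between adjacent equal singleton lists. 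That claim is false. Size-two lists can chain through many distinct colors: for consecutive edges with lists $\{1,2\},\{2,3\},\{3,4\}$ (or $\{1,2\},\{2,3\},\{1,3\}$), no edge has a color outside the union of its neighbours' lists and no two adjacent lists are disjoint, so none of your decoupling or splitting observations applies; yet the end-to-end behaviour is pattern dependent (in the first example the middle edge must be deleted exactly when the two outer edges are colored $2$ and $3$). These runs are not $2$-coloring instances, their conflicts are not confined to adjacent equal singleton lists, and a greedy resolution does not expose the interface. This is exactly the configuration for which the paper needs its most intricate rule (Rule~\ref{Rule: P4 with different size two lists}, with three subcases, list rewiring, and an adjustment of~$k$), together with Rules~\ref{sizeOneHugs}, \ref{Rule: same lists max length two}, and~\ref{Rule: size one-two-two} for the remaining patterns. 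Since you neither construct the constant-size gadgets nor prove that the up to $(c+1)\times(c+1)$ interface cost table of a segment always collapses to boundedly much realizable information, the heart of the theorem is missing; saying that ``tuning the gadget sizes'' yields $10\pe_2$ vertices and $11\pe_2$ edges is an assertion, not a count.

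Two secondary problems. First, your time bound presumes that a minimum edge-deletion set to maximum degree two can be computed in $\Oh(n^2)$ time; Gabow's technique gives polynomial time, but not obviously $\Oh(n^2)$. The paper avoids this entirely: its rules operate directly on the maximal degree-at-most-two paths and isolated cycles, which are found in linear time, and the minimum deletion set enters only the size analysis, where its existence suffices. Second, for \textsc{EL-Multi-STC} you must show that segment replacement never interacts with the triadic-closure constraint. The paper proves (Rule~\ref{Rule: K3 for stc} and Lemma~\ref{Lemma: K3 safe}) that, after isolated triangles are removed, a triangle can meet a bounded-degree path only when that path is a single non-open edge, and then verifies that none of its rules modifies or creates such a configuration. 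Your remark that triangles can occur only at the two ends of a segment is both imprecise (a triangle forces the whole segment to consist of one interior edge whose endpoints attach to a common hub) and, more importantly, not accompanied by the invariant that your gadget replacement preserves this situation.
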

For~$c=3$, this extends Theorem~\ref{Theorem: ECS xi kernel} to the list colored version of~\textsc{ECS}. For~$c>3$ parameterization by~$\pe_2$ may seem a bit uninteresting compared to the results for \textsc{ECS} and
\textsc{Multi-STC}. However,
Theorem~\ref{kernel} is unlikely to be improved by considering~$\pe_t$ for~$t>2$.
\begin{restatable}{proposition}{elhard}\label{prop:el-hard}
  % \begin{corollary}
  \textsc{EL-ECS} and \textsc{EL-Multi-STC} are NP-hard for all~$c \geq 3$ on~triangle-free cubic graphs even if~$\pe_3=k=0$.
  % \end{corollary} 
\end{restatable}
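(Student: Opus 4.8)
The plan is to reduce from edge $3$-colorability restricted to \emph{triangle-free cubic} graphs, which I take as the NP-hard starting point, and to exploit the color lists so that the reduction works \emph{uniformly} for every fixed~$c\ge 3$ and \emph{simultaneously} for both \textsc{EL-ECS} and \textsc{EL-Multi-STC}. Concretely, given a triangle-free cubic graph~$G$, I build an instance on the \emph{same} graph~$G$ by assigning to every edge~$e$ the list~$L(e):=\{1,2,3\}$, setting~$k:=0$, and letting~$c\ge 3$ be the desired (arbitrary) number of colors. Since~$G$ is cubic it already has maximum degree~$3$, so no edge deletions are required to reach maximum degree~$3$; hence~$\pe_3=0$, and together with~$k=0$ this places the instance exactly in the claimed restricted regime. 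The colors~$4,\dots,c$ occur in no list and are therefore dead weight: every feasible coloring uses only colors from~$\{1,2,3\}$.

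Correctness for \textsc{EL-ECS} is then immediate. With~$k=0$ no edge may be deleted, so a solution is precisely a proper edge coloring of~$G$ respecting the lists; as all lists equal~$\{1,2,3\}$, such a coloring exists if and only if~$G$ admits a proper $3$-edge-coloring. Thus the \textsc{EL-ECS} instance is a yes-instance if and only if~$G$ is $3$-edge-colorable, which gives NP-hardness of \textsc{EL-ECS} for every~$c\ge 3$ on triangle-free cubic graphs with~$\pe_3=k=0$, complementing the positive result of Theorem~\ref{kernel}.

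To transfer the result to \textsc{EL-Multi-STC}, I would use that~$G$ is triangle-free. With~$k=0$ every edge must receive a strong color, and the only constraint is that two incident edges inducing a~$P_3$ must differ. In a triangle-free graph, however, \emph{every} pair of incident edges~$\{u,v\},\{v,w\}$ induces a~$P_3$, since~$u$ and~$w$ cannot be adjacent. Hence on triangle-free graphs with~$k=0$ the \textsc{EL-Multi-STC} constraints coincide exactly with those of a proper list edge coloring, and the very same instance is a yes-instance for \textsc{EL-Multi-STC} if and only if~$G$ is $3$-edge-colorable. This settles both problems at once and, incidentally, explains why triangle-freeness is essential here: it is precisely the condition that collapses the (in general weaker) \textsc{Multi-STC} coloring condition onto a proper edge coloring.

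The main obstacle is the starting point, namely that edge $3$-colorability is NP-hard \emph{already} on triangle-free cubic graphs; Holyer's classical hardness proof for cubic graphs does not by itself guarantee triangle-freeness. I would either invoke the known strengthening that edge $3$-colorability remains NP-complete on triangle-free graphs of maximum degree three, or make the argument self-contained via a triangle-removal gadget. For the latter, note that in any proper $3$-edge-coloring the three edges leaving a triangle receive three pairwise distinct colors---exactly the constraint imposed by a single degree-three vertex---so each triangle can be replaced by a small triangle-free cubic gadget that enforces the same ``all three legs distinct'' behaviour while preserving $3$-edge-colorability and creating no new short cycles. Verifying that such a gadget realizes precisely this constraint, stays cubic, and does not recreate triangles where two gadgets meet (or where a gadget meets a degree-three vertex) is the only genuinely technical part; everything else is the routine repackaging described above.
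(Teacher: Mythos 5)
Your proof is correct and takes essentially the same approach as the paper: the identical identity reduction (same graph, all lists set to $\{1,2,3\}$, $k=0$, arbitrary~$c \ge 3$), with the same observation that triangle-freeness makes the STC constraints coincide with proper edge coloring, so both problems are handled at once. The only difference is that the paper simply cites Cai and Ellis~\cite{CE91} for the NP-hardness of $3$-edge-coloring on triangle-free cubic graphs, so the gadget-based backup construction you sketch as the ``technical part'' is unnecessary.
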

\iflong
A summary of our results is shown in Table~\ref{tab:pc-results}. 
\begin{table}
\caption{A summary of our results for the two problems. Herein,~$\pe_{c-1}$ denotes the edge-deletion distance to graphs with maximum degree at most~$c-1$, and $\lambda_{c}$ denotes the vertex-deletion distance to graphs where every connected component has order at most~$c$.}
  \begin{tabular}{l c c}
\hline

Parameter & \textsc{ECS} & \textsc{Multi-STC}\\
\hline
$(\pe_{c-1},c)$ & $\mathcal{O}(\pe_{c-1}c)$-vertex kernel (Thm. \ref{Theorem: ECS xi kernel})
& $\mathcal{O}(\pe_{c-1})$-edge kernel (Thm. \ref{Theorem: STC xi kernel}),\\
&  &if~$c \le 4$ \\
% \hline
% $(\pe_{\lfloor \frac{c}{2} \rfloor +1},c)$ &
%  $\mathcal{O}(\pe_{\lfloor \frac{c}{2} \rfloor +1}\cdot c)$-vertex kernel (Thm. \ref{Theorem: ECS xi kernel}) & $\mathcal{O}(\pe_{\lfloor \frac{c}{2} \rfloor +1}\cdot c)$-vertex kernel (Thm. \ref{Theorem: STC xi kernel}), \\
% & if $c \geq 3$ &if $c \geq 3$\\
\hline
$(\lambda_c,c)$& $\mathcal{O}(c^3 \cdot \lambda_c)$-vertex kernel (Thm.~\ref{Theorem: Vertex cover ECS kernel})&
~~~No poly Kernel, even for~$c=1$ \cite{GK18}~~~\\
\hline
\end{tabular}
\label{tab:pc-results}
\end{table}
\subparagraph{Organization of the Paper.} In Section~\ref{sec:prelim}, we set
the notation for this work and present the formal definition of all problems under
consideration. In Section~\ref{sec:ed}, we show the kernels for ECS parameterized by the
edge-deletion distance to low-degree graphs and the vertex-deletion distance to graphs
with small connected components. In Section~\ref{sec:stc}, we show the kernels for
Multi-STC parameterized by the edge-deletion distance to low-degree graphs. Finally, in
Section~\ref{sec:lists}, we consider the edge deletion distance to degree-two graphs for
the problem variants with edge lists.  \else Due to space constraints, the proofs of
Theorem~\ref{kernel} and
Proposition~\ref{prop:el-hard} and further propositions and lemmas needed to show
Theorems~\ref{Theorem: ECS xi kernel}--\ref{Theorem: STC xi} are deferred to the
appendix. \fi 

\iflong \section{Preliminaries}\label{sec:prelim}\fi
\subparagraph{Notation.} We consider simple undirected graphs~$G=(V,E)$ \iflong where~$n:=|V|$ denotes the number of vertices
and~$m:=|E|$ denotes the number of edges in~$G$\fi. For a vertex~$v\in V$, we denote
by~$N_G(v):=\{u\in V\mid \{u,v\}\in E\}$ the \emph{open neighborhood of~$v$} and
  by~$N_G[v]:=N_G(v)\cup \{v\}$ the \emph{closed neighborhood of~$v$}. For a given set~$V' \subseteq V$, we define~$N_G(V'):= \bigcup_{v \in V'} N_G(v)$ as the \emph{neighborhood of~$V'$}. Moreover, let~$\deg_G(v):=|N(v)|$ be the \emph{degree} of a vertex~$v$ in~$G$ and~$\Delta_G:=\max_{v \in V} \deg_G(v)$ denote the \emph{maximum degree} of~$G$.
  For any two vertex sets~$V_1,V_2 \subseteq V$, we let~$E_G(V_1,V_2) := \lbrace \lbrace v_1, v_2 \rbrace \in E \mid v_1 \in V_1, v_2 \in V_2 \rbrace$ denote the set of edges between~$V_1$ and~$V_2$. For any vertex set~$V' \subseteq V$, we
let~$E_G(V') := E_G(V',V')$ denote the set of edges between the vertices of~$V'$. The \emph{subgraph induced by a vertex
    set}~$S$ is denoted by~$G[S]:=(S,E_G(S))$. For a given vertex set~$V' \subseteq V$, we let~$G-V':=G[V \setminus V']$ denote the graph that we obtain after deleting the vertices of~$V'$ from~$G$.  We may omit
  the subscript~$G$ if the graph is clear from the context.

A \emph{finite sequence}~$A=(a_0, a_1, \dots, a_{r-1})$ of length~$r \in \mathbb{N}_0$ is an~$r$-tuple of specific elements~$a_i$ (for example vertices or numbers). For given~$j \in \{0, \dots, r-1\}$, we refer to the~$j$th element of a finite sequence~$A$ as~$A(j)$.
    A \emph{path}~$P=(v_0,\dots,v_{r-1})$ is a finite sequence of vertices~$v_0,\dots,v_{r-1} \in V$, where~$\{v_i,v_{i+1}\} \in E$ for all~$i \in \{0,\dots,r-2\}$. A path~$P$ is called \emph{vertex-simple}, if no vertex appears twice on~$P$. A path is called \emph{edge-simple}, if there are no distinct~$i,j \in \{0,\dots,r-2\}$ such that~$\{P(i),P(i+1)\}=\{P(j),P(j+1)\}$. For a given path~$P=(P(0), \dots, P(r-1))$ we define the sets $V(P):=\{P(j) \mid j \in \{0,\dots, r-1\}\}$ and~$E(P) := \{ \{P(j),P(j+1)\} \mid j \in \{0,\dots,r-2\}\}$ as the set of vertices or edges on~$P$.

For the standard definitions of parameterized complexity refer
to~\cite{FKLMPPS15}.

%  For the relevant definitions of parameterized complexit such as parameterized reduction and problem kernelization refer to the standard monographs~\cite{Cyg+15,DF13,FG06,Nie06} refer to~\cite{Cyg+15}.

  \subparagraph{Problem Definitions.} We now formally define the two main problems considered in
  this work, \textsc{ECS} and \textsc{Multi-STC}, as well as their extensions to input graphs
  with edge lists.
  
  \begin{definition} A $c$-\emph{colored labeling}~$L=(S^1_L,\ldots,S^c_L,W_L)$ of an undirected
    graph~$G=(V,E)$ is a partition of the edge set~$E$ into~$c+1$ color classes. The edges
    in~$S^i_L$,~$i\in \{1,\dots,c\}$, are \emph{strong} and the edges in~$W_L$ are \emph{weak}.
\begin{enumerate}
\item[1.] A~$c$-\emph{colored labeling}~$L$ is a \emph{proper labeling} if there exists no pair of edges~$e_1,e_2 \in S^i_L$ for some strong color~$i$, such that~$e_1 \cap e_2 \neq \emptyset$.
\item[2.] A~$c$-\emph{colored labeling}~$L$ is an \emph{STC-labeling} if there exists no pair of edges $\{ u, v \} \in S^i_L$ and
    $\{ v, w\} \in S^i_L$ such that $\{ u, w \}\not \in E$.
    \end{enumerate}
\end{definition}
We consider the following two problems.
\begin{quote}
  \textsc{Edge-Colorable Subgraph (ECS)}\\
  \textbf{Input}: An undirected graph~$G=(V, E)$ and
  integers~$c \in \mathds{N}$ and~$k \in \mathds{N}$.\\
  \textbf{Question}: Is there a $c$-colored proper labeling~$L$ with~$|W_L| \leq k$?
\end{quote}

\begin{quote}
  \textsc{Multi Strong Triadic Closure (Multi-STC)}\\
  \textbf{Input}: An undirected graph~$G=(V, E)$ and
  integers~$c \in \mathds{N}$ and~$k \in \mathds{N}$.\\
  \textbf{Question}: Is there a $c$-colored STC-labeling~$L$ with~$|W_L| \leq k$?
\end{quote}
If~$c$ is clear from the context, we may call a~$c$-colored labeling just \emph{labeling}.
  Two labelings~$L=(S^1_{L},\dots,S^c_{L},W_L)$, and~$L'=(S^1_{L'},\dots,S^c_{L'},W_{L'})$ for the same graph~$G=(V,E)$ are called \emph{partially equal} on a set~$E' \subseteq E$ if and only if for all~$e \in E'$ and~$i \in \{1,\dots,c\}$ it holds that~$e \in S^i_L \Leftrightarrow e \in S^i_{L'}$. If two labelings~$L$ and~$L'$ are partially equal on~$E'$ we write~$L|_{E'}=L'|_{E'}$. For given path~$P=(P(0), \dots, P(r-1))$ and labeling~$L=(S^1_L, \dots, S^c_L, W_L)$, we define the \emph{color sequence}~$Q_L^P$ of~$P$ under~$L$ as a finite sequence~$Q_L^P=(q_0,q_1,\dots,q_{r-2})$ of elements in~$\{0,\dots,c\}$, such that~$\{P(i), P(i+1)\} \in S_L^{q_i}$ if~$q_i \geq 1$ and~$\{P(i), P(i+1)\} \in W_L$ if~$q_i=0$. Throughout this work we call a $c$-colored STC-labeling $L$ (or proper labeling, respectively) \emph{optimal \iflong(for a graph $G$) \fi}if the number of weak edges $|W_L|$ is
minimal.

\subparagraph{Edge-Deletion Distance to Low-Degree Graphs and Component Order Connectivity.} We consider parameters related to the edge deletion-distance~$\pe_t$ to low-degree graphs and the vertex-deletion distance~$\lambda_t$ to graphs with small connected components\iflong ; they are formally defined as follows\fi.

First, we define the parameter~$\pe_t$. For a given graph~$G=(V,E)$ and a constant~$t \in \mathds{N}$, we call~$D_t \subseteq E$ an \emph{edge-deletion set of~$G$ and~$t$} if  the graph~$(V,E\setminus D_t)$ has maximum degree~$t$. We define the parameter~$\pe_t$ as the size of the minimum edge-deletion set of~$G$ and~$t$. Note that an edge-deletion set of~$G$ and~$t$ of size~$\pe_t$ can be computed in polynomial time~\cite{Gab83}. More importantly for our applications, we can compute a~$2$-approximation~$D'_t$ for an edge-deletion set of size~$\pe_t$ in linear time as follows: Add for each vertex~$v$ of degree at least~$t+1$ an arbitrary set of~$\deg(v)-t$ incident edges to~$D'_t$. Then~$|D'_t|\le \sum_{v\in V}\max (\deg(v)-t,0)$. This implies that~$D'_t$ is a 2-approximation since~$\sum_{v\in V}\max (\deg(v)-t,0) \leq 2\pe_t $ as every edge deletion decreases the degree of at most two vertices. %To describe the part of~$G$ that is not affected by a deletion set~$D_t$ we use the following definition. 
A given edge-deletion set~$D_t$ induces the following important partition of the vertex set~$V$ of a graph.

\begin{definition}
Let~$t \in \mathds{N}$, let $G=(V,E)$ be a graph, and let~$D_t\subseteq E$ be an edge-deletion set of~$G$ and~$t$.
We call~$\mathscr{C}=\mathscr{C}(D_t):=\{v \in V \mid \exists e \in D_t: v \in e\}$ the set of \emph{core vertices} and~$\mathscr{P}=\mathscr{P}(D_t):=V \setminus \core$ the set of \emph{periphery vertices} of~$G$. %We define the \emph{periphery components}~as the connected components in~$G[\mathscr{P}]$.
\end{definition}

Note that for arbitrary~$t \in \mathds{N}$ and~$G$ we have~$|\mathscr{C}|\leq 2|D_t|$ and for every~$v \in \mathscr{P}$ it holds that~$\deg_G(v)\leq t$. Moreover, every vertex in~$\mathscr{C}$ is incident with at most~$t$ edges in~$E \setminus D_t$. In context of \textsc{ECS} and \textsc{Multi-STC}, for a given instance~$(G,c,k)$ we consider some fixed edge deletion set~$D_{t}$ of the input graph~$G$ and some integer~$t$ which depends on the value of~$c$.

Second, we define the parameter~$\lambda_t$. For a given graph~$G=(V,E)$ and a constant~$t \in \mathds{N}$, we call~$D \subseteq V$ an \emph{order-$t$ \cc} if every connected component in~$G-D$ contains at most~$t$ vertices. Then, we define the \emph{component order connectivity}~$\lambda_t$ to be the size of a minimum~oder-$t$ \cc. In context of~\textsc{ECS} we study~$\pc$, for the amount of colors~$c$. A $(c+1)$-approximation of the minimal~order-$c$-\cc{} can be computed in polynomial time~\cite{KL16}.

Note that the parameters are incomparable in the following sense: In a path~$P_n$ the parameter~$\pc$ can be arbitrary large when~$n$ increases while~$\pe_{c-1}=0$ for all~$c\geq 3$. In a star~$S_n$ the parameter~$\pe_{c-1}$ can be arbitrary large when~$n$ increases while~$\pc = 1$. 

%\todo[inline]{NIE: Factor-$c$-Approximation for~$\pc$. }

% \begin{definition}
% Let $G=(V,E)$ be a graph, $\Lambda: V \rightarrow 2^{\lbrace 1,2, \dots, c\rbrace}$ a mapping for some $c \in \mathds{N}$, and $L=(S^1_L, \dots, S^c_L,W_L)$ a $c$-colored STC-labeling. We say that an edge $\{v, w\} \in E$ \emph{satisfies the $\Lambda$-list property under $L$} if $\{ v, w \} \in W_L$ or $ \{v, w \} \in S^{\alpha}_L$ for some $\alpha \in \Lambda(v) \cap \Lambda(w)$. We call a $c$-colored labeling \emph{$\Lambda$\nobreakdash-satisfying} if every edge $e \in E$ satisfies the $\Lambda$-list property under $L$.
% \end{definition}
% \begin{quote}
%   \textsc{Vertex-List Multi Strong Triadic Closure (VL-Multi-STC)}\\
%   \textbf{Input}: An undirected graph~$G=(V, E)$,
%   integers~$c \in \mathds{N}$ and~$k \in \mathds{N}$, and vertex lists $\Lambda: V \rightarrow 2^{\lbrace 1,2, \dots, c\rbrace}$.\\
%   \textbf{Question}: Is there a $\Lambda$-satisfying STC-labeling~$L$ with~$|W_L| \leq k$?
% \end{quote}
% \textsc{Multi-STC} is the special case where~$\Lambda(v)=\{1,\dots ,c\}$ for all~$v\in V$.
%\section{Structural Parameters for ECS and STC}

\section{Problem Kernelizations for Edge-Colorable Subgraph}\label{sec:ed}

In this section, we provide problem kernels for~\textsc{ECS} parameterized by the edge deletion distance~$\pe_{c-1}$ to graphs with maximum degree~$c-1$, and the size~$\pc$ of a minimum order-$c$ component cover. We first show that~\textsc{ECS} admits a kernel with~$\mathcal{O}(\pe_{c-1}\cdot c)$ vertices and~$\mathcal{O}(\pe_{c-1}\cdot c^2)$ edges that can be computed in~$\mathcal{O}(n+m)$~time. Afterwards, we consider~$\pc$ and show that~\textsc{ECS} admits a problem kernel with~$\Oh(c^3\pc)$ vertices, which is a linear vertex kernel for every fixed value of~$c$. Note that if~$c=1$ we can solve \textsc{ECS} by computing a maximal matching in polynomial time. Hence, we assume~$c\geq 2$ for the rest of this section. In this case the problem is NP-hard~\cite{FOW02}.

\subsection{Edge Deletion-Distance to Low-Degree~Graphs}
\label{sec:ecsxi}
% In this subsection, we show that \textsc{ECS} admits a problem kernel with at most~$2 \pe_{c-1} \cdot (c-1)$ vertices and~$\mathcal{O}(\pe_{c-1}\cdot c^2)$ edges.
The kernelization presented inhere is based on Vizing's Theorem~\cite{V64}. Note that Vizing's Theorem implies, that an ECS instance~$(G,c,k)$ is always a yes-instance if~$\pe_{c-1}=0$. Our kernelization relies on the following lemma. This lemma is a reformulation of a known fact about edge colorings~\cite[Theorem 2.3]{S12} which, in turn, is based on the so-called \emph{Vizing Fan Equation}~\cite[Theorem 2.1]{S12}.

\begin{lemma} \label{Lemma: Fan Equation}
Let~$G=(V,E)$ be a graph and let~$e:=\{u,v\} \in E$. Moreover, let~$c:=\Delta_{G}$ and let~$L$ be a proper~$c$-colored labeling for the graph~$(V,E \setminus \{e\})$ such that~$W_L =\emptyset$. If for all~$Z \subseteq N_G(u)$ with~$|Z| \geq 2$ and~$v \in Z$ it holds that~$\sum_{z \in Z} (\deg_{G}(z) + 1 - c) < 2$, then there exists a proper~$c$-colored labeling~$L'$ for~$G$ such that~$W_{L'}=\emptyset$.
%\begin{enumerate}
%
%\item[a)] there exists a proper~$c$-labeling~$L'$ for~$G$ such that~$W_{L'}=\emptyset$, or
%
%\item[b)] there exists a vertex set~$Z \subseteq N_{G}(u)$ with~$|Z| \geq 2$, $v \in Z$, and
%\begin{align*}
%\sum_{z \in Z} (\deg_{G}(z) + 1 - c) \geq 2.
%\end{align*}
%\end{enumerate}
\end{lemma}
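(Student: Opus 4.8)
The plan is to derive the statement from the \emph{Vizing Fan Equation} by rewriting the arithmetic hypothesis as a statement about \emph{missing colors} and then running the classical fan-recoloring argument with exactly $c=\Delta_G$ colors. Throughout, fix the given proper $c$-colored labeling $L$ of $(V,E\setminus\{e\})$ with $W_L=\emptyset$, and for a vertex $z$ let $d(z)$ denote the number of colors in $\{1,\dots,c\}$ that are missing at $z$, i.e.\ not assigned by $L$ to any edge incident with $z$.

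First I would record the elementary count of missing colors. Since $L$ is a proper full coloring of $(V,E\setminus\{e\})$, every vertex sees pairwise distinct colors, so $d(z)=c-\deg_{G-e}(z)$. As $e$ is the only deleted edge, $\deg_{G-e}(z)=\deg_G(z)$ for $z\neq v$ and $\deg_{G-e}(v)=\deg_G(v)-1$; hence $d(z)=c-\deg_G(z)$ for $z\in N_G(u)\setminus\{v\}$ and $d(v)=c-\deg_G(v)+1$. In particular $d(v)\geq 1$, and since $\deg_{G-e}(u)=\deg_G(u)-1\leq \Delta_G-1=c-1$ there is also at least one color missing at $u$.

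The key step is the translation of the hypothesis. For any $Z\subseteq N_G(u)$ with $v\in Z$ a direct computation gives $\sum_{z\in Z}(\deg_G(z)+1-c)=|Z|+1-\sum_{z\in Z}d(z)$, where the extra $+1$ comes precisely from the uncolored edge $e$ contributing one additional missing color at $v$. Consequently the hypothesis $\sum_{z\in Z}(\deg_G(z)+1-c)<2$ is equivalent to $\sum_{z\in Z}d(z)\geq |Z|$ for every $Z\subseteq N_G(u)$ with $v\in Z$ and $|Z|\geq 2$. This is exactly a Hall-type surplus condition: every such family of neighbours of $u$ carries at least as many missing colors (counted with multiplicity) as it has members, which is the precondition under which the Vizing Fan Equation guarantees that a maximal fan at $u$ rooted at $v$ can be closed without introducing a $(c+1)$-st color.

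Having matched the hypotheses, I would invoke the fan argument to produce $L'$. Build a maximal fan $(v_0=v,v_1,\dots,v_k)$ at $u$, where each edge $\{u,v_i\}$ carries a color missing at $v_{i-1}$. If some color is missing simultaneously at $u$ and at the tip $v_k$, a fan rotation shifts the colors down the fan and colors $e$, keeping the labeling proper and weak-free. Otherwise one inverts the maximal Kempe chain between a color missing at $u$ and a color missing at $v_k$, which either makes a common missing color appear or reduces the situation to a rotation of a suitable subfan. The main obstacle, and the only place where $c=\Delta_G$ (rather than $\Delta_G+1$) is delicate, is this Kempe-chain case: one must rule out the degenerate configuration in which the fan gets stuck at a full-degree vertex with no usable missing color. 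The surplus inequality $\sum_{z\in Z}d(z)\geq |Z|$ is exactly the invariant that forbids this, guaranteeing that the fan vertices jointly supply enough missing colors for the rotation to terminate; this is the content of the cited reformulation, and once the translation above is in place the conclusion follows immediately.
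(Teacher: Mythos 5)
Your proposal is correct and takes essentially the same route as the paper: the paper gives no self-contained proof of this lemma, justifying it instead as a (contrapositive) reformulation of a known fact~\cite[Theorem 2.3]{S12} based on the Vizing Fan Equation~\cite[Theorem 2.1]{S12}, which is exactly the result your argument ultimately defers to for the combinatorial core (the fan/Kempe-chain case analysis). Your translation of the degree-sum hypothesis into the Hall-type missing-colors condition $\sum_{z\in Z}d(z)\ge|Z|$ is arithmetically correct --- including the extra $+1$ at $v$ coming from the uncolored edge --- so your write-up amounts to a correct, mildly expanded justification of the same citation the paper relies on.
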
 %We start with the following definition.

We now use Lemma~\ref{Lemma: Fan Equation} as a plug-in for \textsc{ECS} to prove the next lemma which is the main tool that we need for our kernelization. In the proof, we exploit the fact that, given any proper labeling~$L$ for a graph~$G=(V,E)$, the labeling~$(S^1_L, \dots, S^c_L, \emptyset)$ is a proper labeling for the graph~$(V,E \setminus W_L)$. %It is a new view of Vizing's Theorem as a tool for edge-insertions into properly labeled graphs. To prove this lemma we adapt the proof presented in~\cite[Section 15.2]{B72}. 

\begin{lemma}\label{Lemma: Add edges of low degree}
Let~$L:=(S^1_L, S^2_L, \dots, S^c_L, W_L)$ be a proper labeling with~$|W_L|=k$ for a graph~$G:=(V,E)$. Moreover, let~$e:=\{u,v\} \subseteq V$ such that~$e \not \in E$ and let~$G':=(V,E \cup \{e\})$ be obtained from~$G$ by adding~$e$. If for one endpoint~$u \in e$ it holds that every vertex~$w \in N_{G'}[u]$ has degree at most~$c-1$ in~$G'$, then there exists a proper labeling~$L'$ for~$G'$ with~$|W_{L'}| = k$.
\end{lemma}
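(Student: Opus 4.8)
The plan is to reduce the claim to a pure edge-coloring statement and then re-attach the weak edges untouched. Set $H := (V, E \setminus W_L)$ and $H' := (V, (E \setminus W_L) \cup \{e\})$. By the fact highlighted just before the statement, $(S^1_L, \dots, S^c_L, \emptyset)$ is a proper labeling of $H$ with no weak edges, i.e.\ a proper $c$-edge-coloring of $H = H' - e$. If I can produce a proper $c$-edge-coloring of the augmented graph $H'$, then the labeling $L'$ that colours the edges $(E \setminus W_L) \cup \{e\}$ according to this coloring and declares \emph{exactly} the edges of $W_L$ weak is a proper labeling of $G'$ with $|W_{L'}| = |W_L| = k$, as required. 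So everything reduces to showing that $H'$ admits a proper $c$-edge-coloring.

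For the degree bookkeeping, I use that $H'$ is a subgraph of $G'$, hence $\deg_{H'}(w) \le \deg_{G'}(w)$ for every vertex $w$ and $N_{H'}(u) \subseteq N_{G'}(u)$. Since every $w \in N_{G'}[u]$ has $\deg_{G'}(w) \le c-1$ by hypothesis, this gives $\deg_{H'}(u) \le c-1$, $\deg_{H'}(v) \le c-1$, and $\deg_{H'}(z) \le c-1$ for every neighbor $z$ of $u$ in $H'$. In particular, the only vertices that can attain degree $c$ in $H'$ lie outside the closed neighborhood of $u$.

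I then split on $\Delta_{H'}$. If $\Delta_{H'} \le c-1$, Vizing's Theorem~\cite{V64} yields a proper edge-coloring of $H'$ with at most $\Delta_{H'}+1 \le c$ colours and we are done. If $\Delta_{H'} = c$, I apply Lemma~\ref{Lemma: Fan Equation} to the graph $H'$, taking the number of colours to be $\Delta_{H'} = c$ and taking the starting proper $c$-coloring of $H' - e = H$ to be $(S^1_L, \dots, S^c_L)$. The hypothesis to verify is that for every $Z \subseteq N_{H'}(u)$ with $|Z| \ge 2$ and $v \in Z$ one has $\sum_{z \in Z} (\deg_{H'}(z) + 1 - c) < 2$; this holds because each summand satisfies $\deg_{H'}(z) + 1 - c \le 0$ by the degree bound above, so the whole sum is at most $0$. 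The lemma then returns a proper $c$-edge-coloring of $H'$, completing the reduction.

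The main obstacle is the mismatch between ``the number of available colours is $c$'' and the requirement $c = \Delta_G$ baked into Lemma~\ref{Lemma: Fan Equation}; this is precisely what forces the case distinction. When $\Delta_{H'} \le c-1$ the fan machinery cannot be invoked with $c$ colours directly, since the given coloring of $H$ may use more than $\Delta_{H'}$ colours, but there Vizing's bound alone suffices. The genuinely delicate regime is $\Delta_{H'} = c$, where Vizing would cost one colour too many and only the \emph{localized} degree drop around $u$—exactly what the Fan Equation condition measures—rescues the $c$-coloring. The one point that needs care is checking that this localized condition is insensitive to which edges were labelled weak, which holds because it refers only to degrees in the subgraph $H' \subseteq G'$ in the neighborhood of the single endpoint $u$.
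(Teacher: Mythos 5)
Your proposal is correct and follows essentially the same route as the paper's own proof: the same auxiliary graphs (the paper's $G_{\text{aux}}$ and $H_{\text{aux}}$ are your $H$ and $H'$), the same case distinction on $\Delta_{H'}$ with Vizing's Theorem handling $\Delta_{H'}\le c-1$ and Lemma~\ref{Lemma: Fan Equation} handling $\Delta_{H'}=c$, and the same re-attachment of $W_L$ as weak edges at the end. The verification of the fan condition via $\deg_{H'}(z)\le c-1$ for all $z\in N_{H'}(u)$ also matches the paper's argument.
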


\begin{proof}
Consider the auxiliary graph~$G_{\text{aux}}:=(V,E \setminus W_L)$. Since~$L$ is a proper labeling for~$G$, we conclude that~$L_{\text{aux}}:=(S^1_L, \dots, S^c_L, \emptyset)$ is a proper labeling for~$G_{\text{aux}}$. Let~$H_{\text{aux}}:=(V, E_H)$ where~$E_H:=(E \setminus W_L) \cup \{e\}$. In order to prove the lemma, we show that there exists a proper~labeling~$L'_{\text{aux}}$ for~$H_{\text{aux}}$ such that~$W_{L'_{\text{aux}}}=\emptyset$. 

To this end, we first consider the maximum degree of~$H_{\text{aux}}$. Observe that~$\deg_{H_{\text{aux}}}(w) \leq \deg_{G'}(w)$ for all~$w \in V$. Hence, the property that~$\deg_{G'}(w) \leq c-1$ for all~$w \in N_{G'}[u]$ implies~$\Delta_{H_\text{aux}}= \max(\Delta_{G_\text{aux}},c-1)$. Since~$L_\text{aux}$ is a proper~$c$-colored labeling for~$G_\text{aux}$ we know that~$\Delta_{G_{\text{aux}}} \leq c$ and therefore we have~$\Delta_{H_{\text{aux}}} \leq c$. So, to find a proper~$c$-colored labeling without weak edges for~$H_{\text{aux}}$ it suffices to consider the following cases.

%First, Assume towards a contradiction that~$c \leq \Delta_{G'_\text{aux}}-1$. It obviously holds that~$\deg_{G'_\text{aux}}(v) \leq \deg_{G'}(v)$ for all~$v \in V$. Together with the fact that~$\deg_{G'}(w) \leq c-1$ for all~$w \in N_{G'}[u]$, we conclude that~$\Delta_{G'_\text{aux}} \leq \max(\Delta_{G_\text{aux}}, c-1)$. Then, the assumption~$c \leq \Delta_{G'_\text{aux}}-1$ implies~$c \leq \Delta_{G_\text{aux}} -1$ which contradicts the fact that there exists a proper~$c$-labeling without weak edges for~$G_{\text{aux}}$. Hence, $c \geq \Delta_{G'_\text{aux}}$. Next, to find a proper labeling for~$G'_\text{aux}$ we consider the following case distinction.

\textbf{Case 1:} $\Delta_{H_\text{aux}}\leq c-1$\textbf{.} Then, there exists a proper labeling~$L'_{\text{aux}}$ for~$H_\text{aux}$ such that~$W_{L'_{\text{aux}}}=\emptyset$ due to Vizing's Theorem.

\textbf{Case 2:} $\Delta_{H_\text{aux}} = c$\textbf{.} In this case we can apply Lemma~\ref{Lemma: Fan Equation}: Observe that~$(V,E_H \setminus \{e\}) = G_\text{aux}$ and~$L_{\text{aux}}$ is a proper labeling for~$G_\text{aux}$ such that~$W_{L_{\text{aux}}}=\emptyset$. Consider an arbitrary~$Z \subseteq N_{H_\text{aux}}(u)$ with~$|Z| \geq 2$ and~$v \in Z$. Note that~$Z \subseteq N_{H_\text{aux}}(u)$ implies~$\deg_{H_\text{aux}}(z) \leq c-1$ for all~$z \in Z$. It follows that~$\sum_{z \in Z} (\deg_{H_{\text{aux}}}(z) + 1 - c) < 2$. Since~$Z$ was arbitrary, Lemma~\ref{Lemma: Fan Equation} implies that there exists a proper labeling~$L'_{\text{aux}}$ for~$H_\text{aux}$ such that~$W_{L'_{\text{aux}}}= \emptyset$.

%Then, assume towards a contradiction that there exists no proper labeling for~$G'_\text{aux}$ without weak edges. By Theorem~\ref{Theorem: Fan Equation}, this implies that there exists a set~$Z \subseteq N_{G'_\text{aux}}(u)$ such that~$|Z| \geq 2$, $v \in Z$, and~$\sum_{z \in Z} (\deg_{G'_{\text{aux}}}(z) + 1 - c) \geq 2$. 
%
%Note that since~$\deg_{G'}(w) \leq c-1$ for all~$w \in N_{G'}[u]$ we know that~$\deg_{G'_{\text{aux}}}(w) + 1 - c \leq 0$ for all~$w \in N_{G'_{\text{aux}}}(u)$. Since~$Z \subseteq N_{G'_\text{aux}}(u)$ this contradicts the fact that~$\sum_{z \in Z} (\deg_{G'_{\text{aux}}}(z) + 1 - c) \geq 2$. Consequently, there exists a proper labeling~$L'_\text{aux}$ for~$G'_\text{aux}$ such that~$W_{L'_\text{aux}} = \emptyset$.

We now define~$L':=(S^1_{L'_\text{aux}}, S^2_{L'_\text{aux}}, \dots S^c_{L'_\text{aux}}, W_L)$. Note that the edge set~$E \cup \{e\}$ of~$G'$ can be partitioned into~$W_L$ and the edges of~$G'_{\text{aux}}$. Together with the fact that~$L'_{\text{aux}}$ is a labeling for~$G'_{\text{aux}}$ it follows that every edge of~$G'$ belongs to exactly one color class of~$L'$. Moreover, it obviously holds that~$|W_{L'}|=|W_L|=k$. Since there is no vertex with two incident edges in the same strong color class~$S^i_{L'_\text{aux}}$, the labeling~$L'$ is a proper labeling for~$G'$. 
\end{proof}

We now introduce the kernelization rule. Recall that~$\core$ is the set of vertices that are incident with at least one of the~$\pe_{c-1}$ edge-deletions that transform~$G$ into a graph with maximum degree~$c-1$. We make use of the fact that edges that have at least one endpoint~$u$ that is not in~$\core \cup N(\core)$ satisfy~$\deg(w) \leq c-1$ for all~$w \in N[u]$. Lemma~\ref{Lemma: Add edges of low degree} guarantees that these edges are not important to solve an instance of~\textsc{ECS}.

\begin{redrule} \label{Rule: Kernelrule ECS}
Remove all vertices in~$V \setminus (\core \cup N(\core))$ from~$G$.
\end{redrule}

\begin{proposition}
Rule~\ref{Rule: Kernelrule ECS} is safe.
\end{proposition}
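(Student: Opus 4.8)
The plan is to prove safety by establishing the equivalence of the instance $(G,c,k)$ and the reduced instance $(G',c,k)$, where $R := V \setminus (\core \cup N(\core))$ is the removed vertex set and $G' := G-R$. The crucial structural observation, already hinted at before the rule, is that every $u \in R$ is a periphery vertex with no core neighbor: since $u \notin \core$ we have $u \in \peri$ and $\deg_G(u) \le c-1$, and since $u \notin N(\core)$ no neighbor of $u$ lies in~$\core$, so every neighbor is in~$\peri$ as well. Hence $N_G[u] \subseteq \peri$ and every $z \in N_G[u]$ satisfies $\deg_G(z) \le c-1$. I would record this first, as it is precisely what makes Lemma~\ref{Lemma: Add edges of low degree} applicable.

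The forward direction is immediate. Given a proper labeling~$L$ of~$G$ with $|W_L| \le k$, I restrict each color class to the edge set $E(G')$ of~$G'$, setting $S^i_{L'} := S^i_L \cap E(G')$ and $W_{L'} := W_L \cap E(G')$. Deleting edges cannot create a monochromatic pair of incident edges, so~$L'$ is a proper labeling of~$G'$, and $|W_{L'}| \le |W_L| \le k$. Thus $(G',c,k)$ is a yes-instance whenever $(G,c,k)$ is.

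For the backward direction I would reintroduce the missing edges one at a time. Let~$E_R$ denote the set of edges of~$G$ incident with at least one vertex of~$R$; these are exactly the edges destroyed by the rule. Starting from the graph $(V, E(G'))$, in which the vertices of~$R$ are present but isolated, a proper labeling of~$G'$ with at most~$k$ weak edges extends trivially (the isolated vertices carry no edges). I then add the edges of~$E_R$ back in an arbitrary order, producing a chain $H_0 \subseteq H_1 \subseteq \dots \subseteq H_{|E_R|} = G$. When an edge $e=\{u,w\}$ with $u \in R$ is added to $H_{i-1}$ to form~$H_i$, I apply Lemma~\ref{Lemma: Add edges of low degree} with distinguished endpoint~$u$, obtaining a proper labeling of~$H_i$ with the same number of weak edges; iterating yields a proper labeling of~$G$ with at most~$k$ weak edges.

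The only point requiring care, and the closest thing to an obstacle, is verifying the degree hypothesis of Lemma~\ref{Lemma: Add edges of low degree} at each insertion, since degrees grow as edges are added. This is resolved by the monotonicity fact that every~$H_i$ is a subgraph of~$G$, whence $\deg_{H_i}(z) \le \deg_G(z)$ for all~$z$ and $N_{H_i}[u] \subseteq N_G[u]$. Combined with the initial observation, every $z \in N_{H_i}[u]$ satisfies $\deg_{H_i}(z) \le c-1$, so the hypothesis of Lemma~\ref{Lemma: Add edges of low degree} holds at every step regardless of the insertion order. This completes the backward direction and shows that Rule~\ref{Rule: Kernelrule ECS} is safe.
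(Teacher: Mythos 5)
Your proposal is correct and follows essentially the same route as the paper's own proof: the forward direction by restricting the labeling to the surviving edges, and the backward direction by re-inserting the deleted edges one at a time and applying Lemma~\ref{Lemma: Add edges of low degree} at each step, with the same monotonicity observation ($\deg_{H_i}(z)\le\deg_G(z)\le c-1$ and $N_{H_i}[u]\subseteq N_G[u]\subseteq\peri$) to verify its hypothesis. No gaps; the argument matches the paper's induction over the graphs $G_0,\dots,G_p$ in all essentials.
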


\iflong
\begin{proof}
Let~$(G'=(V',E'),c,k)$ be the reduced instance after applying Rule~\ref{Rule: Kernelrule ECS}. We prove the safeness of Rule~\ref{Rule: Kernelrule ECS} by showing that~there is a proper labeling with at most~$k$ weak edges for~$G$ if and only if there is a proper labeling with~$k$ weak edges for~$G'$.

$(\Rightarrow)$ Let~$L=(S^1_L, S^2_L, \dots, S^c_L, W_L)$ be a proper labeling with~$|W_L| \leq k$ for~$G$. Then, obviously~$L':=(S^1_L \cap E', S^2_L \cap E', \dots, S^c_L \cap E', W_L \cap E')$ is a proper labeling for~$G'$ with~$|W_{L'}| \leq |W_L| \leq k$.

$(\Leftarrow)$ Conversely, let~$L'=(S^1_{L'}, S^2_{L'}, \dots, S^c_{L'}, W_{L'})$ be a proper labeling with~$|W_{L'}| \leq k$ for~$G'$. Let~$E \setminus E'= \{e_1, e_2, \dots, e_p\}$. We define~$p+1$ graphs~$G_0, G_1, G_2, \dots, G_p$ by~$G_0:=(V,E')$, and~$G_i:=(V,E' \cup \{e_1, \dots, e_i\})$ for~$i \in \{1,\dots,p\}$. Note that~$G_p =G$, $\deg_{G_i}(v) \leq \deg_G(v)$, and~$N_{G_i}(v) \subseteq N_G(v)$ for every~$i \in \{0, 1, \dots, p\}$, and~$v \in V$. We prove by induction over~$i$ that all~$G_i$ have a proper labeling with at most~$k$ weak edges.

\textit{Base Case:} $i=0$. Then, since~$G_0$ and~$G'$ have the exact same edges,~$L'$ is a proper labeling for~$G_0$ with at most~$k$ weak edges.

\textit{Inductive Step:} $0< i \leq p$. Then, by the inductive hypothesis, there exists a proper labeling~$L_{i-1}$ for~$G_{i-1}=(V, E' \cup \{e_1, \dots, e_{i-1}\})$ with at most~$k$ weak edges. From~$E' = E( \core \cup N(\core))$ we conclude~$e_i \in E \setminus E(\core \cup N(\core)) = E(\peri) \setminus E(N(\core))$. Hence, for at least one of the endpoints~$u$ of~$e$ it holds that~$N_G[u] \subseteq \peri$. Therefore~$\deg_G(w) \leq c-1$ for all~$w \in N_G[u]$. Together with the facts that~$\deg_{G_i}(w) \leq \deg_{G}(w)$ and~$N_{G_i}(w) \subseteq N_G(w)$ we conclude~$\deg_{G_i}(w) \leq c-1$ for all~$w \in N_{G_i} [u]$. Then, by Lemma~\ref{Lemma: Add edges of low degree}, there exists a proper labeling~$L_i$ for~$G_i$ such that~$|W_{L_i}|=|W_{L_{i-1}}| \leq k$.
\end{proof}
\fi

\iflong It remains to state the kernel result.\fi

\xiecs*
\iflong
\begin{proof}
Let~$(G,c,k)$ be an instance of \textsc{ECS}. We apply Rule~\ref{Rule: Kernelrule ECS} on~$(G,c,k)$ as follows: First, we compute a 2-approximation~$D'_{c-1}$ of the smallest possible edge-deletion set~$D_{c-1}$ in~$\Oh(n+m)$ time~as described in Section~\ref{sec:prelim}. Let~$\core:=\core(D'_{c-1})$ and note that~$|D'_{c-1}| \leq 2 \pe_{c-1}$. We then remove all vertices in~$V \setminus (\core \cup N_G(\core))$ from~$G$ which can also be done in~$\Oh(n+m)$ time. Hence, applying Rule~\ref{Rule: Kernelrule ECS} can be done in~$\Oh(n+m)$ time.

We next show that after this application of Rule~\ref{Rule: Kernelrule ECS} the graph consists of at most~$4 \pe_{c-1}\cdot c$ vertices and~$\Oh( \pe_{c-1}\cdot c^2)$ edges. Since~$D'_{c-1}$ is a 2-approximation of the smallest possible edge-deletion set we have~$|\core| \leq 4\pe_{c-1}$. Since every vertex in~$\core$ has at most~$c-1$ neighbors in~$V \setminus \core$, we conclude~$|\core \cup N(\core)| \leq 4 \pe_{c-1} \cdot c$. In~$E(\core \cup N(\core))$ there are obviously the at most~$4\pe_{c-1}$ edges of~$D'_{c-1}$. Moreover, each of the at most~$4\pe_{c-1} \cdot c$ vertices might have up to~$c-1$ incident edges. Hence, after applying Rule~\ref{Rule: Kernelrule ECS}, the reduced instance has~$\Oh(\xi_{c-1} \cdot c^2)$~edges.%Since every edge has two endpoints, we have an overall number of at most~$2 \pe_{c-1} \cdot c^2 + 2\pe_{c-1}$~edges.
\end{proof}
\fi

If we consider~\textsc{Edge Coloring} instead of~\textsc{ECS}, we can immediately reject if one vertex has degree more than~$c$. Then, since there are at most~$|\core|\leq 2 \pe_{c-1}$ vertices that have a degree of at least~$c$, Theorem~\ref{Theorem: ECS xi kernel} implies the following.

\begin{corollary} \label{Corollary: ECS hc kernel}
Let~$h_c$ be the number of vertices with degree~$c$. \textsc{Edge Coloring} admits a problem kernel with~$\Oh(h_c \cdot c)$ vertices and~$\Oh(h_c \cdot c^2)$ edges that can be computed in~$\Oh(n+m)$~time.
\end{corollary}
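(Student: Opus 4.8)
The plan is to reduce \textsc{Edge Coloring} to the special case~$k=0$ of \textsc{ECS} and then invoke Theorem~\ref{Theorem: ECS xi kernel}, after one preprocessing step that lets us replace the driving parameter~$\pe_{c-1}$ by~$h_c$. First I would add a preprocessing rule: if~$\Delta_G > c$, then some vertex is incident with more than~$c$ pairwise incident edges, so no proper edge coloring with~$c$ colors can exist; in this case we output a fixed trivial no-instance. Computing~$\Delta_G$ and, if necessary, emitting the constant-size no-instance takes~$\Oh(n+m)$ time, so from now on we may assume~$\Delta_G \le c$.

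Next I would observe that a proper edge coloring of~$G$ with~$c$ colors is exactly a proper~$c$-colored labeling~$L$ with~$W_L = \emptyset$; that is, \textsc{Edge Coloring} is precisely the restriction of \textsc{ECS} to~$k=0$. Since Rule~\ref{Rule: Kernelrule ECS} is safe for \textsc{ECS} for every value of~$k$ (its safeness proof only preserves the number of weak edges and never uses a positive budget), it is in particular safe for~$k=0$, and the entire kernelization underlying Theorem~\ref{Theorem: ECS xi kernel} carries over verbatim.

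The key step is to bound the parameter that governs the kernel size. I would use the linear-time $2$-approximation~$D'_{c-1}$ from Section~\ref{sec:prelim}: it adds, for each vertex~$v$ with~$\deg_G(v) \ge c$, a set of~$\deg_G(v)-(c-1)$ incident edges to~$D'_{c-1}$. Because we have already guaranteed~$\Delta_G \le c$, every such vertex has degree exactly~$c$, so~$\deg_G(v)-(c-1)=1$; thus the greedy set deletes exactly one incident edge per degree-$c$ vertex and hence~$|D'_{c-1}| \le h_c$. As~$D'_{c-1}$ is itself a valid edge-deletion set for~$t=c-1$, this yields~$\pe_{c-1} \le |D'_{c-1}| \le h_c$, and the resulting core satisfies~$|\core(D'_{c-1})| \le 2|D'_{c-1}| \le 2h_c$.

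Finally I would substitute these bounds into the size estimate from the proof of Theorem~\ref{Theorem: ECS xi kernel}: since each core vertex has at most~$c-1$ neighbors outside~$\core$, the reduced graph has~$|\core \cup N(\core)| \le |\core|\cdot c \le 2 h_c\cdot c = \Oh(h_c\cdot c)$ vertices, and as each of these vertices has at most~$c$ incident edges it has~$\Oh(h_c \cdot c^2)$ edges, the whole computation running in~$\Oh(n+m)$ time. The only genuinely new point—and thus the crux—is the observation that once instances with~$\Delta_G > c$ are discarded, reducing a vertex to degree~$c-1$ costs a single edge deletion, which collapses~$\pe_{c-1}$ to at most~$h_c$; everything else is a direct specialization of Theorem~\ref{Theorem: ECS xi kernel}. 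The main thing to be careful about is that this degree preprocessing is essential, since without it (for instances with~$\Delta_G > c$) the quantity~$\pe_{c-1}$ can be arbitrarily larger than~$h_c$.
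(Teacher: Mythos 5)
Your proposal is correct and follows essentially the same route as the paper: reject instances with $\Delta_G > c$, observe that \textsc{Edge Coloring} is \textsc{ECS} with $k=0$, and apply Theorem~\ref{Theorem: ECS xi kernel} after bounding the parameter. In fact, you make explicit the one inequality the paper leaves implicit (that $\Delta_G \le c$ forces $\pe_{c-1} \le |D'_{c-1}| \le h_c$ via the greedy edge-deletion set), which is exactly the intended justification.
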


\subsection{Component Order Connectivity}
\label{sec:vc}
%\subsection{ECS parameterized by Vertex Cover}
%In this section we consider~\textsc{ECS} parameterized by the number of strong colors~$c$ and the size~$s$ of a minimum vertex cover of the input graph~$G$. We prove that~\textsc{ECS} admits a problem kernel with at most~$(c+1)\cdot 2s$ vertices, which is a linear vertex kernel for every fixed value of~$c$. Our kernelization is based on the Expansion Lemma~\cite{R05}, a generalization of the Crown Rule~\cite{CFJ04}. We use the formulation given by Cygan et al~\cite{FKLMPPS15}.
In this section we present a problem kernel for~\textsc{ECS} parameterized by the number of strong colors~$c$ and the component order connectivity~$\pc$. We prove that~\textsc{ECS} admits a problem kernel with~$\Oh(c^3 \cdot \pc)$ vertices, which is a linear vertex kernel for every fixed value of~$c$. Our kernelization is based on the Expansion Lemma\iflong{}~\cite{R05}, a generalization of the Crown Rule~\cite{CFJ04}. We use the formulation given by Cygan~et~al.\fi~\cite{FKLMPPS15}. \iflong
\begin{lemma}[Expansion Lemma] \label{Lemma: Expansion Lemma}

 Let~$q$ be a positive integer and~$G$ be a bipartite graph with partite sets~$A$ and~$B$ such that~$|B| \geq q|A|$ and there are no isolated vertices in~$B$. Then there exist nonempty vertex sets~$X \subseteq A$ and~$Y \subseteq B$ with~$N(Y) \subseteq X$. Moreover, there exist edges~$M \subseteq E(X,Y)$ such that
\begin{enumerate}
\item[a)] every vertex of~$X$ is incident with exactly~$q$ edges of~$M$, and
\item[b)] $q \cdot |X|$ vertices in~$Y$ are endpoints of edges in~$M$.
\end{enumerate}
The sets~$X$ and $Y$ can be found in polynomial time.
\end{lemma}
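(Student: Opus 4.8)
The plan is to turn the existence of a $q$-expansion into an ordinary bipartite matching question and then to close the remaining gap with an induction on $|A|$. First I would build the auxiliary bipartite graph $G'$ from $G$ by replacing every vertex $a \in A$ with $q$ identical copies $a^{(1)}, \dots, a^{(q)}$, each adjacent to $N_G(a) \subseteq B$; write $A'$ for the new left side, so that $|A'| = q|A| \le |B|$. The point of this construction is that a matching of $G'$ saturating $A'$ is exactly the object we want: projecting it back to $G$ yields an edge set $M \subseteq E(A,B)$ in which every $a \in A$ is incident with exactly $q$ edges, and the $B$-endpoints are pairwise distinct because no vertex of $B$ can be matched twice. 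Whether $G'$ admits such a matching can be decided in polynomial time via a maximum-matching computation, which will also drive the running-time bound.

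The argument then branches on this matching test. If $G'$ has a matching $M'$ saturating $A'$, I would output $X := A$ together with $Y := \{b \in B : b \text{ is saturated by } M'\}$; then properties a) and b) hold with $|Y| = q|X|$ by the projection above, and $N_G(Y) \subseteq A = X$ is automatic since $G$ is bipartite ($X$ and $Y$ are nonempty as soon as $B \neq \emptyset$, which holds in the non-degenerate case because $B$ has no isolated vertices). If no saturating matching exists, Hall's theorem supplies a nonempty $Z \subseteq A'$ with $|N_{G'}(Z)| < |Z|$. Since all copies of a single vertex share the same neighborhood, enlarging $Z$ to contain every copy of each vertex it touches only increases the deficiency, so I may assume $Z$ is the set of all copies of some nonempty $A_Z \subseteq A$; this gives $|N_G(A_Z)| < q|A_Z|$. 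Writing $B_Z := N_G(A_Z)$, I would delete $A_Z$ and $B_Z$ and recurse on the induced bipartite graph $G''$ on the sides $A \setminus A_Z$ and $B \setminus B_Z$.

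The step I expect to be the crux is verifying that $G''$ still meets the hypotheses and that its solution lifts back to $G$. For the hypotheses: each $b \in B \setminus B_Z$ has no neighbor in $A_Z$ by the definition of $B_Z$, yet is non-isolated in $G$, so it retains a neighbor in $A \setminus A_Z$ and stays non-isolated; and $|B \setminus B_Z| = |B| - |B_Z| > q|A| - q|A_Z| = q|A \setminus A_Z|$, so the size condition survives. Because $Z \neq \emptyset$ we have $A_Z \neq \emptyset$, hence $|A \setminus A_Z| < |A|$ and the induction terminates; moreover $A \setminus A_Z \neq \emptyset$, since $A = A_Z$ would force $B_Z = N_G(A) = B$ and contradict $|B_Z| < q|A_Z| \le |B|$. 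For the lifting: the returned sets satisfy $N_{G''}(Y) \subseteq X$, and since every $y \in Y \subseteq B \setminus B_Z$ has no neighbor in $A_Z$, this strengthens to $N_G(Y) \subseteq X$ in the original graph, while the $q$-expansion $M$ and the equality $|Y| = q|X|$ are untouched by the deletion. Finally, since each recursive call performs one maximum-matching computation and one extraction of a Hall-deficient set, and the recursion has depth at most $|A|$, the sets $X$ and $Y$ are produced in polynomial time.
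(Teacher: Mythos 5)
Your proof is correct, but there is nothing in the paper to compare it against: the paper states the Expansion Lemma as a known result, imported from the literature (Prieto~\cite{R05}, in the formulation of Cygan et al.~\cite{FKLMPPS15}), and gives no proof of it. Your argument is, in essence, the standard proof of this lemma: replace each $a \in A$ by $q$ copies, test for a matching saturating the copy side, and otherwise use Hall's theorem to extract a deficient set, which may be assumed closed under the copy relation so that it projects to a set $A_Z \subseteq A$ with $|N_G(A_Z)| < q|A_Z|$; then delete $A_Z$ together with $N_G(A_Z)$ and recurse. The delicate points are all handled correctly: closing $Z$ under copies only increases the deficiency; the residual instance keeps both hypotheses, since $|B \setminus B_Z| > q\,|A \setminus A_Z|$ and every vertex of $B \setminus B_Z$ keeps a neighbor because $B_Z = N_G(A_Z)$ absorbs exactly the $B$-vertices with a neighbor in $A_Z$; the case $A_Z = A$ is impossible, so the recursion is well-founded and never produces an empty instance; and the crucial containment $N_{G''}(Y) \subseteq X$ lifts to $N_G(Y) \subseteq X$ in the original graph precisely because $Y$ avoids $B_Z$. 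The polynomial running time follows since each level performs one maximum-matching computation plus a standard alternating-path extraction of a Hall-deficient set, and the depth is at most $|A|$. The only caveat is the degenerate case $A = B = \emptyset$, in which the lemma as stated admits no nonempty $X$ and $Y$ at all; you flag this, and it is an artifact of the statement (shared by the cited formulation) rather than a gap in your argument.
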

\fi
To apply the Expansion Lemma on an instance of~\textsc{ECS}, we need the following definition for technical reasons.

\begin{definition}
For a given graph~$G=(V,E)$, let~$D$ be an~order-$c$ \cc. We say that~$D$ is \emph{saturated} if for every~$v \in D$ it holds that~$E_G(\{v\}, V\setminus D) \neq \emptyset$.
\end{definition}

Note that every order-$c$ \cc{}~$D'$ can be transformed into a saturated~order-$c$ \cc{} by removing any vertex~$v \in D'$ with~$N(v) \subseteq D'$ from~$D'$ while such a vertex exists. Let~$(G=(V,E),c,k)$ be an instance of \textsc{ECS} and let~$D \subseteq V$ be a saturated order-$c$ \cc. Furthermore, let~$I:=V \setminus D$ be the remaining set of vertices. \iflong Consider the following simple reduction rule. \fi

\begin{redrule} \label{Rule: Remove isolated vertices}
If there exists a set~$J \subseteq I$ such that~$J$ is a connected component in~$G$, remove all vertices in~$J$ from~$G$.
\end{redrule}

Rule~\ref{Rule: Remove isolated vertices} is safe since~$|J| \leq c$ and therefore the graph~$G[J]$ has maximum degree~$c-1$ and can be labeled by Vizing's Theorem with~$c$ colors.  For the rest of this section we assume that~$(G,c,k)$ is reduced regarding Rule~\ref{Rule: Remove isolated vertices}. The following proposition is a direct consequence of the Expansion Lemma.%We now describe how to use \iflong Lemma~\ref{Lemma: Expansion Lemma} \else the Expansion Lemma \fi to compute an equivalent instance~$(G'=(V',E'),c,k')$ of \textsc{ECS} with~$|V'| \in \Oh(c^3 \pc)$.

\begin{proposition} \label{Prop: ExpLemma applied}
Let~$(G=(V,E),c,k)$ be an instance of \textsc{ECS} that is reduced regarding Rule~\ref{Rule: Remove isolated vertices}, let~$D$ be a saturated~order-$c$ \cc{} of~$G$, and let~$I:=V \setminus D$. If~$|I| \geq c^2 \cdot |D|$, then there exist nonempty sets~$X \subseteq D$ and~$Y \subseteq I$ with~$N(Y) \subseteq X \cup Y$. Moreover, there exists a set~$M \subseteq E(X,Y)$ such that
\begin{enumerate}
\item[a)] every vertex of~$X$ is incident with exactly~$c$ edges of~$M$, and
\item[b)] $c \cdot |X|$ vertices in~$Y$ are endpoints of edges in~$M$ and every connected component in~$G[Y]$ contains at most one such vertex.
\end{enumerate}
The sets~$X$ and~$Y$ can be computed in polynomial time.
\end{proposition}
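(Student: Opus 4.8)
The plan is to reduce to the Expansion Lemma by contracting each connected component of $G[I]$ to a single vertex. First I would build the auxiliary bipartite graph $G_{\aux}$ whose one side is $A := D$ and whose other side $B$ consists of the connected components of $G[I]$; I place an edge between $v \in D$ and a component $C \in B$ exactly when $v$ has a neighbour in $V(C)$, so that $G_{\aux}$ is a simple bipartite graph. Since each component of $G[I]$ has at most $c$ vertices, the hypothesis $|I| \geq c^2 \cdot |D|$ gives $|B| \geq |I|/c \geq c \cdot |D| = c \cdot |A|$, so the size requirement of the Expansion Lemma holds with $q := c$. Moreover, because $(G,c,k)$ is reduced with respect to Rule~\ref{Rule: Remove isolated vertices}, no connected component of $G[I]$ is at the same time a connected component of $G$: such a component would be a set $J \subseteq I$ forming a connected component of $G$, which the rule would have deleted. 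Equivalently, every component of $G[I]$ has an edge into $D$, so $B$ has no isolated vertices in $G_{\aux}$, and the Expansion Lemma applies.

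Applying the Expansion Lemma to $G_{\aux}$ with $q := c$ yields nonempty sets $X \subseteq A = D$ and $Y' \subseteq B$ with $N_{G_{\aux}}(Y') \subseteq X$, together with an edge set $M' \subseteq E_{G_{\aux}}(X, Y')$ in which every vertex of $X$ is incident with exactly $c$ edges and the $c \cdot |X|$ components in $Y'$ that are endpoints of $M'$ are pairwise distinct, each hit by exactly one edge of $M'$. I would then set $Y := \bigcup_{C \in Y'} V(C) \subseteq I$, which is nonempty since $Y'$ is. The neighbourhood condition transfers directly: for $y \in Y$ lying in some $C \in Y'$, its neighbours inside $I$ stay within $V(C) \subseteq Y$ because distinct components of $G[I]$ share no edge, while $N_{G_{\aux}}(C) \subseteq X$ forces every neighbour of $y$ in $D$ to lie in $X$; as $y$ has no neighbours outside $D \cup I$, we obtain $N_G(Y) \subseteq X \cup Y$.

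It remains to lift the auxiliary semi-matching $M'$ to a genuine edge set $M \subseteq E_G(X,Y)$. For each auxiliary edge $\{v, C\} \in M'$ I would select one concrete edge $\{v, w_C\}$ of $G$ with $w_C \in V(C)$, which exists by definition of $G_{\aux}$, and collect these into $M$. Property~(a) follows because the $c$ auxiliary edges at a fixed $v \in X$ reach $c$ distinct components and hence contribute $c$ distinct concrete edges incident with $v$. For property~(b), since each component $C \in Y'$ is an endpoint of at most one edge of $M'$, exactly one representative $w_C$ is chosen per endpoint-component, giving $c \cdot |X|$ endpoints in $Y$; and because the connected components of $G[Y]$ are precisely the members of $Y'$, each of them contains at most one chosen $w_C$. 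The main thing to be careful about is exactly this translation step: the distinctness clause of the Expansion Lemma, that the $c \cdot |X|$ matched $B$-vertices are distinct, is what simultaneously guarantees that each $X$-vertex keeps $c$ distinct real edges and that no component of $G[Y]$ absorbs two matched endpoints. Finally, all steps---computing the components of $G[I]$, building $G_{\aux}$, invoking the Expansion Lemma, and picking representative edges---run in polynomial time.
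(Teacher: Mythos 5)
Your proof is correct and follows essentially the same route as the paper: contract the connected components of $G[I]$ into single vertices of a bipartite auxiliary graph, apply the Expansion Lemma with $q=c$ (using the reduction rule to rule out isolated vertices on the component side), and then lift $X$, $Y'$, and $M'$ back to $G$ by choosing one representative edge per matched component. If anything, your explicit choice of a representative vertex $w_C$ \emph{adjacent to} $v$ is slightly more careful than the paper's phrasing, which fixes an arbitrary vertex of the class.
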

\iflong
\begin{proof}
We prove the proposition by applying Lemma~\ref{Lemma: Expansion Lemma}. To this end we define an equivalence relation~$\sim$ on the vertices of~$I$: Two vertices~$v,u \in I$ are equivalent, denoted~$u \sim v$ if and only if~$u$ and~$v$ belong to the same connected component in~$G[I]$. Obviously,~$\sim$ is an equivalence relation. For a given vertex~$u \in I$, let~$[u]:= \{v \in I \mid v \sim u\}$ denote the equivalence class of~$u$. Note that~$|[u]| \leq c$ since~$D$ is an order-$c$ \cc.

We next define the auxiliary graph~$G_\aux$, on which we will apply Lemma~\ref{Lemma: Expansion Lemma}. Intuitively, we obtain~$G_\aux$ from~$G$ by deleting all edges in~$E_G(D)$ and merging the at most~$c$ vertices in every equivalence class in~$I$. Formally~$G_\aux:=(D \cup I^*, E_\aux)$, with~$I^* := \{ [u] \mid u \in I \}$ and
\begin{align*}
E_\aux &:= \{ \{[u], v\} \mid [u] \in I^*, v \in \bigcup_{w \in [u]} (N_G(w) \setminus I)\}.
\end{align*}
Note that~$G_\aux$ can be computed from~$G$ in polynomial time and that~$|I| \geq |I^*| \geq \frac{1}{c} |I|$.

Observe that~$G_\aux$ is bipartite with partite sets~$D$ and~$I^*$. Since~$G$ is reduced regarding Rule~\ref{Rule: Remove isolated vertices}, every~$[u] \in I^*$ is adjacent to some~$v \in D$ in~$G_\aux$. Furthermore, since~$D$ is saturated, every~$v \in D$ is adjacent to some~$u \in I$ in~$G$ and therefore~$\{v,[u]\} \in E_\aux$. Hence, $G_\aux$ is a bipartite graph without isolated vertices. Moreover, from~$|I| \geq c^2 |D|$ and~$|I^*| \geq \frac{1}{c} |I|$ we conclude~$|I^*| \geq c \cdot |D|$. By applying Lemma~\ref{Lemma: Expansion Lemma} on~$G_\aux$ we conclude that there exist nonempty vertex sets~$X' \subseteq D$ and~$Y' \subseteq I^*$ with~$N_{G_\aux}(Y') \subseteq X'$ that can be computed in polynomial time such that there exists a set~$M' \subseteq E_{G_\aux}(X',Y')$ of edges, such that every vertex of~$X'$ is incident with exactly~$c$ edges of~$M'$, and~$c \cdot |X'|$ vertices in~$Y'$ are endpoints of edges in~$M'$.

We now describe how to construct the sets~$X$, $Y$, and~$M$ from~$X'$, $Y'$, and~$M'$. We set~$X:= X' \subseteq D$, and~$Y:= \bigcup_{[u] \in Y'} [u] \subseteq I$.  We prove that~$N_G(Y) \subseteq X \cup Y$. Let~$y \in Y$. Note that all neighbors of~$y$ in~$I$ are elements of~$Y$ by the definition of the equivalence relation~$\sim$ and therefore
\begin{align*}
N_G(y) \subseteq N_{G_\aux}([y]) \cup Y \subseteq X' \cup Y = X \cup Y.
\end{align*}

Next, we construct~$M \subseteq E_G(X,Y)$ from~$M'$. To this end we define a mapping~$\pi: M' \rightarrow E_G(X,Y)$. For every edge~$\{[u],v\} \in M'$ with~$[u] \in Y'$ and~$v \in X'$ we define~$\pi(\{[u],v\}):=\{w,v\}$, where~$w$ is some fixed vertex in~$[u]$. We set~$M:= \{\pi(e') \mid e' \in M'\}$. It remains to show that the statements~$a)$ and~$b)$ hold for~$M$. 

$a)$ Observe that~$\pi(\{[u_1],v_1\})= \pi(\{[u_2],v_2\})$ implies~$[u_1]=[u_2]$ and~$v_1=v_2$ and therefore, the mapping~$\pi$ is injective. We conclude~$|M|=|M'|$. Moreover, observe that the edges of~$M$ have the same endpoints in~$X$ as the edges of~$M'$. Thus, since every vertex of~$X'$ is incident with exactly~$c$ edges of~$M'$ it follows that statement~$a)$ holds for~$M$.

$b)$ By the conditions~$a)$ and~$b)$ of Lemma~\ref{Lemma: Expansion Lemma}, no two edges in~$M'$ have a common endpoint in~$Y'$. Hence, in every connected component in~$G[Y]$ there is at most one vertex incident with an edge in~$M$. Moreover, since~$|M|=|M'|$ and there are exactly~$c \cdot |X'|$ vertices in~$Y'$ that are endpoints of edges in~$M'$ we conclude that statement~$b)$ holds for~$M$.
\end{proof}
The following rule is the key rule for our kernelization.
\fi

\begin{redrule} \label{Rule: ExpLemma Rule}
If~$|I| \geq c^2 \cdot |D|$, then compute the sets~$X$ and~$Y$ from Proposition~\ref{Prop: ExpLemma applied}, delete all vertices in~$X \cup Y$ from~$G$, and decrease~$k$ by~$|E_G(X, V)| - c \cdot |X|$.
\end{redrule}

\begin{proposition}
Rule~\ref{Rule: ExpLemma Rule} is safe.
\end{proposition}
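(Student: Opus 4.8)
The plan is to prove that $(G,c,k)$ is a yes-instance if and only if the reduced instance $(G',c,k')$ is, where $G' := G - (X\cup Y)$ and $k' := k - (|E_G(X,V)| - c\cdot|X|)$ as prescribed by Rule~\ref{Rule: ExpLemma Rule}. The whole argument rests on one combinatorial bound: in \emph{any} proper $c$-colored labeling~$L$ of~$G$, at least $|E_G(X,V)| - c\cdot|X|$ of the edges incident with~$X$ are weak. Indeed, each vertex of~$X$ is incident with at most~$c$ strong edges (one per color), so the number of strong edges incident with~$X$ is at most~$c\cdot|X|$, and the remaining at least $|E_G(X,V)| - c\cdot|X|$ edges of~$E_G(X,V)$ must be weak. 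Note also that $|E_G(X,V)| \ge |M| = c\cdot|X|$, so $k'\le k$ and the bound is nonnegative.

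For the forward direction I would simply restrict a given proper labeling~$L$ of~$G$ with~$|W_L|\le k$ to~$G'$. Every deleted edge is incident with~$X\cup Y$, and since $N(Y)\subseteq X\cup Y$ the deleted edges are exactly $E_G(X,V)\cup E_G(Y)$. By the bound above, at least $|E_G(X,V)|-c\cdot|X|$ of the deleted edges are weak in~$L$, so the restriction is a proper labeling for~$G'$ with at most $k-(|E_G(X,V)|-c\cdot|X|)=k'$ weak edges.

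For the backward direction I would start from a proper labeling~$L'$ of~$G'$ with~$|W_{L'}|\le k'$ and extend it to~$G$ by coloring exactly the $c\cdot|X|$ matching edges of~$M$ together with all of~$E_G(Y)$ strong, and declaring every other deleted edge (those in $E_G(X)$, in $E_G(X,Y)\setminus M$, and the boundary edges from~$X$ into~$G'$) weak. This adds precisely $|E_G(X,V)|-c\cdot|X|$ weak edges, hitting the budget exactly, so $|W_L|\le k'+(|E_G(X,V)|-c\cdot|X|)=k$. The weak boundary edges create no color conflicts, and each $x\in X$ then carries exactly its~$c$ strong $M$-edges, so the only real task is to exhibit a proper $c$-coloring of the subgraph $H:=(X\cup Y,\,M\cup E_G(Y))$ in which the~$c$ edges of~$M$ at every $x\in X$ receive distinct colors. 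Here the expansion structure pays off: by Proposition~\ref{Prop: ExpLemma applied}(b) every component of~$G[Y]$ contains at most one $M$-endpoint, and each such endpoint is joined by a single $M$-edge to one vertex of~$X$; hence each component of~$H$ is a single center $x\in X$ with its~$c$ \emph{private} petals, where each petal is a component of~$G[Y]$ (of order at most~$c$, since $Y\subseteq V\setminus D$) attached by one edge to one of its vertices. I would color the~$c$ pendant $M$-edges at~$x$ with the~$c$ distinct colors $1,\dots,c$, and then color each petal so as to avoid, at its attachment vertex, the color of its pendant edge.

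The main obstacle is this last per-petal step: given a graph~$C$ of order at most~$c$ (hence maximum degree at most~$c-1$), a vertex~$y\in C$, and a forbidden color~$\gamma$, I must properly $c$-edge-color~$C$ so that no edge at~$y$ uses~$\gamma$. I expect to obtain this from Lemma~\ref{Lemma: Fan Equation}: attach a new pendant edge $e_0=\{y,z\}$ at~$y$, observe that~$C$ itself admits a proper $c$-coloring by Vizing's Theorem, and apply Lemma~\ref{Lemma: Fan Equation} to add~$e_0$ using the degree-one endpoint~$z$ as the pivot~$u$ --- since $N(z)=\{y\}$, the summation hypothesis over subsets $Z$ of size at least two is vacuous. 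This yields a proper $c$-coloring of $C\cup\{e_0\}$; after a global swap of two colors so that~$e_0$ receives~$\gamma$, the edges at~$y$ inside~$C$ all avoid~$\gamma$, and deleting~$e_0$ gives the desired coloring of~$C$. Assembling~$L'$, the petal colorings, the colored matching~$M$, and the weak edges then yields a proper labeling of~$G$ with at most~$k$ weak edges, which completes the proof.
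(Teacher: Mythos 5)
Your proposal is correct and follows the same skeleton as the paper's proof: the forward direction is the identical counting argument (each $x\in X$ is incident with at most $c$ strong edges, so at least $|E_G(X,V)|-c\cdot|X|$ edges at $X$ are weak in any proper labeling), and the backward direction uses the same weak set $W_{L'}\cup(E_G(X,V)\setminus M)$, the same assignment of the $c$ edges of $M$ at each $x\in X$ to $c$ distinct colors, and the same Vizing-based coloring of each component of $G[Y]$ with one forbidden color at its unique $M$-attachment vertex, exploiting Proposition~\ref{Prop: ExpLemma applied}~b) exactly as the paper does. The one place you genuinely diverge is the per-petal recoloring step. The paper handles it with an elementary swap: the attachment vertex $v$ has degree at most $c-1$ inside its component, so some strong color class contains no edge incident with $v$, and interchanging that class with the forbidden one does the job. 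You instead attach a pendant edge at $v$, invoke Lemma~\ref{Lemma: Fan Equation} at its degree-one endpoint (where the fan condition is vacuous), swap colors so the pendant edge receives the forbidden color, and delete it. This is valid but heavier than needed, and as written it glosses over one detail: Lemma~\ref{Lemma: Fan Equation} is stated only for $c=\Delta$ of the graph containing the new edge, so you need the same case distinction the paper makes when proving Lemma~\ref{Lemma: Add edges of low degree} --- if the pendant-augmented component has maximum degree at most $c-1$, Vizing's Theorem colors it directly, and only when that maximum degree equals $c$ (i.e., $\deg(v)=c-1$ in its component) does the Fan Equation actually apply. A second, negligible omission: components of $G[Y]$ meeting no $M$-edge are separate components of your graph $H$; they carry no constraint and are colored by Vizing. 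Both are easy repairs, so your plan goes through.
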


\iflong
\begin{proof}
Let~$G'= (V',E'):= G- (X \cup Y)$ be the graph after applying Rule~\ref{Rule: ExpLemma Rule}. Note that~$V' = V \setminus (X \cup Y)$, and~$E'=E \setminus (E_G(X \cup Y ,V))$. Moreover, let~$k':= k - |E_G(X,V)| + c \cdot |X|$. We show that there exists a proper labeling~$L$ with~$|W_L| \leq k$ for~$G$ if and only if there is a proper labeling~$L'$ with~$|W_{L'}| \leq k'$ for~$G'$.

$(\Rightarrow)$ Let~$L$ be a proper labeling for~$G$ with~$|W_L| \leq k$. We define a labeling~$L'$ for~$G'$ by~$L':= (S^1_L \cap E' , S^2_L \cap E', \dots, S^c_L \cap E', W_L \cap E')$. Obviously, no vertex in $V$ is incident with two edges of the same strong color under~$L$, and therefore no vertex in~$V' \subseteq V$ is incident with two edges of the same strong color under~$L'$. Hence,~$L'$ is a proper labeling for~$G'$. It remains to show that~$|W_{L'}| \leq k'$. Obviously, every vertex~$x \in X$ is incident with at most~$c$ edges of distinct strong colors under~$L$, since~$L$ is a proper labeling. Hence, the maximum number of strong edges in~$E_G(X,V)$ is~$c \cdot |X|$. Thus, we have~$|W_L \cap E_G(X,V)| \geq |E_G(X,V)| - c \cdot |X|$. Therefore,
\begin{align*}
|W_{L'}| &= |W_L \cap E'| = |W_L| - |W_L \cap E_G(X \cup Y,V)| \\
		 &\leq |W_L| - |W_L \cap E_G(X,V)| \leq k -|E_G(X,V)| + c \cdot |X| = k'.
\end{align*}

$(\Leftarrow)$ Conversely, let~$L'$ be a proper labeling for~$G'$ with~$|W_{L'}| \leq k'$. We now describe how to construct a labeling~$L$ for~$G$ with~$|W_L| \leq k$ from~$L'$. We set~$W_L := W_{L'} \cup (E_G(X,V) \setminus M)$. This implies \iflong
\begin{align*}
|W_L| = |W_{L'}| + |E_G(X,V)| - |M| \leq k' + |E_G(X,V)| - c|X| = k.
\end{align*}
\else
$|W_L| = |W_{L'}| + |E_G(X,V)| - |M| \leq k' + |E_G(X,V)| - c|X| = k$.
\fi
Next, we describe to which strong color classes of~$L$ we add the remaining edges of~$G$. Since~$N_G(Y) \subseteq X \cup Y$ it remains to label all edges in~$E'\setminus W_{L'} \cup E_G(Y) \cup M$.

First, consider the edges in~$E' \setminus W_{L'}$. Every edge~$e \in E' \setminus W_{L'}$ has a strong color~$i$ under~$L'$. We then add~$e$ to~$S^i_L$. Note that this implies~$L|_{E'} = L'|_{E'}$.

Second, consider the edges in~$M$. For each~$x \in X$ we define a set~$B_x^M:= \{e \in M \mid x \in e\} \subseteq M$. By Proposition~\ref{Prop: ExpLemma applied}~a), every~$x \in X$ is incident with exactly~$c$ edges in~$M$. Thus, ~$|B_x^M|=c$ and we let~$e^1_x, e^2_x, \dots, e^c_x$ denote be the elements of~$B_x^M$. For every~$x \in X$ we add~$e^i_x$ to the strong color class~$S^i_L$. Note that by Proposition~\ref{Prop: ExpLemma applied}, there are~$c|X|$ vertices in~$Y$ that are incident with edges in~$M$. Hence, the family~$\{ B_x^M \subseteq M \mid x \in X\}$ forms a partition of~$M$ and therefore every edge in~$M$ belongs to exactly one strong color class of~$L$.

Finally, consider the edges in~$E_G(Y)$. Let~$J \subseteq Y$ be a connected component in~$G[Y]$. Note that~$N_G(J) \subseteq J \cup X$ and observe that by Proposition~\ref{Prop: ExpLemma applied} b) there is at most one vertex~$v \in J$ that is an endpoint of some edge in~$M$. Hence, there is at most one edge in~$E_G(J,X)$ that belongs to some strong color class~$S^i_L$. Since~$D$ is an~order-$c$~\cc, we know that~$|J| \leq c$, and therefore~$\Delta_{G[J]} \leq c-1$. Consequently, there exists a proper labeling~$L''=(S^1_{L''},\dots, S^c_{L''},W_{L''})$ for~$G[J]$ due to Vizing's Theorem. Without loss of generality we can assume that~$v$ is not incident with an edge in~$S^{i}_{L''}$: If there exists an edge~$\{v,w\} \in S^{i}_{L''}$, there exists one strong color class~$S^{j}_{L''}$ that contains no edge incident with~$v$ since~$\deg_{G[J]}(v) \leq c-1$ and we simply interchange the edges in~$S^i_{L''}$ and~$S^j_{L''}$. Then, for every~$t \in \{1, \dots, c\}$ we add all edges in~$S^t_{L''}$ to the strong color class~$S^t_L$.
%
% Consequently, there are~$c-1$ strong color classes distinct from~$S^i_L$ and we can safely add every edge of~$E_G(J)$ to one unique strong color class~$S^j_L$ with~$j \in \{1, \dots, c\} \setminus \{i\}$. This can be done for every connected component in~$G[Y]$, which completes the definition of~$L$.

It remains to show that~$L$ is a proper labeling. To this end, we show for every vertex~$v \in V$, that~$v$ is not incident with two edges of the same strong color under~$L$. Consider the following case distinction.

\textbf{Case 1:}~$v \in V \setminus (X \cup Y)$\textbf{.} Then, since~$E_G(\{v\},Y) = \emptyset$, and~$E_G(\{v\},X) \subseteq W_L$, every strong edge incident with~$v$ has the same strong color under~$L$ as it has under~$L'$. Since~$L'$ is a proper labeling, the vertex~$v \in V\setminus (X \cup Y)$ is not incident with two edges of the same strong color under~$L$.

\textbf{Case 2:}~$v \in X$\textbf{.} Then, since~$E_G(\{v\},V \setminus Y) \subseteq W_L$, and~$E_G(\{v\},Y) \setminus B_v^M \subseteq W_L$, all strong edges incident with~$v$ are elements of~$B_v^M$. Since~$B_v^M=\{e^1_v, \dots, e^c_v\}$, and every~$e^i_v \in S^i_L$, the vertex~$v$ is not incident with two edges of the same strong color under~$L$. 

\textbf{Case 3:}~$v \in Y$\textbf{.} Let~$J \subseteq Y$ be a connected component in~$G[Y]$ such that~$v \in J$. Note that~$N_G(v) \subseteq X \cup J$. First, consider the case, that~$v$ has no strong neighbors in~$X$. Then, there are no ECS violations since~$L|_{E_G(J)}$ is a proper labeling for~$G[J]$ by Vizing's Theorem. Second, consider the case that~$v$ has strong neighbors in~$X$. Then, by Proposition~\ref{Prop: ExpLemma applied}, there is exactly one edge in~$E_G(\{v\},X)$ that belongs to~$M$ and therefore is in some strong color class~$S^i_L$. By the construction of~$L$, all edges in~$E_G(\{v\},J)$ have pairwise distinct strong colors which are all distinct from~$i$ under~$L$. Therefore, the vertex~$v$ is not incident with two edges of the same strong color under~$L$.
\end{proof}
\fi

Rules~\ref{Rule: Remove isolated vertices} and \ref{Rule: ExpLemma Rule} together with the fact that we can compute a $(c+1)$-approximation of the minimum~order-$c$~\cc{} in polynomial time~\cite{KL16} give us the following.

\vckernel*
\iflong
\begin{proof}
We first consider the running time. We use a $(c+1)$-approximation for the minimum~oder-$c$~\cc{} and compute an~order-$c$~\cc{}~$D'$ in polynomial time~\cite{KL16}. Afterwards we remove any vertex~$v\in D'$ with~$N(v) \subseteq D'$ from~$D'$ while such a vertex exists and we end up with a saturated~order-$c$~\cc~$D\subseteq D'$.
Afterwards, consider Rules~\ref{Rule: Remove isolated vertices} and~\ref{Rule: ExpLemma Rule}. Obviously, one application of Rule~\ref{Rule: Remove isolated vertices} can be done in polynomial time if~$D$ is known. Moreover, Rule~\ref{Rule: ExpLemma Rule} can also be applied in polynomial time due to Proposition~\ref{Prop: ExpLemma applied}. Since every application of one of these two rules removes some vertices, we can compute an instance that is reduced regarding Rules~\ref{Rule: Remove isolated vertices} and~\ref{Rule: ExpLemma Rule} from an arbitrary input instance of~\textsc{ECS} in polynomial time.

We next consider the size of a reduced instance~$(G=(V,E),c,k)$ of~\textsc{ECS} regarding Rules~\ref{Rule: Remove isolated vertices} and~\ref{Rule: ExpLemma Rule}. Let~$D \subseteq V$ be a~$(c+1)$-approximate saturated~order-$c$~\cc, and let~$I:=V\setminus D$. Since no further application of Rule~\ref{Rule: ExpLemma Rule} is possible, we conclude~$|I| < c^2 \cdot |D|$. Thus, we have~$|V|=|I|+|D|< (c^2 + 1) \cdot |D| \leq (c^2 + 1) \cdot (c+1) \cdot  \pc \in \Oh(c^3 \pc)$.
\end{proof}
\fi

\section{Multi-STC parameterized by Edge Deletion-Distance to Low-Degree Graphs}\label{sec:stc}

In this section we provide a problem kernelization for \textsc{Multi-STC} parameterized by~$\pe_{c-1}$ when~$c \leq 4$. Before we describe the problem kernel, we briefly show that \textsc{Multi-STC} does not admit a polynomial
kernel for the component order connectivity~$\pe_{c-1}$ even if~$c=1$: If~$\text{NP} \not \subseteq \text{coNP} / \text{poly}$, \textsc{STC} does not admit a polynomial kernel if parameterized by the number of strong edges~\cite{GK18} which---in nontrivial instances---is bigger than the size of a maximal matching~$M$. Since the vertex cover number~$s$ is never larger than~$2|M|$, this implies that \textsc{Multi-STC} has no polynomial kernel if parameterized by~$s$ unless~$\badstuffhappens$. Since~$\pc \leq s$, we conclude that \textsc{Multi-STC} does not admit a polynomial kernel for~$\pc$ unless~$\badstuffhappens$.

\todog{Neuer Zwischentext.} Next, consider parameterization by~$\pe_{c-1}$. Observe that Rule~\ref{Rule: Kernelrule ECS} which gives a problem kernel for~\textsc{ECS} does not work for~\textsc{Multi-STC}; see Figure~\ref{Figure: Counterexample ECS Rule for STC} for an example. Furthermore, for \textsc{Multi-STC} we need a fundamental new approach: For STC-labelings the maximum degree  and the number of colors are not as closely related as in \textsc{ECS}, and therefore, Lemma~\ref{Lemma: Fan Equation} might not be helpful for \textsc{Multi-STC}. Moreover, in the proof of Lemma~\ref{Lemma: Add edges of low degree} we exploit that in \textsc{ECS} we may remove weak edges from the instance, which does not hold for \textsc{Multi-STC} since removing a weak edge may produce~$P_3$s. However, the results for~\textsc{ECS} parameterized by~$(\pe_{c-1},c)$ can be lifted to the seemingly harder~\textsc{Multi-STC} for~$c \in \{1,2,3,4\}$. We will first discuss the cases~$c=1$ and~$c=2$. For the cases~$c \in \{3,4\}$ we  show the more general statement that~\textsc{Multi-STC} admits a problem kernel with~$\Oh(\pe_{\lfloor \frac{c}{2} \rfloor +1}\cdot c)$ vertices.

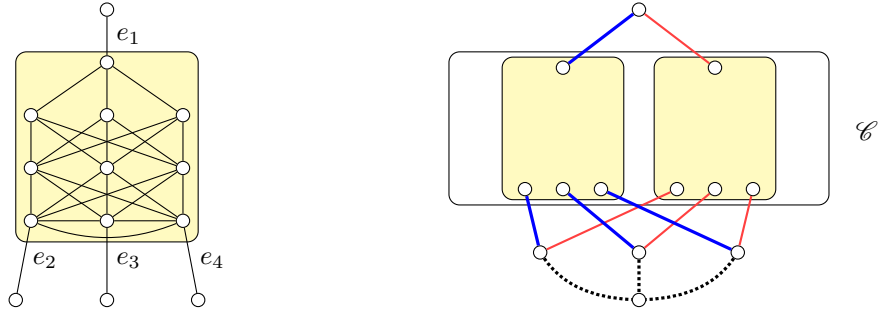
\begin{figure}
\begin{center}

\begin{tikzpicture}[xscale=1, yscale=0.7]
\tikzstyle{knoten}=[circle,fill=white,draw=black,minimum size=5pt,inner sep=0pt]
\tikzstyle{bez}=[inner sep=0pt]]

%Gadget
\begin{scope}[xshift=-7cm]
\draw[rounded corners, fill=yellow!30] (-1.2, 5.2) rectangle (1.2, 1.6) {};

\node[knoten] (g0) at (0,6) {};

\node[knoten] (g1) at (0,5) {};

\node[knoten] (g21) at (-1,4) {};
\node[knoten] (g22) at (0,4) {};
\node[knoten] (g23) at (1,4) {};

\node[knoten] (g31) at (-1,3) {};
\node[knoten] (g32) at (0,3) {};
\node[knoten] (g33) at (1,3) {};

\node[knoten] (g41) at (-1,2) {};
\node[knoten] (g42) at (0,2) {};
\node[knoten] (g43) at (1,2) {};

\node[knoten] (g51) at (-1.2,0.5) {};
\node[knoten] (g52) at (0,0.5) {};
\node[knoten] (g53) at (1.2,0.5) {};

\draw[-] (g0) to node [anchor=west]  {$e_1$} (g1);

\draw[-] (g1) to (g21);
\draw[-] (g1) to (g22);
\draw[-] (g1) to (g23);

\draw[-] (g21) to (g31);
\draw[-] (g21) to (g32);
\draw[-] (g21) to (g33);
\draw[-] (g22) to (g31);
\draw[-] (g22) to (g32);
\draw[-] (g22) to (g33);
\draw[-] (g23) to (g31);
\draw[-] (g23) to (g32);
\draw[-] (g23) to (g33);

\draw[-] (g31) to (g41);
\draw[-] (g31) to (g42);
\draw[-] (g31) to (g43);
\draw[-] (g32) to (g41);
\draw[-] (g32) to (g42);
\draw[-] (g32) to (g43);
\draw[-] (g33) to (g41);
\draw[-] (g33) to (g42);
\draw[-] (g33) to (g43);

\draw[-] (g41) to (g42);
\draw[-] (g42) to (g43);
\draw[-, bend left] (g43) to (g41);

\draw[-] (g41) to node [anchor=west]  {$e_2$} (g51);
\draw[-] (g42) to node [anchor=west]  {$e_3$} (g52);
\draw[-] (g43) to node [anchor=west]  {$e_4$} (g53);
\end{scope}

%Instance
\draw[rounded corners] (-2.5, 5.2) rectangle (2.5, 2.3) {};
\node[knoten] (i0) at (0,6) {};

	%Instance - > left gadget
\begin{scope}[xshift=0.5cm,yshift=0.4cm]
\draw[rounded corners, fill=yellow!30] (-2.3, 4.7) rectangle (-0.7, 2) {};

\node[knoten] (l21) at (-1.5,4.5) {};
\node[knoten] (l31) at (-2,2.2) {};
\node[knoten] (l32) at (-1.5,2.2) {};
\node[knoten] (l33) at (-1,2.2) {};
\end{scope}

	%Instance - > right gadget
\begin{scope}[xshift=2.5cm,yshift=0.4cm]
\draw[rounded corners, fill=yellow!30] (-2.3, 4.7) rectangle (-0.7, 2) {};

\node[knoten] (r21) at (-1.5,4.5) {};
\node[knoten] (r31) at (-2,2.2) {};
\node[knoten] (r32) at (-1.5,2.2) {};
\node[knoten] (r33) at (-1,2.2) {};
\end{scope}

	%Instance -> Peri Comp
\node[knoten] (u51) at (-1.3,1.4) {};
\node[knoten] (u52) at (0,1.4) {};
\node[knoten] (u53) at (1.3,1.4) {};

\node[knoten] (a) at (0,0.5) {};

\draw[very thick, color=blue] (i0) to node [anchor=east]  {} (l21);
\draw[thick, color=myred]  (i0) to node [anchor=west] {} (r21);

\draw[very thick, color=blue] (l31) to (u51);
\draw[thick, color=myred] (r31) to (u51);

\draw[very thick, color=blue] (l32) to (u52);
\draw[thick, color=myred]  (r32) to (u52);

\draw[very thick, color=blue] (l33) to (u53);
\draw[thick, color=myred]  (r33) to (u53);

\draw[very thick, densely dotted, bend right] (u51) to (a);
\draw[very thick, densely dotted] (u52) to (a);
\draw[very thick, densely dotted, bend left] (u53) to (a);

%Bez
\node at (3,3.7) {$\core$};
%\node at (3,1.5) {$A^*$};
%\node at (3,0.5) {$A \setminus A^*$};

%\node at (-1.5,1.7) {$1$};
%\node at (-0.9,1.7) {$2$};
%%
%\node at (-0.5,1.7) {$1$};
%\node at (0.5,1.7) {$2$};
%%
%\node at (0.9,1.7) {$1$};
%\node at (1.5,1.7) {$2$};
\end{tikzpicture}
\end{center}

\caption{
Left: A graph where in any STC-labeling with four strong colors and without weak edges, the edges~$e_1$, $e_2$, $e_3$, and~$e_4$ are part of the same strong color class. Right: A no-instance of Multi-STC with~$c=4$ and~$k=0$, where Rule~\ref{Rule: Kernelrule ECS} does not produce an equivalent instance: The inner rectangles correspond to two copies of the gadget on the left. Observe that all blue edges must have a common strong color, and all red edges must have a common strong color distinct that is not blue. Hence, for any STC-labeling of~$G[\core \cup N(\core)]$ it is not possible to extend the labeling to the dotted edges without violating STC. However, Rule~\ref{Rule: Kernelrule ECS} converts this no-instance into a yes-instance.
} \label{Figure: Counterexample ECS Rule for STC}
\end{figure}

If~$c=1$, the parameter~$\pe_{c-1}=\pe_0$ equals the number~$m$ of edges in~$G$. Hence, \textsc{Multi-STC} admits a trivial~$\pe_{c-1}$-edge kernel in this case. If~$c=2$, any input graph consists of core vertices~$\core$, periphery vertices in~$N(\core)$ and isolated vertices and edges. We can compute an equivalent instance in linear time by deleting these isolated components. \iflong The safeness of this rule is obvious.\fi Afterwards, the graph contains at most~$2 \pe_{c-1}$ core vertices. Since each of these vertices has at most one neighbor outside~$\core$, we have a total number of~$4 \pe_{c-1}$ vertices.

To extend this result to~$c \in \{3,4\}$, we now provide a problem kernel for \textsc{Multi-STC} parameterized by~$(c,\pe_{\lfloor \frac{c}{2} \rfloor +1})$. Let~$(G,c,k)$ be an instance of~\textsc{Multi-STC} with edge-deletion set~$D:=D_{\lfloor \frac{c}{2} \rfloor +1}$, and let~$\mathscr{C}$ and~$\mathscr{P}$ be the core and periphery of~$G$. A subset~$A \subseteq \peri$ is called \emph{periphery component} if it is a connected component in~$G[\peri]$. Furthermore, for a periphery component~$A \subseteq \mathscr{P}$ we define the subset~$A^* \subseteq A$ of \emph{close vertices} in~$A$ as~$A^*:=N(\mathscr{C}) \cap A$, that is, the set of vertices of~$A$ that are adjacent to core vertices. % As mentioned above, we can not use the technique of swapping strong colors along maximal alternating paths for~\textsc{STC} as it is used in Vizing's Theorem.
The key technique of our kernelization is to move weak edges along paths inside periphery components. %The main idea of our kernelization is to remove non-close vertices of specific periphery components from the input graph. Consider the following definition.

\begin{definition} \label{Def: Good PC}
Let~$(G,c,k)$ be an instance of~\textsc{Multi-STC} with core vertices~$\core$ and periphery vertices~$\peri$. A periphery component~$A \subseteq \mathscr{P}$ is called \emph{good}, if for every STC-labeling~$L=(S_L^1, \dots, S_L^c, W_L)$ for~$G$ with~$E(A) \subseteq W_L$ there exists an STC-labeling~$L'=(S_{L'}^1, \dots, S_{L'}^c, W_{L'})$ for~$G$ such that
\iflong \begin{enumerate}
\item[1.]\else 1.~\fi $L'|_{E \setminus E(A)}=L|_{E \setminus E(A)}$, and
%$S_L^i \subseteq S_{L'}^i$ for every~$i \in \{1, \dots, c\}$, and
\iflong \item[2.]\else  2. \fi $W_{L'} \cap E(A)= \emptyset$.
\iflong \end{enumerate}\fi
\end{definition}

Intuitively, a good periphery component~$A$ is a periphery component where the edges  in~$E(A)$ can always be added to some strong color classes of an STC-labeling, no matter how the other edges of $G$ are labeled. The condition~$E(A) \subseteq W_L$ is a technical condition that makes the proof of the next proposition easier. %The following proposition forms the main tool for the kernelization.

\begin{proposition} \label{Prop: Delete good PCs}
Let~$(G,c,k)$ be an instance of~\textsc{Multi-STC} with core vertices~$\core$ and periphery vertices~$\peri$. Furthermore, let~$A \subseteq \peri$ be a good periphery component. Then,~$(G,c,k)$ is a yes-instance if and only if~$(G- (A \setminus A^*),c,k)$ is a yes-instance.
\end{proposition}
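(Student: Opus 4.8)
The plan is to prove both directions by relating labelings of $G$ and $G' := G - (A \setminus A^*)$ through restriction to their common edge set, invoking the goodness of $A$ only in the harder backward direction. First I would record the structural fact that drives everything: since $A$ is a connected component of $G[\peri]$ and the vertices in $A \setminus A^*$ are by definition not adjacent to any core vertex, every edge incident with a vertex of $A \setminus A^*$ lies inside $A$. Consequently the edges deleted when passing from $G$ to $G'$ are exactly those of $E(A)$ touching $A\setminus A^*$, so that $E' = (E \setminus E(A)) \cup E(A^*)$ and, in particular, $E \setminus E(A) \subseteq E'$.

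For the forward direction, given an STC-labeling $L$ of $G$ with $|W_L| \le k$, I would simply restrict every color class to $E'$, obtaining $L' := (S^1_L \cap E', \dots, S^c_L \cap E', W_L \cap E')$. To see that $L'$ is an STC-labeling of $G'$, suppose two edges $\{u,v\}, \{v,w\} \in S^i_{L'}$ form a strong $P_3$ with $\{u,w\} \notin E'$. These edges are also strong under $L$, so STC for $G$ yields $\{u,w\} \in E$; but the endpoints $u,v,w$ all survive in $G'$ (being incident with surviving edges), hence $\{u,w\}$ is not incident with $A \setminus A^*$ and therefore $\{u,w\} \in E'$, a contradiction. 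Since $|W_{L'}| \le |W_L| \le k$, this direction is immediate.

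The backward direction is where the work lies and where goodness is used. Starting from an STC-labeling $L'$ of $G'$ with $|W_{L'}| \le k$, I would first build an auxiliary labeling $\hat L$ of $G$ that keeps the coloring of $L'$ on $E \setminus E(A)$ and declares every edge of $E(A)$ weak. This is a genuine $c$-colored labeling of $G$ because $E \setminus E(A) \subseteq E'$, and it is an STC-labeling: every strong $P_3$ of $\hat L$ consists of two edges of $E \setminus E(A) \subseteq E'$ that are strong under $L'$, so STC for $G'$ forces the closing edge into $E' \subseteq E$. Crucially $E(A) \subseteq W_{\hat L}$ by construction, so the goodness of $A$ applies and yields an STC-labeling $L$ of $G$ with $L|_{E \setminus E(A)} = \hat L|_{E \setminus E(A)}$ and $W_L \cap E(A) = \emptyset$. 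Hence $W_L = W_{L'} \cap (E \setminus E(A))$, giving $|W_L| \le |W_{L'}| \le k$, as required.

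I expect the main obstacle to be arguing carefully that the intermediate labeling $\hat L$ is STC-valid on $G$: the danger is that reinstating the vertices of $A \setminus A^*$ together with their edges might create a new bad strong $P_3$. This is precisely ruled out by the structural observation above, since all reinstated edges lie in $E(A)$ and are declared weak, so the strong edges of $\hat L$ coincide with strong edges of $L'$ and no new strong incidence is introduced. Everything else is a routine matching of weak-edge counts, and the definition of a \emph{good} periphery component is tailored exactly so that the all-weak-on-$E(A)$ labeling $\hat L$ can be repaired into $L$ at zero additional weak-edge cost.
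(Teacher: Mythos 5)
Your proposal is correct and follows essentially the same route as the paper: the forward direction restricts the labeling to the induced subgraph, and the backward direction builds an intermediate labeling that copies $L'$ on $E \setminus E(A)$, declares all of $E(A)$ weak, and then invokes goodness of $A$ to repair it. In fact you are slightly more thorough than the paper, which applies the definition of goodness to the intermediate labeling without explicitly verifying that it is an STC-labeling of $G$ --- the check you carry out via the observation that all reinstated edges lie in $E(A)$ and are weak.
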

\iflong
\begin{proof}
Let~$\tilde{G}=(\tilde{V},\tilde{E}):=(G- (A \setminus A^*),c,k)$. We show that~$G$ has a~$c$-colored STC-labeling with at most~$k$ weak edges if and only if~$\tilde{G}$ has a~$c$-colored STC-labeling with at most~$k$ weak edges.

Let~$L=(S^1_L, \dots, S^c_L, W_L)$ be a~$c$-colored STC-labeling for~$G$ such that~$|W_L|\leq k$. Then, define by~$\tilde{L}:=(S^1_{{L}}\cap \tilde{E}, \dots, S^c_{{L}}\cap \tilde{E}, W_L \cap \tilde{E})$ a~$c$-colored labeling for~$\tilde{G}$. Obviously~$|W_L \cap \tilde{E}|\leq k$. It remains to show that~$\tilde{L}$ satisfies STC. Since~$\tilde{G}$ is an induced subgraph of~$G$, every two edges~$e_1, e_2 \in \tilde{E}$ forming a~$P_3$, a path on three vertices, in~$\tilde{G}$ also form a~$P_3$ in~$G$. Hence, from the fact that~$L$ satisfies STC we conclude that~$\tilde{L}$ satisfies~STC.

Conversely, let~$\tilde{L}=(S^1_{\tilde{L}}, \dots, S^c_{\tilde{L}}, W_{\tilde{L}})$ be a~$c$-colored STC-labeling for~$\tilde{G}$ such that~$|W_{\tilde{L}}|\leq k$. We define a~$c$-colored labeling~$L:=(S^1_L,\dots,S^c_L,W_L)$ for~$G$ by~$S^j_L := S^j_{\tilde{L}} \setminus  E(A)$ and~$W_L := W_{\tilde{L}} \cup E(A)$. Note that~$L|_{E \setminus E(A)}=\tilde{L}|_{E \setminus E(A)}$ and~$E(A)\subseteq W_L$. Then, by the definition of good periphery components there exists an STC-labeling~$L'=(S_{L'}^1, \dots, S_{L'}^c, W_{L'})$ for~$G$ such that~$L'|_{E \setminus E(A)}=L|_{E \setminus E(A)}=\tilde{L}|_{E \setminus E(A)}$, and $W_{L'} \cap E(A)= \emptyset$. We next show that~$|W_{L'}| \leq k$. Since~$W_{L'} \cap E(A)= \emptyset$ every weak edge of~$L'$ is an element of~$E \setminus E(A)$. From the facts that~$L'|_{E \setminus E(A)}=\tilde{L}|_{E \setminus E(A)}$ and~$|W_{\tilde{L}}| \leq k$ it follows~$|W_{L'}| \leq k$. Therefore, $L'$ is an STC-labeling for~$G$ with at most~$k$ weak edges.
\end{proof}
\fi
%\todo[inline]{Cpt: Discuss cases $c=1,2$}

In the following, we show that for instances~$(G,c,k)$ with~$c \geq 3$ we can compute an equivalent instance of size~$\mathcal{O}(\pe_{\lfloor \frac{c}{2} \rfloor +1} c)$. %Before we briefly discuss the cases~$c =1$ and~$c=2$. In case of~$c=1$
We first consider all cases where~$c \geq 3$ is odd. In this case, we can prove that all periphery components are good.

\begin{proposition} \label{Prop: Odd c all good}
Let~$(G,c,k)$ be an instance of \textsc{Multi-STC}, where~$c \geq 3$ is odd. Moreover, let~$A \subseteq \mathscr{P}$ be a periphery component. Then, $A$ is good.
\end{proposition}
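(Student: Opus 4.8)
The plan is to argue by a direct greedy recoloring of the edges inside $A$, exploiting that $c$ odd forces $c=2t-1$ for $t:=\lfloor c/2\rfloor+1$. Fix an STC-labeling $L=(S_L^1,\dots,S_L^c,W_L)$ with $E(A)\subseteq W_L$; I must build $L'$ with $L'|_{E\setminus E(A)}=L|_{E\setminus E(A)}$ and $W_{L'}\cap E(A)=\emptyset$. Since $A$ is a connected component of $G[\peri]$, the edges of $E(A)$ are exactly the internal edges of $A$, and no internal edge is incident to a core vertex; every other edge must keep its $L$-label. So the whole task reduces to assigning a strong color from $\{1,\dots,c\}$ to each edge of $G[A]$ so that the resulting full labeling is again an STC-labeling. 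The first thing I would record is the degree bound: every $v\in A\subseteq\peri$ satisfies $\deg_G(v)\le t$, and because $c$ is odd, $t=(c+1)/2$, hence $c=2t-1$.

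Next I would isolate a sufficient condition that sidesteps any triangle bookkeeping. For each $v\in A$ let $F(v)\subseteq\{1,\dots,c\}$ be the (frozen) set of strong colors that $L$ places on the edges joining $v$ to $\core$. I claim it suffices to color the internal edges of $A$ so that the coloring is proper on $G[A]$ in the usual sense and, additionally, no internal edge at a vertex $v$ uses a color of $F(v)$. Indeed, every $P_3$ containing an internal edge has its central vertex in $A$, because an internal edge has both endpoints in $A$; so any newly created monochromatic pair of incident strong edges would meet at some $v\in A$, and the two conditions forbid two equally colored strong edges at such a $v$ outright. Monochromatic pairs of purely external edges are untouched and were already legal under $L$. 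Thus no STC violation is introduced, all internal edges become strong, and the two defining properties of a good component are met.

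Finally I would construct such a coloring greedily, coloring the edges of $G[A]$ in arbitrary order. When coloring $e=\{u,v\}$, the forbidden colors are $F(u)\cup F(v)$ together with those already assigned to the at most $(d(u)-1)+(d(v)-1)$ other internal edges incident to $e$, where $d(\cdot)$ is the degree in $G[A]$. Using $d(v)+|F(v)|\le\deg_G(v)\le t$ (the external degree of $v$ bounds $|F(v)|$), the number of forbidden colors is at most $(d(u)+|F(u)|)+(d(v)+|F(v)|)-2\le 2t-2=c-1$. Since $c$ colors are available, at least one strong color always remains free, so the greedy procedure never stalls.

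The crux, and the one place that genuinely needs the hypothesis, is the arithmetic $2t-2=c-1$: it is precisely the oddness of $c$ (giving $c=2t-1$) that leaves exactly one spare color in the worst case, so the bound is tight and $c\ge 3$ odd is used essentially. I expect the reduction to the ``proper-plus-avoid-$F(v)$'' condition and the counting to be routine; the only genuinely conceptual point is observing that no STC violation can be centered outside $A$, which is immediate from the fact that internal edges of a periphery component never touch $\core$.
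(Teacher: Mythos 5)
Your proof is correct and rests on exactly the same idea as the paper's: since both endpoints of an internal edge are periphery vertices of degree at most $\lfloor c/2 \rfloor + 1$, the edge is incident with at most $2\lfloor c/2 \rfloor = c-1$ other edges (this is where oddness of $c$ enters), so a strong color avoiding all incident edges always exists. The only difference is packaging — you run a one-shot greedy coloring of $G[A]$ with frozen color sets $F(v)$, while the paper recolors weak internal edges one at a time — but the counting argument and the use of the hypothesis are identical.
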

\iflong
\begin{proof}
Let~$L$ be an arbitrary STC-labeling for~$G$ with~$E(A)\subseteq W_L$. We prove that there is an STC-labeling which is partially equal to~$L$ on~$E\setminus E(A)$ and has no weak edges in~$E(A)$.

Let~$L'$ be an STC-labeling for~$G$ with~$L'|_{E\setminus E(A)} = L|_{E\setminus E(A)}$. If~$W_{L'} \cap E(A) = \emptyset$, nothing more needs to be shown. So, assume there is an edge~$\{u,v\} \in W_{L'} \cap E(A)$. Since~$u,v \in A$, it holds that~$\deg(u)\leq \lfloor \frac{c}{2} \rfloor +1$ and~$\deg(v) \leq \lfloor \frac{c}{2} \rfloor +1$. Then, since~$c$ is odd, the edge~$\{u,v\}$ is incident with at most~$2 \cdot \lfloor \frac{c}{2} \rfloor < c$ edges in~$G$. Consequently, there exists a strong color~$i \in \{1, \dots, c\}$, such that~$\{u,v\}$ can be added to the strong color class~$S^i_{L'}$ and be removed from~$W_{L'}$ without producing any STC violations. This way, we transformed~$L'$ into an STC-labeling~$L''$, such that~$L''|_{E\setminus\{\{u,v\}\}} = L|_{E\setminus\{\{u,v\}\}}$ and~$|W_{L''}| = |W_{L'}| - 1$.  Since~$L$ was arbitrary, the periphery component~$A$ is good by definition.
\end{proof}
\fi
The Propositions~\ref{Prop: Delete good PCs} and~\ref{Prop: Odd c all good} guarantee the safeness of the following rule:

\begin{redrule} \label{Rule: odd c}
If~$c$ is odd, remove~$A \setminus A^*$ from all periphery components~$A \subseteq \mathscr{P}$.
\end{redrule}

\begin{proposition} \label{Prop: Kernel for odd c}
Let~$(G=(V,E),c,k)$ be an instance of~\textsc{Multi-STC} where~$c \geq 3$ is odd. Then, we can compute an instance~$(G'=(V',E'),c,k)$ in~$\mathcal{O}(n+m)$ time such that~$|V'| \leq 2\cdot \pe_{\lfloor \frac{c}{2} \rfloor + 1}\cdot (\lfloor \frac{c}{2} \rfloor +1)$, and~$|E'| \in \mathcal{O}(\pe_{\lfloor \frac{c}{2} \rfloor + 1} \cdot c^2)$.
\end{proposition}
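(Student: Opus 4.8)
The plan is to treat Proposition~\ref{Prop: Kernel for odd c} as a size-and-time analysis of the graph obtained by applying Rule~\ref{Rule: odd c} to every periphery component; its correctness is already in hand, since for odd $c$ Proposition~\ref{Prop: Odd c all good} makes every periphery component good and Proposition~\ref{Prop: Delete good PCs} then certifies that deleting $A \setminus A^*$ preserves the answer. Write $t := \lfloor \frac{c}{2}\rfloor + 1$ for the parameter of the edge-deletion set $D$. Because the rule only discards periphery vertices that have no core neighbour, after exhaustive application the surviving vertices are precisely the core together with the close vertices, i.e.\ $V' = \core \cup (N(\core) \cap \peri)$, and the whole task reduces to bounding $|V'|$ and $|E'|$ and checking the running time.

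For the vertex count I would combine two facts recorded in Section~\ref{sec:prelim}: $|\core| \le 2\pe_t$, and every core vertex is incident with at most $t$ edges of $E \setminus D$. Since each edge of $D$ has both endpoints in $\core$, every edge between $\core$ and $\peri$ lies in $E \setminus D$; hence each core vertex has at most $t$ neighbours in $\peri$, so $|N(\core) \cap \peri| \le t \cdot |\core|$. Adding the core itself gives $|V'| \le (t+1)\,|\core| \le 2\pe_t(t+1)$, matching the claim.

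For the edge count I would split $E'$ according to incidence with $\core$. The core-incident edges are the at most $\pe_t$ edges of $D$ plus the edges of $E \setminus D$ meeting a core vertex, of which there are at most $t\,|\core|$; this is $\Oh(\pe_t\, t)$. The remaining edges run between close vertices, each of which is a periphery vertex and therefore has degree at most $t$ in $G$; with at most $t\,|\core| \le 2t\pe_t$ close vertices this contributes at most $\tfrac12 t \cdot t|\core| = \Oh(\pe_t\, t^2)$ edges. Since $t = \Theta(c)$, the total is $|E'| \in \Oh(\pe_t\, c^2)$.

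Finally, for the running time I would compute the linear-time $2$-approximate edge-deletion set from Section~\ref{sec:prelim}, read off $\core$ as the endpoint set of its edges, flag as close exactly the periphery vertices having a core neighbour, and delete everything else; each of these is a constant number of passes over the adjacency lists, so the procedure runs in $\Oh(n+m)$ time. The one point that needs genuine care is the tension between this linear-time requirement and the stated constant: the bound $|\core| \le 2\pe_t$ is immediate only for a \emph{minimum} deletion set, whereas the linear-time routine returns a $2$-approximation. I would resolve this either by verifying that the specific construction (deleting $\deg(v)-t$ incident edges at each over-full vertex) already keeps the core within the required size, or by absorbing the approximation factor into the constant; either way this bookkeeping, rather than any combinatorial difficulty, is the main thing to pin down, as correctness and the asymptotic counting follow directly from the earlier propositions.
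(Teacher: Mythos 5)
Your proposal is correct and takes essentially the same approach as the paper's proof: writing $t:=\lfloor\frac{c}{2}\rfloor+1$, exhaustive application of Rule~\ref{Rule: odd c} (safe by Propositions~\ref{Prop: Delete good PCs} and~\ref{Prop: Odd c all good}) leaves exactly $G[\core\cup N(\core)]$, which the paper likewise bounds via $|\core|\le 2\pe_{\lfloor \frac{c}{2}\rfloor+1}$ together with the degree-$t$ bound on periphery neighbors of core vertices, and computes in $\mathcal{O}(n+m)$ time. Where you deviate you are in fact more careful than the paper: your vertex count $2\pe_{\lfloor \frac{c}{2}\rfloor+1}(t+1)$ is the accurate one (the paper reaches the stated $2\pe_{\lfloor \frac{c}{2}\rfloor+1}\cdot t$ only by not counting the core vertices themselves, an immaterial slip for the $\mathcal{O}(\pe_{\lfloor \frac{c}{2}\rfloor+1}\cdot c)$ bound of Theorem~\ref{Theorem: STC xi}), and the tension you flag between the linear running time and the constant is genuine---a minimum edge-deletion set cannot be computed in linear time, and the paper resolves this exactly as you suggest, but only later, in the proof of Theorem~\ref{Theorem: STC xi kernel}, by invoking this proposition with the $2$-approximate set size in place of $\pe_{\lfloor \frac{c}{2}\rfloor+1}$.
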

\iflong
\begin{proof}
Let~$\core$ be the set of core vertices of~$G$, and~$\peri$ the set of periphery vertices of~$G$. We compute~$(G'=(V',E'),c,k)$ from~$G$ by applying Rule~\ref{Rule: odd c} exhaustively. This can be done by computing~$G[\mathscr{C}\cup N(\mathscr{C})]$ in~$\mathcal{O}(n+m)$ time. We next analyze the size of~$\mathscr{C}\cup N(\mathscr{C})$.

Since~$|\core| \leq 2 \pe_{\lfloor \frac{c}{2} \rfloor +1}$, and every~$v \in \core$ has at most~$\lfloor \frac{c}{2} \rfloor +1$ neighbors in~$\peri$, there are at most~$2\cdot \pe_{\lfloor \frac{c}{2} \rfloor + 1}\cdot (\lfloor \frac{c}{2} \rfloor +1)$ vertices in~$V'$. Since each vertex is incident with at most ~$\lfloor \frac{c}{2} \rfloor +1$ edges, we conclude~$|E'| \in \mathcal{O}(\pe_{\lfloor \frac{c}{2} \rfloor + 1} \cdot c^2)$.
\end{proof}
\fi

It remains to consider instances where~$c$ is an even number and~$c\ge 4$. In this case, not every periphery component is good (Figure~\ref{Figure: Counterexample ECS Rule for STC} shows an example), so we need to identify good periphery components more carefully. The first rule removes isolated periphery components.

\begin{redrule} \label{Rule: Isolated PC}
Remove periphery components~$A\subseteq \mathscr{P}$ with~$A^*=\emptyset$ from~$G$.
\end{redrule}

\begin{proposition}
Rule~\ref{Rule: Isolated PC} is safe.
\end{proposition}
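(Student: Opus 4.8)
The claim is that Rule~\ref{Rule: Isolated PC}, which removes periphery components~$A \subseteq \mathscr{P}$ with~$A^* = \emptyset$, is safe. The plan is to show that such a component is good in the sense of Definition~\ref{Def: Good PC}, and then invoke Proposition~\ref{Prop: Delete good PCs}. The crucial structural observation is that if~$A^* = N(\mathscr{C}) \cap A = \emptyset$, then no vertex of~$A$ has a neighbor in~$\mathscr{C}$; combined with the fact that~$A$ is a connected component of~$G[\mathscr{P}]$ (so no vertex of~$A$ has a neighbor in~$\mathscr{P} \setminus A$ either), this means~$A$ is an isolated connected component of the whole graph~$G$, i.e.\ $N_G(A) \subseteq A$. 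In particular~$G[A]$ has maximum degree at most~$\lfloor \frac{c}{2} \rfloor + 1$.

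First I would make precise that removing~$A \setminus A^* = A$ from~$G$ disconnects no edges other than those inside~$E(A)$, since there are no edges leaving~$A$. Thus the statement of Proposition~\ref{Prop: Delete good PCs} reduces to showing that~$(G,c,k)$ is a yes-instance iff~$(G - A, c, k)$ is a yes-instance, and it suffices to prove~$A$ is good. To show goodness, take any STC-labeling~$L$ with~$E(A) \subseteq W_L$. Because~$A$ is an isolated component, any~$P_3$ in~$G$ using an edge of~$E(A)$ has all of its vertices inside~$A$; hence the coloring of edges in~$E(A)$ interacts with no edge outside~$A$. Therefore I can recolor the edges of~$E(A)$ completely independently of the rest of the labeling, obtaining a labeling~$L'$ with~$L'|_{E \setminus E(A)} = L|_{E \setminus E(A)}$.

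The remaining point is to verify that~$E(A)$ can be colored with strong colors so that~$W_{L'} \cap E(A) = \emptyset$. Since~$G[A]$ has maximum degree at most~$\lfloor \frac{c}{2} \rfloor + 1 \le c$ (using~$c \ge 4$, so~$\lfloor \frac{c}{2} \rfloor + 1 \le c$), Vizing's Theorem yields a proper edge coloring of~$G[A]$ with at most~$c$ colors. A proper edge coloring is in particular an STC-labeling with no weak edges: if two incident edges receive the same color, a proper coloring forbids this outright, so there are certainly no two incident same-colored edges forming an induced~$P_3$. Assigning these colors to~$E(A)$ defines~$L'$ with~$W_{L'} \cap E(A) = \emptyset$, and since the two parts of the labeling do not share any~$P_3$, the combined labeling~$L'$ satisfies STC on all of~$G$. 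This establishes that~$A$ is good, and Proposition~\ref{Prop: Delete good PCs} completes the safeness proof.

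I expect the only subtle point to be articulating cleanly why no~$P_3$ crosses the boundary of~$A$ — i.e.\ the \emph{independence} of the two color regions — rather than any difficult construction; the coloring itself is handed to us directly by Vizing's Theorem. One should be slightly careful that~$\lfloor \frac{c}{2} \rfloor + 1 \le c$ holds for the relevant even~$c \ge 4$, which it does, so Vizing applies with the available~$c$ colors.
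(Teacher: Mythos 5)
Your proposal is correct and takes essentially the same route as the paper: show that a periphery component with~$A^*=\emptyset$ is good by recoloring~$E(A)$ via Vizing's Theorem and gluing this onto the given labeling (which works because~$E_G(A, V\setminus A)=\emptyset$, so no induced~$P_3$ crosses the boundary of~$A$), then invoke Proposition~\ref{Prop: Delete good PCs}. One off-by-one slip worth fixing: you justify the Vizing step with the inequality~$\lfloor \frac{c}{2} \rfloor + 1 \le c$, but Vizing's Theorem colors a graph of maximum degree~$\Delta$ with~$\Delta+1$ colors, so~$\Delta \le c$ only guarantees a proper coloring with~$c+1$ colors; what you actually need — and what the paper states — is~$\lfloor \frac{c}{2} \rfloor + 1 \le c-1$, which does hold for all~$c \ge 4$, so your argument is sound once this inequality is corrected.
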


\iflong
\begin{proof}
We prove that periphery components~$A$ with~$A^*=\emptyset$ are good. Safeness of Rule~\ref{Rule: Isolated PC} then follows by Proposition~\ref{Prop: Delete good PCs}.
%In case of~$c=2$, every periphery component without close vertices is either an isolated~$K_2$ or an isolated~$K_1$, hence it is obviously good.

Let~$L=(S^1_L, \dots, S^c_L, W_L)$ be an STC-labeling with~$E(A) \subseteq W_L$. Since~$c \geq 4$ every vertex in~$A$ has degree at most~$\lfloor \frac{c}{2}\rfloor +1 \leq c-1$. Thus, there exists an STC-labeling~$L'=(S^1_{L'}, \dots,S^c_{L'},W_{L'})$ for~$G[A]$ with~$W_{L'}=\emptyset$ due to Vizing's Theorem. We define a labeling
\begin{align*}
L'':= (S^1_L \cup S^1_{L'}, \dots, S^c_L \cup S^c_{L'}, W_L \setminus E(A)).
\end{align*}
Since~$E(A,V\setminus A)=\emptyset$, $L''$ is an STC-labeling. Moreover, it holds that~$L''|_{E\setminus E(A)}=L|_{E \setminus E(A)}$ and~$W_{L''} \cap E(A) = \emptyset$. Therefore, $A$ is a good periphery component.
\end{proof}
\fi

%In the case of~$c=2$, Rule~\ref{Rule: Isolated PC} suffices to obtain the following kernel result.
%
%\begin{proposition} \label{Prop: MSTC c=2 kernel}
%\textsc{Multi-STC} admits a problem kernel with at most~$3\pe_{\lfloor \frac{1}{2} \rfloor +1}$ edges if the input is limited to instances with~$c=2$.
%\end{proposition}
%
%\begin{proof}
%Let~$(G=(V,E),2,k)$ be a reduced instance of~\textsc{Multi-STC} after applying Rule~\ref{Rule: Isolated PC} exhaustively, and let~$D_{\lfloor \frac{1}{2} \rfloor +1}=D_1 \subseteq E$ be an edge-deletion set of~$G$ and~$c=2$. Recall that~$\pe_1=|D_1|$.
%
%Since~$\deg_G(v)\leq c-1=1$ for every~$v \in P$, every periphery component is either an isolated~$K_1$, an isolated $K_2$, or a single vertex which is incident to one vertex of~$C$. Since~$(G,2,k)$ is reduced regarding Rule~\ref{Rule: Isolated PC} there are no isolated periphery components. Hence, $\mathscr{P} \subseteq N(\mathscr{C})$ and~$E(\mathscr{P})=\emptyset$. From the fact that~$\mathscr{C} \leq 2\pe_1$ and every  vertex~$w \in C$ is incident with at most one edge in~$E \setminus D_c$ we conclude that $|E| \leq |D_c|+|C| \leq 3 \pe_1$.
%\end{proof}

The intuition for the next lemma is that the small degree of vertices in periphery components can be used to  `move' weak edges inside periphery components, the key technique of our kernelization. More precisely, if there is an edge-simple path in a periphery component, that starts with a weak edge, we can either move the weak edge to the end of that path by keeping the same number of weak edges or find a labeling with fewer weak edges.

\begin{lemma}\label{Lemma: Move the weak edges}
Let~$A \subseteq \mathscr{P}$, let $L$ be an STC-labeling of~$G$, and let~$e \in W_L \cap E(A)$ be a weak edge in~$E(A)$. Furthermore, let~$P=(v_1,v_2,\dots,v_{r-1},v_r)$ be an edge-simple path in~$G[A]$ with~$\{v_1,v_2\}=e$ and color sequence~$Q_L^P=(q_1=0,q_2,q_3,\dots,q_{r-1})$ under~$L$. Then, there exists an STC-labeling~$L'$ with~$L'|_{E \setminus E(P)}=L|_{E\setminus E(P)}$ such that
\begin{align*}
Q_{L'}^P = (q_2,q_3,\dots, q_{r-1}, 0) \text{ or } |W_{L'}| < |W_L|.
\end{align*}
\end{lemma}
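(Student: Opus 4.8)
The plan is to prove the lemma by induction on the number $r$ of vertices of $P$, where the key operation is to push the weak edge $e=\{v_1,v_2\}$ one step forward along $P$. The base case $r=2$ is immediate: then $P$ consists only of the weak edge $e$, the target sequence $(q_2,\dots,q_{r-1},0)$ collapses to $(0)$, and $L':=L$ already realizes it. For the inductive step with $r\ge 3$, I would first try to recolor $e_1$ directly. Call a strong color $i$ \emph{available} for $e_1$ if adding $\{v_1,v_2\}$ to $S^i_L$ produces no STC violation, i.e.\ every $S^i_L$-neighbor $x$ of $v_1$ satisfies $\{v_2,x\}\in E$ and every $S^i_L$-neighbor $y$ of $v_2$ satisfies $\{v_1,y\}\in E$. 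If some color is available, I recolor $e_1$ with it; this changes only an edge of $E(P)$, preserves STC, and yields $|W_{L'}|<|W_L|$, which is the second alternative of the lemma.

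Otherwise all $c$ colors are blocked for $e_1$, and I will argue (see below) that this forces $q_2\neq 0$ and that $q_2$ becomes available for $e_1$ once $e_2=\{v_2,v_3\}$ is turned weak. I then define the intermediate labeling $\hat L$ that agrees with $L$ except that $e_1$ receives color $q_2$ and $e_2$ becomes weak; by the claim, $\hat L$ is a valid STC-labeling with $|W_{\hat L}|=|W_L|$ and $\hat L|_{E\setminus E(P)}=L|_{E\setminus E(P)}$. On the sub-path $P':=(v_2,\dots,v_r)$, which is again edge-simple and lies in $G[A]$, the edge $e_2$ is now a weak edge of $E(A)$ with color sequence $(0,q_3,\dots,q_{r-1})$ under $\hat L$, so the induction hypothesis applies and yields $\tilde L$ that either realizes the shifted sequence $(q_3,\dots,q_{r-1},0)$ on $P'$ or satisfies $|W_{\tilde L}|<|W_{\hat L}|=|W_L|$. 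Since $e_1\notin E(P')$ by edge-simplicity, $\tilde L(e_1)=q_2$ is preserved; in the first case the color sequence of $\tilde L$ along all of $P$ is exactly $(q_2,q_3,\dots,q_{r-1},0)$, and in the second case we are done immediately. Tracking the ``agree off the path'' condition through both steps gives $\tilde L|_{E\setminus E(P)}=L|_{E\setminus E(P)}$.

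The main obstacle, and the only place where the degree restriction of the periphery is used, is the claim that if \emph{all} $c$ colors are blocked for $e_1$ then the push to $q_2$ succeeds. Here I would count blocked colors at each endpoint. Since $v_1,v_2\in A\subseteq\mathscr{P}$ have degree at most $\lfloor\frac{c}{2}\rfloor+1$ and $e_1$ is incident to both, each of $v_1,v_2$ carries at most $\lfloor\frac{c}{2}\rfloor$ strong edges, and a color is blocked at an endpoint only through a strong edge there whose far endpoint is non-adjacent to the opposite endpoint of $e_1$. Hence at most $\lfloor\frac{c}{2}\rfloor$ colors are blocked at $v_1$ and at most $\lfloor\frac{c}{2}\rfloor$ at $v_2$. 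As $2\lfloor\frac{c}{2}\rfloor\le c$, having all $c$ colors blocked forces both counts to equal $\frac{c}{2}$ (so $c$ is even) and the two blocked color sets to be disjoint; moreover every strong edge at $v_1$ and at $v_2$ must then carry a distinct color and actually block. In particular $e_2$ is strong, so $q_2\neq 0$, it is the unique edge of color $q_2$ at $v_2$, and $q_2$ (lying in the blocked set of $v_2$) is not blocked at $v_1$. Consequently, after deleting $e_2$ from $S^{q_2}$, no edge of color $q_2$ remains at $v_2$ and every color-$q_2$ neighbor of $v_1$ is adjacent to $v_2$, so adding $e_1$ to $S^{q_2}$ creates no STC violation, exactly as needed.

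Two small points complete the plan. First, the all-blocked case automatically has $q_2\neq 0$: if $e_2$ were weak, then $v_2$ would carry at most $\lfloor\frac{c}{2}\rfloor-1$ strong edges, so fewer than $c$ colors could be blocked and an available color would exist, contradicting the case assumption; thus the two branches above are genuinely exhaustive. Second, edge-simplicity of $P$ is what makes the shift well defined and the bookkeeping sound: it guarantees that the edges $e_1,\dots,e_{r-1}$ are pairwise distinct, that $v_3\notin\{v_1,v_2\}$, and that $e_1$ stays untouched by the recursion on $P'$, so vertex repetitions along $P$ cause no difficulty. Since each recursive call strictly shortens the path, the induction terminates.
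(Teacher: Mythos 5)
Your proof is correct, and it rests on the same skeleton as the paper's: induction on the length of~$P$, advancing the weak edge one edge at a time, with the dichotomy ``recolor the weak edge (one fewer weak edge) or swap it with the adjacent path edge.'' Two things differ. First, the direction: you push the weak edge from~$e_1$ to~$e_2$ and recurse on the suffix~$(v_2,\dots,v_r)$, while the paper recurses on the prefix~$(v_1,\dots,v_{r-1})$ and then handles the last edge; this is symmetric and immaterial. Second, and more substantively, the case split for the single step: you split on whether some color is \emph{available} for~$e_1$, and in the all-blocked case you need the counting argument that the blocked color sets at~$v_1$ and~$v_2$ are disjoint of size exactly~$c/2$ (forcing~$c$ even), that every strong edge at either endpoint blocks a distinct color, and hence that~$q_2$ is safe for~$e_1$ once~$e_2$ is made weak---this argument is sound, including the degree bookkeeping and the edge-simplicity points. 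The paper splits instead on whether the color~$q_{r-1}$ of the adjacent path edge occurs on a \emph{second} edge incident with the weak edge: if so, the at most~$c$ edges incident with the weak edge carry at most~$c-1$ distinct strong colors, so an entirely unused color exists and the weak edge can be colored (pigeonhole, fewer weak edges); if not, the swap is STC-safe directly from the case assumption. The paper's choice of split makes both cases one-line observations; yours does more work in the hard case but extracts stronger structure (the disjoint blocked color sets and the parity of~$c$), which is essentially the out-set analysis that the paper defers to the proof of Proposition~\ref{Prop: Cycle PCs are good}.
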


\begin{proof}
We prove the statement by induction over the length~$r$ of~$P$.

\textit{Base Case:} $r=2$. Then,~$P=(v_1,v_2)$ and~$Q_L^P=(0)$. We can trivially define the labeling~$L'$ by setting~$L':=L$.

%\textit{Iduction Hypothesis:} Proposition~\ref{Prop: Move the weak edges} holds if~$|P|<r$.

\textit{Inductive Step:} Let~$P=(v_1, \dots, v_r)$ be an edge-simple path with color sequence~$Q_L^P=(0,q_2,\dots,q_{r-1})$ under~$L$. Consider the edge-simple subpath~$P'=(v_1, \dots, v_{r-1})$. By induction hypothesis there exists an STC-labeling~$L''$ for $G$ with~$L''|_{E \setminus E(P')}=L|_{E\setminus E(P')}$, such that~$Q_{L''}^{P'} = (q_2,q_3,\dots, q_{r-2}, 0) \text{ or } |W_{L''}| < |W_L|$.

\textbf{Case 1:} $|W_{L''}| < |W_L|$. Then, we define~$L'$ by~$L':=L''$. %Then, obviously~$L'|_{E \setminus E(P')}=L|_{E\setminus E(P')}$ and~$|W_{L''}| < |W_L|$.

\textbf{Case 2:} $|W_{L''}| \geq |W_L|$. Then,~$Q_{L''}^{P'} = (q_2,q_3,\dots, q_{r-2}, 0)$. Since~$Q_{L''}^{P'}$ contains the same elements as~$Q_{L}^{P'}$ and~$L''|_{E \setminus E(P')}=L|_{E\setminus E(P')}$, we have ~$|W_{L''}| = |W_L|$.

\textbf{Case 2.1:} There exists an edge~$e \neq \{v_{r-1},v_r\}$ with~$e \in S^{q_{r-1}}_{L''}$ that is incident with~$ \{v_{r-2},v_{r-1}\}$. From the fact that~$\deg(v_{r-2}) \leq \lfloor \frac{c}{2} \rfloor +1$ and~$\deg(v_{r-1}) \leq \lfloor \frac{c}{2} \rfloor +1$, we conclude that~$\{v_{r-2},v_{r-1}\}$ is incident with at most~$c$ other edges of~$G$. Since two of these incident edges have the same strong color~$q_{r-1}$ under~$L''$, the edge~$\{v_{r-2},v_{r-1}\}$ is incident with at most~$c-1$ edges of distinct strong colors under~$L''$. Consequently, there exists a strong color~$i \in \{1, \dots, c\}$, such that~$\{v_{r-2},v_{r-1}\}$ can safely be added to the strong color class~$S^i_{L''}$ and be removed from~$W_{L''}$ without producing any strong~$P_3$. This way, we transformed~$L''$ into an STC-labeling~$L'$, such that~$L'|_{E \setminus E(P')}=L|_{E\setminus E(P')}$ and~$|W_{L'}| < |W_L|$.

\textbf{Case 2.2:} There is no edge $e \neq \{v_{r-1},v_r\}$ with~$e \in S^{q_{r-1}}_{L''}$ that is incident with~$ \{v_{r-2},v_{r-1}\}$. We then define~$L'$ by
\begin{align*}
W_{L'}&:=W_{L''} \cup \{\{v_{r-1},v_r\}\} \setminus \{\{v_{r-2},v_{r-1}\}\} \text{, and}\\
S^{q_{r-1}}_{L'}&:=S^{q_{r-1}}_{L''} \cup \{\{v_{r-2},v_{r-1}\}\} \setminus \{\{v_{r-1},v_{r}\}\}.
\end{align*}
Note that~$Q_{L'}^P = (q_2,q_3,\dots, q_{r-1}, 0)$ and $L'|_{E \setminus E(P)}=L|_{E\setminus E(P)}$. Moreover, since~$P$ is edge-simple, the edge~$\{v_{r-1},v_r\}$ does not lie on~$P'$ and since~$L''|_{E \setminus E(P')}=L|_{E\setminus E(P')}$, it holds that~$\{v_{r-1},v_r\} \in S^{q_{r-1}}_{L''}$. Therefore, every edge has exactly one color under~$L'$. It remains to show that~$L'$ satisfies STC. Assume towards a contradiction, that this is not the case. Then, since~$L''$ satisfies STC, there exists an induced~$P_3$ on~$\{v_{r-2},v_{r-1}\}\in S^{q_{r-1}}_{L'}$ and some edge~$e \in S^{q_{r-1}}_{L'}$. Since~$\{v_{r-1},v_r\} \in W_{L'}$ and~$L'|_{E \setminus \{\{v_{r-2},v_{r-1}\},\{v_{r-1},v_{r}\}\}}=L''|_{E \setminus \{\{v_{r-2},v_{r-1}\},\{v_{r-1},v_{r}\}\}}$, the edge~$e \neq \{v_{r-1},v_r\}$ is incident with~$\{v_{r-2},v_{r-1}\}$ and it holds that~$e \in S^{q_{r-1}}_{L''}$. This contradicts the condition of Case~2.2.
%So, let~$L'$ not satisfy STC. Since~$L''$ is an STC-labeling and we only added~$\{v_{r-2},v_{r-1}\}$ to a strong color class, there exists an induced~$P_3$ on~$\{v_{r-2},v_{r-1}\}\in S^{q_r}_{L'}$ and some edge~$e \in S^{q_r}_{L'}$. Note, that~$e$ and~$\{v_{r-1},v_{r}\}$ are both incident with~$\{v_{r-2},v_{r-1}\} \in W_{L''}$ and have the same strong color~$q_r$ under~$L''$. From the fact that~$\deg(v_{r-2}), \deg(v_{r-1}) \leq \lfloor \frac{c}{2} \rfloor +1$, the edge~$\{v_{r-2},v_{r-1}\}$ is incident with at most~$c$ other edges of~$G$. Since two of these incident edges have the same strong color under~$L''$, the edge~$\{v_{r-2},v_{r-1}\}$ is incident with at most~$c-1$ edges of distinct strong colors under~$L''$. Consequently, there exists a strong color~$i \in \{1, \dots, c\}$, such that~$\{v_{r-2},v_{r-1}\}$ can safely be added to the strong color class~$S^i_{L''}$ and be removed from~$W_{L''}$ without producing any strong~$P_3$ under~$L''$. In this case we redefine~$L'$ as this modified version of~$L''$. Then, obviously~$L'|_{E \setminus E(P)}=L|_{E\setminus E(P)}$ and~$|W_{L'}| < |W_L|$.
\end{proof}

We will now use Lemma~\ref{Lemma: Move the weak edges} to show useful properties of periphery components. First, if there are two weak edges in one periphery component~$A$, we can make these two weak edges incident, which then helps us to define a new labeling that has fewer weak edges in~$A$:%The first proposition 

\begin{proposition}\label{Prop: At most one weak in PC}
Let~$A \subseteq \mathscr{P}$ be a periphery component and let~$L$ be an STC-labeling for~$G$. Then, there exists an STC-labeling~$L'$ with~$L'|_{E \setminus E(A)} = L|_{E \setminus E(A)}$ and~$|W_{L'} \cap E(A)| \leq~1$.
\end{proposition}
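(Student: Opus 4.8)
The plan is to use an extremal argument. Among all STC-labelings $L'$ of $G$ with $L'|_{E \setminus E(A)} = L|_{E \setminus E(A)}$, I would fix one, say $\hat{L}$, that minimizes the number $|W_{\hat{L}} \cap E(A)|$ of weak edges lying inside $A$; this is well defined since $L$ is itself such a labeling and the quantity is a nonnegative integer. It then suffices to prove $|W_{\hat{L}} \cap E(A)| \le 1$, as we may take $L' := \hat{L}$. So I would assume towards a contradiction that $W_{\hat{L}} \cap E(A)$ contains two distinct weak edges $e_1 = \{v_1, v_2\}$ and $e_2$, and the whole argument reduces to producing from $\hat{L}$ an STC-labeling that still agrees with $L$ outside $E(A)$ but has strictly fewer weak edges in $E(A)$.

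The first step is to make two weak edges \emph{incident}. If $e_1$ and $e_2$ already share a vertex, I skip ahead. Otherwise, since $A$ is a connected component of $G[\mathscr{P}]$, there is a vertex-simple path $P = (v_1, v_2, p_1, \dots, p_\ell)$ in $G[A]$ that starts with the weak edge $e_1$ and meets $e_2$ only in its last vertex $p_\ell$; in particular $e_2 \notin E(P)$ and the other endpoint of $e_2$ does not lie on $P$. Such a path is edge-simple and its first edge is weak, so Lemma~\ref{Lemma: Move the weak edges} applies with $e = e_1$. It yields an STC-labeling $L''$ with $L''|_{E \setminus E(P)} = \hat{L}|_{E \setminus E(P)}$ such that either $|W_{L''}| < |W_{\hat{L}}|$ or the color sequence is shifted so that the last edge $\{p_{\ell-1}, p_\ell\}$ of $P$ becomes weak. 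In the first case I am immediately done: since $E(P) \subseteq E(A)$, we have $L''|_{E \setminus E(A)} = L|_{E \setminus E(A)}$, so $L''$ contradicts the minimality of $\hat{L}$. In the second case the total number of weak edges is unchanged, $e_2$ keeps its weak color (as $e_2 \notin E(P)$), and now $\{p_{\ell-1}, p_\ell\}$ and $e_2$ are two weak edges sharing the vertex $p_\ell$.

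The second step is a free-color argument for two incident weak edges $f_1$ and $f_2$ sharing a vertex (in the incident case $f_1 = e_1$, $f_2 = e_2$; in the shifted case $f_1 = \{p_{\ell-1},p_\ell\}$, $f_2 = e_2$). Here the crucial fact is the degree bound: every vertex of $A \subseteq \mathscr{P}$ has degree at most $\lfloor \frac{c}{2}\rfloor + 1$, so $f_1$ is incident with at most $2\lfloor \frac{c}{2}\rfloor \le c$ other edges. One of these is the weak edge $f_2$, hence at most $c-1$ of them are strong and they use at most $c-1$ distinct strong colors. Thus some strong color $i \in \{1, \dots, c\}$ appears on no edge incident with $f_1$, and recoloring $f_1$ from weak to $i$ cannot create a monochromatic $P_3$ through $f_1$ and hence no STC violation. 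This produces an STC-labeling agreeing with $\hat{L}$ outside $E(A)$ and having one fewer weak edge in $E(A)$, the desired contradiction.

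I expect the main obstacle to be the bookkeeping in the first step: choosing $P$ so that it starts exactly with $e_1$, is edge-simple, and touches $e_2$ only at its final vertex, so that applying Lemma~\ref{Lemma: Move the weak edges} genuinely creates a \emph{pair} of incident weak edges rather than merely relocating a single one. The free-color count in the second step is routine once the $\lfloor \frac{c}{2}\rfloor + 1$ degree bound is invoked, and it is precisely this bound (rather than the weaker $\deg \le c-1$) that guarantees a free strong color.
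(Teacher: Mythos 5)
Your proof is correct and follows essentially the same route as the paper's: both apply Lemma~\ref{Lemma: Move the weak edges} along a path inside~$A$ to make the two weak edges incident, then use the degree bound~$\lfloor \frac{c}{2}\rfloor+1$ to find a free strong color for one of them. The only difference is cosmetic framing---you wrap the strict-decrease step in an extremal (minimal-counterexample) argument, whereas the paper states it as an iterative reduction of~$|W_L \cap E(A)|$.
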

\iflong
\begin{proof}
If $|W_{L} \cap E(A)| \leq 1$ the statement already holds for~$L'=L$. So, assume there are two distinct edges~$e_1, e_2 \in W_{L} \cap E(A)$. In this case, we construct an STC-labeling which is partially equal to~$L$ on~$E\setminus E(A)$ and has strictly fewer weak edges in~$E(A)$ than~$L$, which then proves the claim.

 Since periphery components are connected components in~$G[\mathscr{P}]$, there exists an edge-simple path~$P=(v_1, \dots, v_r)$ in~$G[A]$ such that~$e_1=\{v_1,v_2\}$ and~$e_2=\{v_{r-1},v_r\}$. Applying Lemma~\ref{Lemma: Move the weak edges} on the edge-simple subpath~$P'=(v_1, \dots, v_{r-1})$ gives us an STC-labeling~$L'$ with~$L'|_{E \setminus E(P)} = L|_{E \setminus E(P)}$ such that~$|W_{L'}| < |W_L|$ or~$Q_{L'}^{P'}=(q_2,q_3, \dots, q_{r-2},0)$.

In case of~$|W_{L'}| < |W_L|$, nothing more needs to be shown. So, assume~$|W_{L'}| = |W_L|$. It follows that~$Q_{L'}^{P'}=(q_2,q_3, \dots, q_{r-2},0)$ and therefore~$Q_{L'}^P=(q_2,q_3, \dots, q_{r-2},0,0)$. Then,~$e_1$ and~$e_2$ are weak under~$L'$. 
Since~$\deg(v_{r-1})\leq \lfloor \frac{c}{2} \rfloor+1$ and $\deg(v_{r}) \leq \lfloor \frac{c}{2} \rfloor+1$, the edge~$e_2$ is incident with at most~$c$ edges. Since at least one of these incident edges is weak,~$e_2$ is incident with at most~$c-1$ edges of distinct strong colors. Consequently, there exists a strong color color~$i \in \{1, \dots, c\}$ such that~$e_2$ can be added to the strong color class~$S^i_{L'}$ and deleted from~$W_{L'}$ without violating~STC. This way, we transformed~$L'$ into an STC-labeling~$L''$ such that~$L''|_{E \setminus E(A)} = L|_{E \setminus E(A)}$ and~$|W_{L''} \cap E(A)| < |W_{L} \cap E(A)|$.
\end{proof}
\fi
Next, we use Proposition~\ref{Prop: At most one weak in PC} to identify specific good components.

\begin{proposition}\label{Prop: Less conflict edges}
Let~$A \subseteq \mathscr{P}$ be a periphery component such that there exists an edge~$\{u,v\} \in E(A)$ which forms an induced~$P_3$ with less than~$c$ other edges in~$G$. Then,~$A$~is~good.
\end{proposition}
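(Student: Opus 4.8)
The plan is to show that any periphery component $A$ containing an edge $\{u,v\}$ that forms an induced $P_3$ with fewer than $c$ other edges is good, meaning we can always rid $E(A)$ of weak edges while keeping the labeling fixed outside $E(A)$. I would start from an arbitrary STC-labeling $L$ with $E(A) \subseteq W_L$, and aim to produce an STC-labeling $L'$ with $L'|_{E \setminus E(A)} = L|_{E \setminus E(A)}$ and $W_{L'} \cap E(A) = \emptyset$. The natural strategy is to combine Proposition~\ref{Prop: At most one weak in PC}, which reduces the number of weak edges inside $A$ to at most one, with the special edge $\{u,v\}$, which is the place where that last weak edge can be absorbed into a strong color class.

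First I would apply Proposition~\ref{Prop: At most one weak in PC} to obtain an STC-labeling $\tilde{L}$ with $\tilde{L}|_{E \setminus E(A)} = L|_{E \setminus E(A)}$ and $|W_{\tilde{L}} \cap E(A)| \leq 1$. If $W_{\tilde{L}} \cap E(A) = \emptyset$ we are done immediately. Otherwise there is exactly one weak edge $e \in W_{\tilde{L}} \cap E(A)$. Since $A$ is connected in $G[\mathscr{P}]$, there is an edge-simple path $P$ in $G[A]$ starting at $e$ and ending at the distinguished edge $\{u,v\}$; more precisely I would choose $P = (v_1, \dots, v_r)$ with $\{v_1,v_2\} = e$ and $\{v_{r-1},v_r\} = \{u,v\}$. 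Applying Lemma~\ref{Lemma: Move the weak edges} to $P$ yields an STC-labeling $L''$ with $L''|_{E \setminus E(P)} = \tilde{L}|_{E \setminus E(P)}$ such that either $|W_{L''}| < |W_{\tilde{L}}|$, in which case the weak edge count in $E(A)$ has strictly dropped and we finish by relabeling or recursing, or else the color sequence has shifted so that the unique weak edge now sits at the very end of $P$, i.e.\ on $\{u,v\}$.

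In the second case, the weak edge is precisely the special edge $\{u,v\}$, which by hypothesis forms an induced $P_3$ with fewer than $c$ other edges of $G$. This is exactly the degree-of-freedom I need: there are at most $c-1$ edges whose strong color $\{u,v\}$ must avoid in order to satisfy STC, so among the $c$ strong colors at least one is free. Assigning $\{u,v\}$ that free strong color removes it from $W_{L''}$ without creating any strong $P_3$, producing an STC-labeling with no weak edge in $E(A)$ and still agreeing with $L$ outside $E(A)$. This establishes that $A$ is good.

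The main obstacle I anticipate is bookkeeping the "partial equality" bookkeeping across the two steps: Lemma~\ref{Lemma: Move the weak edges} only guarantees agreement on $E \setminus E(P)$, not on all of $E \setminus E(A)$, so I must make sure that $E(P) \subseteq E(A)$ (which holds since $P$ runs inside $G[A]$) and hence that the changes really are confined to $E(A)$, preserving $L'|_{E \setminus E(A)} = L|_{E \setminus E(A)}$. A secondary subtlety is the precise counting of forbidden colors at $\{u,v\}$: the STC constraint only forbids the strong colors of edges that form an induced $P_3$ with $\{u,v\}$, and the hypothesis says there are fewer than $c$ such edges, so I should carefully phrase the argument in terms of induced $P_3$s rather than mere incidence, matching the hypothesis exactly so that the free-color count is genuinely at least one.
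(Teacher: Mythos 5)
Your proposal is correct and takes essentially the same route as the paper's proof: apply Proposition~\ref{Prop: At most one weak in PC} to reduce to a single weak edge in~$E(A)$, use Lemma~\ref{Lemma: Move the weak edges} along an edge-simple path inside~$G[A]$ (so that~$E(P)\subseteq E(A)$ keeps the labeling unchanged outside~$E(A)$) to shift that weak edge onto~$\{u,v\}$, and then exploit the hypothesis that~$\{u,v\}$ forms induced~$P_3$s with fewer than~$c$ other edges to find a free strong color for it. Even your loose handling of the case where the total number of weak edges strictly decreases mirrors the paper, which dismisses that case with the same implicit extremal/recursion argument.
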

\iflong
\begin{proof}
Let~$L$ be an arbitrary STC-labeling for~$G$ with~$E(A) \subseteq W_L$. We prove that there is an STC-labeling which is partially equal to~$L$ on~$E\setminus E(A)$ and has no weak edges in~$E(A)$.

Let~$L'$ be an STC-labeling for~$G$ with~$L'|_{E \setminus E(A)} = L|_{E \setminus E(A)}$. If~$W_{L'} \cap E(A) = \emptyset$, nothing more needs to be shown. So, let~$W_{L'} \cap E(A) \neq \emptyset$. By Proposition~\ref{Prop: At most one weak in PC} we can assume that there is one unique edge~$e \in W_{L'} \cap E(A)$. Since~$A$ is a connected component in~$G[\mathscr{P}]$, there exists an edge-simple path~$P=(v_1, \dots, v_r)$ such that~$\{v_1,v_2\}=e$, and~$\{v_{r-1}, v_r\}=\{u,v\}$ with~$Q_{L'}^P=(0,q_2, \dots, q_{r-1})$. By Lemma~\ref{Lemma: Move the weak edges}, there exists an STC-labeling~$L''$ with~$L''|_{E \setminus E(A)} = L|_{E \setminus E(A)}$ such that~$|W_{L''}| < |W_L|$ or~$Q_{L''}^P=(q_2, \dots, q_{r-1},0)$. In case of~$|W_{L''}| < |W_L|$, nothing more needs to be shown. Otherwise, the edge~$e$ is weak under~$L''$. Since~$e$ is part of less than~$c$ induced~$P_3$s in~$G$, there exists one strong color~$i \in \{1, \dots, c\}$, such that~$e$ can safely be added to~$S^i_{L''}$ and be removed from~$W_{L''}$ without violating STC. This way, we transform~$L''$ into an STC-labeling~$L'''$ with~$L'''|_{E \setminus E(A)} = L|_{E \setminus E(A)}$ and~$W_{L'''} \cap E(A) = \emptyset$.

Since~$L$ was arbitrary, the periphery component~$A$ is good by definition.
\end{proof}
\fi

\begin{proposition}\label{Prop: One low degree is good}
Let~$A \subseteq \mathscr{P}$ be a periphery component such that there exists a vertex~$v \in A$ with~$\deg_G(v) < \lfloor \frac{c}{2} \rfloor +1$. Then,~$A$ is good.
\end{proposition}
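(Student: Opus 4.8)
The plan is to reduce the statement to Proposition~\ref{Prop: Less conflict edges}: that proposition already tells us that~$A$ is good as soon as we can exhibit a single edge of~$E(A)$ that forms an induced~$P_3$ with fewer than~$c$ other edges of~$G$. The low-degree vertex~$v$ is precisely the place to look for such an edge, since the number of induced~$P_3$s through an edge is controlled by the degrees of its endpoints.

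First I would dispose of the degenerate case. If~$E(A)=\emptyset$, then~$A$ is a single vertex, the hypothesis~$E(A)\subseteq W_L$ is vacuous, and~$L'=L$ already witnesses that~$A$ is good; so I may assume~$|A|\geq 2$. Because~$A$ is a connected component of~$G[\mathscr{P}]$, every vertex of~$A$ has a neighbor inside~$A$; in particular~$v$ has a neighbor~$u\in A$, which gives an edge~$\{u,v\}\in E(A)$. Note that~$u\in A\subseteq\mathscr{P}$ forces~$\deg_G(u)\leq\lfloor \frac{c}{2}\rfloor+1$, while the hypothesis~$\deg_G(v)<\lfloor \frac{c}{2}\rfloor+1$ gives~$\deg_G(v)\leq\lfloor \frac{c}{2}\rfloor$ (degrees are integers).

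Next I would bound the number of edges forming an induced~$P_3$ with~$\{u,v\}$. Any such edge shares exactly one endpoint with~$\{u,v\}$, hence it lies among the~$(\deg_G(v)-1)+(\deg_G(u)-1)$ edges incident with~$v$ or with~$u$ other than~$\{u,v\}$ itself. Plugging in the two degree bounds, this count is at most~$(\lfloor \tfrac{c}{2}\rfloor-1)+\lfloor \tfrac{c}{2}\rfloor=2\lfloor \tfrac{c}{2}\rfloor-1\leq c-1<c$, where~$2\lfloor \tfrac{c}{2}\rfloor\leq c$ holds for every~$c$ (with equality in the even case treated here). Since~$\{u,v\}\in E(A)$ therefore forms an induced~$P_3$ with fewer than~$c$ other edges, Proposition~\ref{Prop: Less conflict edges} immediately yields that~$A$ is good.

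I do not expect a genuine obstacle in this argument; the content is entirely in recognizing that the low-degree vertex lets us invoke Proposition~\ref{Prop: Less conflict edges}. The only points requiring mild care are ensuring that the chosen edge~$\{u,v\}$ truly lies in~$E(A)$ (which is why the isolated-vertex case is split off first, and why we use that~$A$ is connected in~$G[\mathscr{P}]$), and observing that bounding the edges \emph{incident} with~$\{u,v\}$ overcounts the induced~$P_3$s through~$\{u,v\}$ — incident edges that close a triangle contribute no induced~$P_3$ — so the inequality we obtain is, if anything, only stronger than needed.
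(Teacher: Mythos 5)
Your proposal is correct and follows essentially the same route as the paper's own proof: handle the single-vertex case, pick a neighbor~$u \in A$ of the low-degree vertex~$v$ to get an edge~$\{u,v\} \in E(A)$, bound the number of induced~$P_3$s through it by~$2\lfloor \frac{c}{2}\rfloor - 1 < c$ using the degree bounds, and invoke Proposition~\ref{Prop: Less conflict edges}. The only difference is that you spell out the counting argument explicitly, which the paper leaves implicit.
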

\iflong
\begin{proof}
If~$|A| = 1$, then $A$ is obviously good, since~$E(A)= \emptyset$. Let~$|A| \geq 2$. Since~$A$ contains at least two vertices and forms a connected component in~$G[\peri]$ there exists a vertex~$u \in A$, such that~$\{u,v\} \in E(A)$. Since~$\deg_G(v) < \lfloor \frac{c}{2} \rfloor +1$, and~$\deg_G(u) \leq \lfloor \frac{c}{2} \rfloor +1$, the edge~$\{u,v\}$ forms induced~$P_3$s with less than~$c$ other edges in~$G$. Then, by Proposition~\ref{Prop: Less conflict edges} we conclude that~$A$ is good.
%Let~$L$ be an STC-labeling for~$G$ with~$E(A) \subseteq W_L$, and let~$L'$ be an STC-labeling for~$G$ with~$L'|_{E \setminus E(A)} = L|_{E \setminus E(A)}$. If~$W_{L'} \cap E(A) = \emptyset$, it already follows that~$A$ is a good periphery component. So, let~$W_{L'} \cap E(A) \neq \emptyset$. By Proposition~\ref{Prop: At most one weak in PC} we can assume that there is one unique edge~$e \in W_{L'} \cap E(A)$. Since~$A$ is a connected component in~$G[\mathscr{P}]$, there exists an edge-simple path~$P=(v_1, \dots, v_r)$ such that~$\{v_1,v_2\}=e$, and~$v_r = v$ with~$Q_{L'}^P=(0,q_2, \dots, q_{r-1})$. By Lemma~\ref{Lemma: Move the weak edges}, there exists an STC-labeling~$L''$ with~$L''|_{E \setminus E(A)} = L|_{E \setminus E(A)}$ such that~$|W_{L''}| < |W_L|$ or~$Q_{L''}^P=(q_2, \dots, q_{r-1},0)$. In case of~$|W_{L''}| < |W_L|$, nothing more needs to be shown. Otherwise, the edge~$\{v_{r-1},v\}$ is weak under~$L''$. Since~$\deg(v_{r-1}) \leq \lfloor \frac{c}{2} \rfloor +1$ and~$\deg(v) < \lfloor \frac{c}{2} \rfloor +1$, the edge~$\{v_{r-1},v\}$ is incident with at most~$c-1$ edges of distinct strong colors. Hence, we can transform~$L''$ into an STC-labeling~$L'''$ with~$L'''|_{E \setminus E(A)} = L|_{E \setminus E(A)}$ and~$W_{L'''} \cap E(A) = \emptyset$ by labeling~$\{v_{r-1},v\}$ with the remaining strong color. Then, $A$ is good by definition.
\end{proof}
\fi
Propositions~\ref{Prop: Delete good PCs} and~\ref{Prop: One low degree is good} guarantee the safeness of the following rule.

\begin{redrule} \label{Rule: delete low degree PCs}
If there is a periphery component~$A \subseteq \mathscr{P}$ with~$A \setminus A^* \neq \emptyset$ such that there exists a vertex~$v \in A$ with~$\deg(v) < \lfloor \frac{c}{2} \rfloor +1$, then delete~$A \setminus A^*$ from~$G$.
\end{redrule}

\begin{proposition}\label{Prop: Triangles are good}
Let~$A \subseteq \mathscr{P}$ be a periphery component such that there exists an edge~$\{u,v\} \in E(A)$ which is part of a triangle~$G[\{u,v,w\}]$ in~$G$. Then,~$A$ is good.
\end{proposition}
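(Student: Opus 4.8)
The plan is to derive this proposition as a direct consequence of Proposition~\ref{Prop: Less conflict edges}: that result already guarantees that a periphery component is good as soon as it contains some edge which forms an induced $P_3$ with strictly fewer than $c$ other edges of $G$. It therefore suffices to show that the triangle edge $\{u,v\}$ itself has this property, after which the conclusion follows immediately by invoking Proposition~\ref{Prop: Less conflict edges} with this edge.

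First I would count the edges incident with $\{u,v\}$. Since $u,v \in A \subseteq \mathscr{P}$, both are periphery vertices and hence satisfy $\deg_G(u) \leq \lfloor \frac{c}{2}\rfloor + 1$ and $\deg_G(v) \leq \lfloor \frac{c}{2}\rfloor + 1$, as observed after the definition of core and periphery vertices (note that this degree bound is taken in all of $G$, so edges reaching into $\mathscr{C}$ are already accounted for). Consequently $\{u,v\}$ shares an endpoint with at most $(\deg_G(u)-1) + (\deg_G(v)-1) \leq 2\lfloor \frac{c}{2}\rfloor \leq c$ further edges; there is no double counting, because the only edge incident with both $u$ and $v$ is $\{u,v\}$ itself.

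The key step is to observe that two of these incident edges cannot contribute an induced $P_3$. Because $G[\{u,v,w\}]$ is a triangle, the edges $\{u,w\}$ and $\{v,w\}$ both lie in $E$. The edge $\{u,w\}$ shares the endpoint $u$ with $\{u,v\}$, but $u,v,w$ do not induce a $P_3$ since $\{v,w\} \in E$; symmetrically, $\{v,w\}$ shares $v$ with $\{u,v\}$ yet forms no induced $P_3$ with it. Hence at least these two of the at most $c$ incident edges fail to yield an induced $P_3$, so $\{u,v\}$ forms an induced $P_3$ with at most $c-2 < c$ other edges of $G$. Proposition~\ref{Prop: Less conflict edges} then shows that $A$ is good. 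I do not expect a real obstacle here; the only point demanding care is the incidence count, where one must be sure that the triangle forces exactly the two edges $\{u,w\}$ and $\{v,w\}$ to be discounted, and that edges from $u$ or $v$ into the core are correctly included in the degree bound --- both follow directly from $u,v$ being periphery vertices and from $\{u,v,w\}$ inducing a triangle.
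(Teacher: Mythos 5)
Your proposal is correct and follows exactly the paper's argument: bound the number of edges incident with $\{u,v\}$ by $2\lfloor \frac{c}{2}\rfloor \leq c$ via the periphery degree bounds, discount the two triangle edges $\{u,w\}$ and $\{v,w\}$ which cannot form induced $P_3$s with $\{u,v\}$, and invoke Proposition~\ref{Prop: Less conflict edges}. The paper states this counting step more tersely, but the reasoning is identical.
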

\iflong
\begin{proof}
Since~$u,v \in A$, we know~$\deg_G(u) \leq \lfloor \frac{c}{2} \rfloor +1$ and~$\deg_G(v) \leq \lfloor \frac{c}{2} \rfloor +1$. Since~$u,v$ are part of a triangle in~$G$, it follows that~$\{u,v\}$ forms an induced~$P_3$ with less than~$c$ other edges in~$G$. Then, by Proposition~\ref{Prop: Less conflict edges} we conclude that~$A$ is good.
%\todo{Cpt.: Der ganze Beweis ist super analog zum vorherigen und daher ziemlich langweilig} Let~$L$ be an STC-labeling for~$G$ with~$E(A) \subseteq W_L$, and let~$L'$ be an STC-labeling for~$G$ with~$L'|_{E \setminus E(A)} = L|_{E \setminus E(A)}$. If~$W_{L'} \cap E(A) = \emptyset$, it already follows that~$A$ is a good periphery component. So, let~$W_{L'} \cap E(A) \neq \emptyset$. By Proposition~\ref{Prop: At most one weak in PC} we can assume that there is one unique edge~$e \in W_{L'} \cap E(A)$. Since~$A$ is a connected component in~$G[\mathscr{P}]$, there exists an edge-simple path~$P=(v_1, \dots, v_r)$ such that~$\{v_1,v_2\}=e$, and~$\{v_{r-1}, v_r\}=\{u,v\}$ with~$Q_{L'}^P=(0,q_2, \dots, q_{r-1})$. By Lemma~\ref{Lemma: Move the weak edges}, there exists an STC-labeling~$L''$ with~$L''|_{E \setminus E(A)} = L|_{E \setminus E(A)}$ such that~$|W_{L''}| < |W_L|$ or~$Q_{L''}^P=(q_2, \dots, q_{r-1},0)$. In case of~$|W_{L''}| < |W_L|$, nothing more needs to be shown. Otherwise, the edge~$\{u,v\}$ is weak under~$L''$. Since~$\deg(u), deg(v) \leq \lfloor \frac{c}{2} \rfloor +1$, and~$\{u,v\}$ is part of a triangle in~$G$, the edge~$\{u,v\}$ forms an induced~$P_3$ with at most~$c-2$ edges in~$G$. Hence, we can transform~$L''$ into an STC-labeling~$L'''$ with~$L'''|_{E \setminus E(A)} = L|_{E \setminus E(A)}$ and~$W_{L'''} \cap E(A) = \emptyset$ by labeling~$\{v_{r-1},v\}$ with the remaining strong color. Then, $A$ is good by definition.
\end{proof}
\fi
Propositions~\ref{Prop: Delete good PCs} and~\ref{Prop: Triangles are good} guarantee the safeness of the following rule.

\begin{redrule} \label{Rule: delete triangle PCs}
If there is a periphery component~$A \subseteq \mathscr{P}$ with~$A \setminus A^* \neq \emptyset$ such that there exists an edge~$\{u,v\} \in A$ which is part of a triangle~$G[\{u,v,w\}]$ in~$G$, then delete~$A \setminus A^*$ from~$G$.
\end{redrule}

For the rest of this section we consider instances~$(G,c,k)$ for \textsc{Multi-STC}, that are reduced regarding Rules~\ref{Rule: Isolated PC}--\ref{Rule: delete triangle PCs}. Observe that these instances only contain triangle-free periphery components~$A$ where every vertex~$v \in A$ has~$\deg(v)= \lfloor \frac{c}{2} \rfloor +1$. Since~\textsc{ECS} and~\textsc{Multi-STC} are the same on triangle-free graphs one might get the impression that we can use Vizing's Theorem to prove that all periphery components in~$G$ are good. Consider the example in Figure~\ref{Figure: Counterexample ECS Rule for STC} to see that this is not necessarily the case. 

% Moreover, observe that our approach that led to an~$\Oh(\pe_{c-1}\cdot c)$-vertex kernel for~\textsc{ECS} (Theorem~\ref{Theorem: ECS xi kernel}) does not work for \textsc{Multi-STC}. Recall that the idea of switching the colors of maximal alternating paths from Lemma~\ref{Lemma: Alternating Paths color swap} does not work for \textsc{Multi-STC} if the graph contains triangles. In the proof of Lemma~\ref{Lemma: Add edges of low degree} we applied Lemma~\ref{Lemma: Alternating Paths color swap} on maximal alternating paths that start in a vertex~$u$ where every~$w \in N[u]$ has a small degree. Note that these maximal alternating paths may contain core vertices. This means, that we cannot use the argumentation from the proof of Vizing's Theorem (or Lemma~\ref{Lemma: Add edges of low degree} respectively) even for triangle-free periphery components, since there might be triangles in~$\core$.

We now continue with the description of the kernel for~\textsc{Multi-STC}. Let~$(G,c,k)$ be an instance of \textsc{Multi-STC} that is reduced regarding Rules~\ref{Rule: Isolated PC}--\ref{Rule: delete triangle PCs}. We analyze the periphery components of~$G$ that contain cycles. In this context, a~\emph{cycle} (of length~$r$) is an edge-simple path~$P=(v_0,v_1,\dots,v_{r-1},v_0)$ where the last vertex  and the first vertex of~$P$ are the same, and all other vertices occur at most once in~$P$. We will see that acyclic periphery components---which are periphery components~$A \subseteq \peri$ where~$G[A]$ is a tree---are already bounded in~$c$ and~$\pe_{\lfloor \frac{c}{2} \rfloor + 1}$. To remove the other components, we show that periphery components with cycles are always good. To this end we show two lemmas. The intuitive idea behind Lemmas~\ref{Lemma: Turn around cycles} and~\ref{Lemma: Move strong color in cycles} is, that we use Lemma~\ref{Lemma: Move the weak edges} to rotate weak and strong edge-colors around a cycle.

\begin{lemma} \label{Lemma: Turn around cycles}
Let~$A \subseteq \mathscr{P}$ be a periphery component, and let~$L$ be an STC-labeling for~$G$. Moreover, let~$P=(v_0, v_1, \dots, v_{r-1},v_0)$ be a cycle in~$A$ such that~$W_L \cap E(P) \neq \emptyset$ and let~$Q_L^P=(q_0,q_1,\dots,q_{r-1})$ be the color sequence of~$P$ under~$L$. Then, there exist STC-labelings~$L_0,L_1,L_2,\dots,L_{r-1}$ for~$G$ such that~$L_{i}|_{E \setminus E(P)}=L|_{E \setminus E(P)}$ and 
\iflong
\begin{align*} 
  Q^P_{L_i}(j) = q_{(i+j)\mod r}  \text{ or } |W_{L_i}| < |W_L|
\end{align*}
\else  $Q^P_{L_i}(j) = q_{(i+j)\mod r}$ or~$|W_{L_i}| < |W_L|$ \fi
for all~$i,j \in \{0,\dots,r-1\}$.
\end{lemma}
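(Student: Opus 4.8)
The plan is to build the labelings $L_0,\dots,L_{r-1}$ one at a time by repeatedly invoking Lemma~\ref{Lemma: Move the weak edges} to push a weak edge once around the cycle; each single push rotates the entire color sequence by one position. I set $L_0:=L$, which trivially satisfies $L_0|_{E\setminus E(P)}=L|_{E\setminus E(P)}$ and $Q^P_{L_0}(j)=q_j=q_{(0+j)\bmod r}$. Since $W_L\cap E(P)\neq\emptyset$, I fix an index $s$ with $q_s=0$; this weak position is what I will track as it circulates around $P$.

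First I would describe the single rotation step. Suppose $L_{i-1}$ has been constructed with $Q^P_{L_{i-1}}(j)=q_{(i-1+j)\bmod r}$ for all $j$ (otherwise $|W_{L_{i-1}}|<|W_L|$ and I simply set $L_i:=L_{i-1}$, which satisfies the disjunction via its second branch). Under $L_{i-1}$ a weak edge sits at cycle position $p:=(s-(i-1))\bmod r$. I consider the path $P^{(i)}:=(v_p,v_{p+1},\dots,v_{p-1},v_p)$ that traverses the cycle once starting at $v_p$ (all indices taken mod $r$). This path is edge-simple because the cycle $P$ is, its edge set equals $E(P)$, it lies in $G[A]$, and its first edge $\{v_p,v_{p+1}\}$ is weak under $L_{i-1}$. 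Applying Lemma~\ref{Lemma: Move the weak edges} to $L_{i-1}$, this weak edge, and $P^{(i)}$ yields $L_i$ with $L_i|_{E\setminus E(P)}=L_{i-1}|_{E\setminus E(P)}$ and with either $|W_{L_i}|<|W_{L_{i-1}}|$ or the color sequence along $P^{(i)}$ shifted by one position.

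The key computation is to check that this one-position shift \emph{along $P^{(i)}$} coincides with a one-position cyclic shift of the sequence \emph{on $P$}. Writing $\tilde q_j:=Q^P_{L_{i-1}}(j)$, the new color of the cycle edge at position $j=(p+t)\bmod r$ becomes $\tilde q_{(j+1)\bmod r}$ for every path index $t\le r-2$, while for the last index $t=r-1$, that is $j=(p-1)\bmod r$, the new color is $0$; but there $\tilde q_{(j+1)\bmod r}=\tilde q_p=0$, so $Q^P_{L_i}(j)=\tilde q_{(j+1)\bmod r}$ holds for \emph{every} $j$. This is exactly where the hypothesis that the moved edge is weak enters: because the wrap-around position carries color $0$, it matches the value the cyclic shift predicts, and no separate case is needed there. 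Substituting $\tilde q_j=q_{(i-1+j)\bmod r}$ gives $Q^P_{L_i}(j)=q_{(i+j)\bmod r}$, completing the inductive step.

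Finally I would close the induction over $i$. Transitivity of partial equality yields $L_i|_{E\setminus E(P)}=L|_{E\setminus E(P)}$ for all $i$, and at each step either the color sequence is the claimed cyclic shift or the weak-edge count has strictly dropped below $|W_L|$; in the latter case I propagate the improved labeling to all later indices, for which the second branch of the disjunction then holds. I expect the only real obstacle to be bookkeeping: getting the modular indexing of $P^{(i)}$ right and verifying that the single place where $q_s=0$ is used makes the shifted path-sequence agree with the cyclic shift of the cycle-sequence. Everything else is a direct application of Lemma~\ref{Lemma: Move the weak edges}.
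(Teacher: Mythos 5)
Your proof is correct and matches the paper's own argument essentially step for step: induction on $i$, applying Lemma~\ref{Lemma: Move the weak edges} once per step to the edge-simple traversal of the cycle starting at the current weak edge, and observing that the appended $0$ at the wrap-around position coincides with the rotated value precisely because the tracked color is weak. The only cosmetic difference is that the paper normalizes the weak edge to position $0$ by a without-loss-of-generality assumption, whereas you track its position $s$ explicitly through the modular bookkeeping.
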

\iflong
\begin{proof}
%Since~$W_L \cap E(A) \neq \emptyset$ we can assume by Proposition~\ref{Prop: At most one weak in PC} that there is one unique edge~$e \in W_L \cap E(A)$. By Proposition~\ref{Prop: Move the weak edges} 
Without loss of generality we assume that~$\{v_0,v_1\}\in W_L$ and therefore~$q_0=0$. We prove the existence of the labelings~$L_i$ with~$i \in \{0,1,\dots,r-1\}$ by induction over~$i$.

\textit{Base Case:} $i=0$. In this case we set~$L_0:=L$.
%$i=1$. For~$i=1$, the existence of a labeling~$L_1$ with~$L_{1}|_{E \setminus E(P)}=L|_{E \setminus E(P)}$ and~$Q^P_{L_1}=(q_2,q_3,\dots, q_r,q_1)$ or~$|W_{L_1}| < |W_L|$ is guaranteed by Proposition~\ref{Prop: Move the weak edges}.

\textit{Inductive Step:} By inductive hypothesis, there is a labeling~$L_{i-1}$ with~$|W_{L_{i-1}}| < |W_L|$ or
\begin{align*}
Q^P_{L_{i-1}}(j) = q_{(i-1 + j) \mod r}.
\end{align*}
If~$|W_{L_{i-1}}|<|W_L|$, then we define~$L_i$ by~$L_i:=L_{i-1}$ and nothing more needs to be shown. Otherwise, we consider~$P'=(v_{r-i+1},v_{r-i+2}, \dots, v_{r-1},v_0,v_1,\dots,v_{r-i+1})$. Note that~$P'$ describes the same cycle as~$P$ by rotating the vertices. More precisely,
\begin{align*}
P(j) = P'((j+i-1) \mod r).
\end{align*}
Therefore,~$P'$ is edge-simple and has the color sequence~$Q^{P'}_{L_{i-1}}=(q_0=0,q_1,\dots,q_{r-1})$. By Lemma~\ref{Lemma: Move the weak edges}, there exists an STC-labeling~$L_i$ with $L_i|_{E \setminus E(P)}=L_{i-1}|_{E \setminus E(P)}$, such that~$|W_{L_{i}}| < |W_{L_{i-1}}|$ or
\begin{align*}
Q^{P'}_{L_i}(j) = q_{(j+1) \mod r}.
\end{align*}
In case of~$|W_{L_{i}}| < |W_{L_{i-1}}|$, nothing more needs to be shown. Otherwise, observe that
\begin{align*}
Q^P_{L_i}(j) = Q^{P'}_{L_i}((j+i-1) \mod r) = q_{(j+i) \mod r}
\end{align*}
which completes the inductive step.
\end{proof}
\fi
\begin{lemma}\label{Lemma: Move strong color in cycles}
Let~$A \subseteq \mathscr{P}$ be a periphery component, let~$L$ be an STC-labeling. Moreover, let~$P=(v_0, v_1, \dots,v_{r-1},v_0)$ be a cycle in~$A$ with~$W_L \cap E(P) \neq \emptyset$, and let~$e_1,e_2 \in E(P)$ with~$e_2 \in S^q_L$ for some strong color~$q\in \{1, \dots, c\}$. Then, there exists an STC-labeling~$L'$ with~$L'|_{E \setminus E(P)} = L|_{E \setminus E(P)}$ such that~$e_1 \in S^q_{L'}$ or~$|W_{L'}| < |W_L|$.
\end{lemma}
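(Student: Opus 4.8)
The plan is to reduce the statement directly to Lemma~\ref{Lemma: Turn around cycles}, which already hands us the entire family of cyclic rotations of the color sequence along~$P$. The only thing left to do is to select the one rotation that lands the strong color~$q$ on the prescribed edge~$e_1$.

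First I would fix coordinates on the cycle. Writing~$P=(v_0,\dots,v_{r-1},v_0)$, I index its edges by position, so that edge~$j$ is~$\{v_j,v_{(j+1)\mod r}\}$ and~$Q_L^P=(q_0,\dots,q_{r-1})$. Because a cycle is edge-simple, every edge of~$E(P)$ occupies a unique position; let~$j_1$ and~$j_2$ be the positions of~$e_1$ and~$e_2$. The hypothesis~$e_2\in S^q_L$ is then exactly the statement~$q_{j_2}=q$.

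Next, since~$W_L\cap E(P)\neq\emptyset$, I can invoke Lemma~\ref{Lemma: Turn around cycles} to obtain STC-labelings~$L_0,\dots,L_{r-1}$ with~$L_i|_{E\setminus E(P)}=L|_{E\setminus E(P)}$ such that, for all~$i,j$, either~$|W_{L_i}|<|W_L|$ or~$Q^P_{L_i}(j)=q_{(i+j)\mod r}$. Here~$L_i$ shifts the colors so that position~$j$ carries the color originally at position~$(i+j)\mod r$. To slide the color~$q$ from position~$j_2$ onto position~$j_1$, I would pick the shift~$i:=(j_2-j_1)\mod r$ and set~$L':=L_i$.

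Finally I would close the argument through the two-case disjunction of Lemma~\ref{Lemma: Turn around cycles}. If~$|W_{L_i}|<|W_L|$, the second alternative of the claim holds at once. Otherwise~$Q^P_{L_i}(j_1)=q_{(i+j_1)\mod r}=q_{j_2}=q$, so~$e_1\in S^q_{L'}$; and in either case~$L'|_{E\setminus E(P)}=L|_{E\setminus E(P)}$ is inherited directly. I do not anticipate a real obstacle, as the previous lemma does all the heavy lifting; the only point demanding care is the modular bookkeeping—choosing the shift as~$j_2-j_1$ rather than~$j_1-j_2$, and keeping the ``or strictly fewer weak edges'' branch threaded through, since Lemma~\ref{Lemma: Turn around cycles} may always fall back to a labeling with fewer weak edges instead of producing the exact rotation.
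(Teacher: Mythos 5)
Your proposal is correct and takes essentially the same route as the paper: the paper's proof also invokes Lemma~\ref{Lemma: Turn around cycles} and selects the shift~$i:=(t-j)\bmod r$ (with~$t,j$ the positions of~$e_2,e_1$), exactly your~$(j_2-j_1)\bmod r$, then splits into the same two cases. The only cosmetic difference is that the paper first re-anchors the cycle so that a weak edge sits at position~$0$, a normalization your argument correctly identifies as unnecessary since Lemma~\ref{Lemma: Turn around cycles} only needs~$W_L\cap E(P)\neq\emptyset$.
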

\iflong
\begin{proof}
Let~$Q_L^P:=(q_0,q_1,\dots,q_{r-1})$. Without loss of generality assume that~$\{v_0,v_1\} \in W_L$ and~$e_2 = \{v_t, v_{t+1}\}$ for some~$t \in \{1,\dots,r-1\}$. It then holds, that~$q_0=0$, and~$q=q_t$. Furthermore, since~$e_1 \in E(P)$ we have~$e_1=\{P(j),P(j+1)\}$ for some~$j \in \{0,1, \dots, {r-1}\}$.

Consider the STC-labelings~$L_0, L_1, L_2, \dots L_{r-1}$ from Lemma~\ref{Lemma: Turn around cycles}. If for one such labeling~$L_i$ it holds that~$|W_{L_i}|<|W_L|$, then nothing more needs to be proven. Otherwise, set~$i:=(t-j) \mod r$. We show that~$e_1 \in S^{q_t}_{L_{i}}$ by proving~$Q^P_{L_{i}}(j) =q_t$ as follows:
\begin{align*}
Q^P_{L_i}(j) = q_{(i+j) \mod r} = q_{((t-j) \mod r) + j) \mod r}=q_{(t-j+j) \mod r }= q_t.
\end{align*}
\end{proof}
\fi
We next use Lemma~\ref{Lemma: Move strong color in cycles} to prove that periphery components with cycles are good.

\begin{proposition} \label{Prop: Cycle PCs are good}
Let~$(G=(V,E),c,k)$ be a reduced instance of \textsc{Multi-STC} regarding rules~\ref{Rule: Isolated PC}--\ref{Rule: delete triangle PCs}, where~$c \geq 4$ is even. Let~$A \subseteq \mathscr{P}$ be a periphery component in~$G$ such that~$A \setminus A^*\neq \emptyset$ and there is a cycle~$P=(v_0,v_1,\dots,v_{r-1},v_0)$ in~$G[A]$. Then,~$A$ is good.
\end{proposition}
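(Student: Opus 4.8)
The plan is to verify Definition~\ref{Def: Good PC} directly through an extremal argument: given an arbitrary STC-labeling $L$ with $E(A) \subseteq W_L$, I produce an STC-labeling $L'$ agreeing with $L$ on $E \setminus E(A)$ that has no weak edge inside $E(A)$. Among all STC-labelings agreeing with $L$ on $E \setminus E(A)$, I fix one, say $\hat L$, that minimizes the number of weak edges in $E(A)$, and suppose for contradiction that this minimum is positive. By Proposition~\ref{Prop: At most one weak in PC} I may assume $\hat L$ has exactly one weak edge $e_w \in E(A)$. Since $A$ is connected and $P$ is a cycle in $G[A]$, there is an edge-simple path in $G[A]$ from $e_w$ to an edge of $P$; applying Lemma~\ref{Lemma: Move the weak edges} along it either yields a labeling with strictly fewer weak edges in $E(A)$ (contradicting minimality) or relocates $e_w$ onto $P$ without touching $E \setminus E(A)$. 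So I may assume $W_{\hat L} \cap E(P) \neq \emptyset$.

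Next I record the local structure guaranteed by the reduction rules. Because the instance is reduced with respect to Rule~\ref{Rule: delete low degree PCs} and $A \setminus A^* \neq \emptyset$, every vertex of $A$ has degree exactly $\lfloor c/2 \rfloor + 1 = c/2+1$ (recall $c$ is even), so $e_w$ is incident to at most $c$ other edges. Because it is reduced with respect to Rule~\ref{Rule: delete triangle PCs}, no edge of $E(A)$ lies in a triangle of $G$, so for edges of $E(A)$ the STC condition coincides with properness: two equally colored strong edges sharing a vertex with $e_w$ always form a strong $P_3$. Consequently, if $e_w$ meets fewer than $c$ distinctly colored edges, I can recolor it exactly as in Proposition~\ref{Prop: Less conflict edges}, contradicting minimality. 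Hence $e_w$ is \emph{blocked}: its two endpoints each carry $c/2$ pairwise distinct strong colors, and the two color sets are disjoint and together exhaust $\{1, \dots, c\}$.

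Now I would exploit the cycle. Rotating the weak edge around $P$ by Lemma~\ref{Lemma: Turn around cycles}, at every position I again either gain a labeling with fewer weak edges (contradiction) or find the weak edge blocked; so $e_w$ is blocked at \emph{every} position of $P$. I then read off a rigid structure: tracking how a single-step rotation (Lemma~\ref{Lemma: Move the weak edges}) hands the freed color from one cycle edge to the next, the color sets at consecutive cycle vertices must be complementary, i.e. they alternate between a fixed pair $U$, $\{1,\dots,c\}\setminus U$ of size-$c/2$ sets as one walks along $P$. If $P$ has odd length this alternation cannot close up consistently around the cycle, which is the desired contradiction.

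The main obstacle is the even-length case, where the alternating pattern is internally consistent and no cheap parity contradiction is available. Here I would break the rigidity using the extra freedom that off-cycle edges inside $A$ lie in $E(A)$ and are therefore recolorable, together with Lemma~\ref{Lemma: Move strong color in cycles}: transporting a strong color of the appropriate type across $P$ onto a cycle edge incident to $e_w$ should either free a color at an endpoint of $e_w$ (so that $e_w$ can be colored) or directly lower the number of weak edges, in both cases contradicting minimality, while triangle-freeness ensures none of these recolorings creates a strong $P_3$. Once $A$ is shown to be good, Proposition~\ref{Prop: Delete good PCs} justifies deleting $A \setminus A^*$. I expect the bookkeeping of this even case---pinning down exactly which color to transport and checking STC after the transport---to be the most delicate step.
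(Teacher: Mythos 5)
Your scaffolding (reduce to a single weak edge via Proposition~\ref{Prop: At most one weak in PC}, slide it onto the cycle with Lemma~\ref{Lemma: Move the weak edges}, observe that by Rules~\ref{Rule: delete low degree PCs} and~\ref{Rule: delete triangle PCs} the weak edge must be \emph{blocked} by $c$ distinctly colored incident edges, then rotate with Lemma~\ref{Lemma: Turn around cycles}) is exactly the paper's, and your odd-length parity contradiction is in substance the paper's Case~2.2. One correction there: what alternates are the \emph{out}-sets $\out(v)$ of off-cycle colors (two fixed disjoint sets of size $c/2-1$), not a fixed pair $U$, $\{1,\dots,c\}\setminus U$ of full color sets --- the cycle colors travel with the rotation, and the only cycle colors whose position relative to the weak edge is invariant are the two that are cyclically adjacent to it.

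The genuine gap is the even case, which you correctly identify as the crux but then attack with a mechanism that cannot work. Lemmas~\ref{Lemma: Turn around cycles} and~\ref{Lemma: Move strong color in cycles} only rotate the color sequence of $P$ cyclically while leaving everything outside $E(P)$ untouched; consequently they never change which two colors sit next to the weak edge in the cyclic order, and they never alter any out-set. So ``transporting a strong color onto a cycle edge incident to $e_w$'' can only deliver a color that is already adjacent to $e_w$, and no color at an endpoint of $e_w$ can ever be freed this way. The only productive use of Lemma~\ref{Lemma: Move strong color in cycles} is the paper's Case~1: request a rotation placing a cycle color $q$ on an edge incident to a vertex $v'$ with $q\in\out(v')$; since no edge of $E(A)$ lies in a triangle this would create a strong $P_3$, so the lemma is forced into its fewer-weak-edges branch. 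But this needs a cycle edge whose color lies in some out-set, and in the hard sub-case (the paper's Case~2.2) no such color exists, so there is nothing to transport. The missing idea --- and the place where the hypothesis $A\setminus A^*\neq\emptyset$, which your proposal never uses beyond reducedness, becomes essential --- is to move the weak edge \emph{off} the cycle: the paper first shows that some $v_j\in V(P)$ has a neighbor $w\in A\setminus V(P)$ (otherwise $A=V(P)$, and since every vertex has degree $\lfloor c/2\rfloor+1>2$ each would have a core neighbor, so $A=A^*$); running Lemma~\ref{Lemma: Move the weak edges} along $(v_0,\dots,v_j,w)$ parks the weak label on $\{v_j,w\}$ and deposits that edge's color $q\in\out(v_j)$ onto the cycle edge $\{v_{j-1},v_j\}$, and after routing the weak edge back around the other side, $q$ is a cycle color still contained in $\out(v')$ for another cycle vertex $v'$ (by the alternation), so Case~1 finishes. (As an aside, your pessimism about parity in the even case is unwarranted: since the rotated labelings of Lemma~\ref{Lemma: Turn around cycles} are themselves STC-labelings, their validity already forces every strong cycle color to avoid every out-set, hence by blockedness to lie in the two-element set $\{q_1,q_{r-1}\}$ of colors adjacent to the weak edge; the $r-1$ strong edges of $P$ then alternate between two distinct colors along a path whose first and last colors differ, which is impossible when $r-1$ is odd. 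But neither this argument nor the paper's export step appears in your proposal, and without one of them the even case is open.)
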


\iflong
\begin{proof}
Without loss of generality, we assume that the cycle~$P$ has no chords. Otherwise we replace~$P$ by the shorter cycle. Let~$L$ be an arbitrary STC-labeling for~$G$ with~$E(A) \subseteq W_L$. We prove that there is an STC-labeling which is partially equal to~$L$ on~$E\setminus E(A)$ and has no weak edges in~$E(A)$.

Let~$L'$ be an STC-labeling for~$G$ with~$L'|_{E \setminus E(A)}= L|_{E \setminus E(A)}$. If~$W_{L'} \cap E(A) = \emptyset$, nothing more needs to be shown. So, let~$W_{L'} \cap E(A) \neq \emptyset$. Then, by Proposition~\ref{Prop: At most one weak in PC} we can assume that there exists one unique~$e \in W_{L'} \cap E(A)$. Moreover, by Lemma~\ref{Lemma: Move the weak edges} we assume without loss of generality that~$e=\{v_0,v_1\}$. Then,~$P$ is a cycle with~$E(P) \cap W_{L'} \neq \emptyset$ in~$G[A]$. 

We will use Lemma~\ref{Lemma: Move strong color in cycles} to transform~$L'$ into an STC-labeling without weak edges in~$E(A)$. To this end, we need to introduce some notation: For a vertex~$v \in V(P)$, we let~$\out(v):=\{i \in \{1, \dots, c\} \mid \exists e\in E \setminus E(P): e \cap V(P) = \{v\} \land e \in S^i_{L'}\}$ denote the set of strong colors of incident edges of~$v$ that are not in~$E(P)$. Consider the following case distinction.

\textbf{Case 1:} There exists an edge~$\{P(j),P(j+1)\} \in E(P)$ that has a strong color~$q \in \bigcup_{v \in P} \out(v)$ under~$L$. Let~$v \in P$ be a vertex with~$q \in \out(v)$, and let~$e \in E(P)$ with~$v \in e$ be an edge incident with~$v$. Since~$\{P(j),P(j+1)\} \in S^q_{L'}$, Lemma~\ref{Lemma: Move strong color in cycles} guarantees the existence of an STC-labeling~$L''$ with~$L''|_{E\setminus E(P)}=L'|_{E \setminus E(P)}$, such that~$e \in S^q_{L''}$ or~$|W_{L''}| < |W_{L'}|$. 

Assume towards a contradiction that~$e \in S^q_{L''}$. Then, since~$L''|_{E\setminus E(P)}=L'|_{E \setminus E(P)}$ and~$q \in \out(v)$, the vertex~$v$ has two incident edges with the same strong color. Furthermore, since~$G$ is reduced regarding Rule~\ref{Rule: delete triangle PCs}, no edge in~$E(A)$ is part of a triangle. Hence, $v$ is the central vertex of an induced~$P_3$ where both edges have strong color~$q$ under~$L''$. This contradicts the fact that~$L''$ is an STC-labeling. We conclude~$|W_{L''}| < |W_{L'}|$, which implies~$L''|_{E \setminus E(A)} = L|_{E \setminus E(A)}$ and~$E(A) \cap W_{L''} = \emptyset$. Since~$L$ was arbitrary, the periphery component~$A$ is good by definition.

\textbf{Case 2:} There is no edge in~$E(P)$ that has a strong color~$q \in \bigcup_{v \in P} \out(v)$.

\textbf{Case 2.1:} There is a strong color~$q$, such that for some~$j \in \{0, \dots, {r-1}\}$ it holds that~$q \in \out(P(j)) \cap \out(P(j+1))$. In this case, consider the edge-simple subpath~$P'=(v_0,v_1,\dots,v_j,v_{j+1})$. Observe that~$Q^{P'}_{L'}(0)=0$, since~$\{v_0,v_1\} \in W_{L'}$. By Lemma~\ref{Lemma: Move the weak edges}, there exists an STC-labeling~$L''$ with~$L''|_{E \setminus E(P')}=L'|_{E \setminus E(P')}$ such that~$|W_{L''}|<|W_{L'}|$ or~$Q^{P'}_{L''}(j)=0$. In case of~$|W_{L''}|<|W_{L'}|$, nothing more needs to be shown. Otherwise,~$Q^{P'}_{L''}(j)=0$ implies~$\{P(j),P(j+1)\} \in W_{L''}$. Since~$q \in \out(P(j)) \cap \out(P(j+1))$ and~$\deg(P(j))=\deg(P(j+1))= \lfloor \frac{c}{2} \rfloor + 1$, the edge~$\{P(j),P(j+1)\}$ is incident with at most~$c-1$ edges of distinct strong colors under~$L''$. Consequently, we can transform~$L''$ into an STC-labeling~$L'''$ with~$L'''|_{E \setminus E(A)} = L''_{E \setminus E(A)}$ and~$W_{L'''} \cap E(A) = \emptyset$. Hence,~$A$ is a good periphery component.

\textbf{Case 2.2:} For every~$j \in \{0,\dots,{r-1}\}$ it holds that~$\out(P(j)) \cap \out(P(j+1)) = \emptyset$. To handle this case, we need to prove two (in-)equalities, that we state in the following claim.
\begin{claim} \label{Claim: Outsizes}
\iflong It holds that
\begin{enumerate}
\item[a)] $|\out(v)| = \frac{c}{2} -1$ for every~$v \in V(P)$, and
\item[b)] $|\bigcup_{v \in P} \out(v)|\leq c-2$.
\end{enumerate}
\else a) $|\out(v)| \leq \lfloor \frac{c}{2} \rfloor -1$ for every~$v \in V(P)$ \hfill b) $|\bigcup_{v \in P} \out(v)|\leq c-2$.
\fi
\end{claim}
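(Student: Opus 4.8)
The plan is to prove the two parts separately, establishing (a) first and then reading off (b) from the colours occurring on the cycle.

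For the upper bound in (a) I would just count edges. Since $(G,c,k)$ is reduced regarding Rules~\ref{Rule: Isolated PC}--\ref{Rule: delete triangle PCs}, every vertex $v\in A$ has $\deg_G(v)=\lfloor \frac{c}{2}\rfloor+1=\frac{c}{2}+1$, and exactly two of its incident edges lie on the cycle $P$. Hence $v$ is incident with exactly $\frac{c}{2}-1$ edges outside $E(P)$, so trivially $|\out(v)|\le \frac{c}{2}-1$. For the matching lower bound I would argue that a strict inequality already lets us finish the whole proposition. Suppose $|\out(v_j)|\le \frac{c}{2}-2$ for some $j$. Using Lemma~\ref{Lemma: Turn around cycles} I rotate the unique weak edge of $E(A)$ to the cycle edge $\{v_j,v_{j+1}\}$: the lemma either already yields a labeling with strictly fewer weak edges (then $A$ is good and we are done), or it yields a labeling $L''$ that agrees with $L'$ outside $E(P)$ and has $\{v_j,v_{j+1}\}$ weak, so that every $\out$-set is unchanged. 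At $v_j$ the only strong edges other than $\{v_j,v_{j+1}\}$ are the remaining cycle edge and the $\out$-edges, so at most $1+|\out(v_j)|\le \frac{c}{2}-1$ colours are forbidden there; at $v_{j+1}$ at most $1+|\out(v_{j+1})|\le \frac{c}{2}$ colours are forbidden. Together fewer than $c$ colours are blocked, so some colour $i$ is free at both endpoints. Because $\{v_j,v_{j+1}\}\in E(A)$ lies in no triangle (the instance is reduced regarding Rule~\ref{Rule: delete triangle PCs} and $A\setminus A^*\neq\emptyset$), colouring this edge with $i$ creates no strong $P_3$; this removes the last weak edge of $E(A)$ and shows $A$ is good. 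Hence, as long as $A$ is not already seen to be good, $|\out(v)|=\frac{c}{2}-1$ for every $v\in V(P)$.

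For (b) I would only use the colours appearing on $E(P)$. By the Case~2 assumption no colour of $\bigcup_{v\in P}\out(v)$ is used on a cycle edge, so $\bigcup_{v\in P}\out(v)$ is contained in the set of strong colours \emph{not} used on $E(P)$. It therefore suffices to exhibit at least two distinct colours on $E(P)$. The $r-1$ non-weak cycle edges form a path; any two consecutive of them share a vertex, and since $P$ is chordless and triangle-free two incident strong edges cannot carry the same colour (equality would force a chord, contradicting an induced $P_3$ under the STC-labeling). As $r\ge 4$ this path has at least two edges, so at least two distinct strong colours occur on $E(P)$, whence $|\bigcup_{v\in P}\out(v)|\le c-2$.

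The delicate point is the lower bound in (a): I must ensure that rotating the weak edge genuinely preserves the $\out$-sets, which is exactly why I invoke Lemma~\ref{Lemma: Turn around cycles}, as it only alters colours on $E(P)$ and leaves all edges meeting $V(P)$ outside $P$ untouched. Equally important is that the free colour can be assigned without creating a strong $P_3$; this is where reducedness regarding Rule~\ref{Rule: delete triangle PCs} is essential, since it guarantees that the recoloured edge lies in no triangle, so a colour that is unused at both endpoints is automatically STC-admissible. For (b) the only subtlety is producing two distinct cycle colours, and for this the chordlessness of $P$ (assumed at the start of the proof) together with triangle-freeness is precisely what is needed.
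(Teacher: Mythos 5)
Your proof is correct, and while parts (a)-upper-bound and (b) coincide with the paper's argument (degree counting for the bound $\lvert\out(v)\rvert\le \frac{c}{2}-1$; two distinct strong colors on the chordless, triangle-free cycle for $\lvert\bigcup_{v\in P}\out(v)\rvert\le c-2$), your treatment of the equality in (a) is genuinely different. The paper deduces $\lvert\out(v)\rvert = \frac{c}{2}-1$ in one line from the fact that there is no weak edge of $L'$ in $E(A)\setminus E(P)$, which implicitly also requires that \emph{all} non-cycle edges at $v$ -- including edges from $V(P)$ to core vertices, which are not in $E(A)$ and inherit the arbitrary labeling $L$ -- are strong and pairwise distinctly colored; that step quietly ignores the possibility of weak core edges, or of two core neighbors of $v$ that are adjacent (such a triangle contains no edge of $E(A)$, so Rule~\ref{Rule: delete triangle PCs} does not forbid it and the two edges could share a color). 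You instead prove the disjunction ``either $\lvert\out(v)\rvert=\frac{c}{2}-1$ for all $v\in V(P)$, or $A$ is good'': if some out-set is deficient, you rotate the unique weak edge onto that position via Lemma~\ref{Lemma: Turn around cycles} (which preserves all out-sets, since it only relabels $E(P)$) and then recolor it with a color blocked at neither endpoint, since at most $(\frac{c}{2}-1)+\frac{c}{2}<c$ colors are blocked. This uses only \emph{upper} bounds on the out-sets, so it is immune to the weak-core-edge and repeated-color issues, and the disjunctive form is exactly what the remainder of the proof of Proposition~\ref{Prop: Cycle PCs are good} needs (every other case there also ends with ``or we are done''). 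What each approach buys: the paper's is shorter; yours is more robust and, in effect, repairs a small gap. Two cosmetic remarks: strictly speaking you prove a weaker statement than the claim as literally written (equality only up to the ``or $A$ is good'' escape), which is fine in context but worth flagging; and the appeal to triangle-freeness in your recoloring step is superfluous, since a color that appears on no edge incident to either endpoint can never create a strong $P_3$, triangle or not.
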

\begin{claimproof}
a) Since~$(G,c,k)$ is reduced regarding Rule~\ref{Rule: delete low degree PCs}, and every vertex~$v \in V(P)$ has exactly two neighbors in~$V(P)$ it holds that~$|\out(v)| \leq \frac{c}{2} -1$ for every~$v \in V(P)$. Since there is no weak edge under~$L'$ in~$E(A) \setminus E(P)$, we conclude~$|\out(v)| = \frac{c}{2} -1$.

b) Since~$L'$ satisfies STC, and~$(G,c,k)$ is reduced regarding Rule~\ref{Rule: delete triangle PCs}, there are at least two different edges in~$E(P)$ that are labeled with distinct strong colors under~$L'$. By the condition that no edge in~$E(P)$ has a strong color in~$\bigcup_{v \in P} \out(v)$, we conclude~$|\bigcup_{v \in P} \out(v)|\leq c-2$. 
\end{claimproof}
Consider the set~$\out(P(j))$ for some~$P(j) \in V(P)$. By Claim~\ref{Claim: Outsizes} a),~$|\out(P(j))| = \frac{c}{2} -1$. Since~$|\bigcup_{v \in P} \out(v)|\leq c-2$ by Claim~\ref{Claim: Outsizes} b), there are exactly $c-2- (\frac{c}{2} -1) = \frac{c}{2} -1$ colors in~$|\bigcup_{v \in P, v \neq P(j)} \out(v)|$. Since~$\out(P(j)) \cap \out(P(j+1)) = \emptyset$ for every~$j \in \{1,\dots,r\}$ we conclude~$\out(P((j-1) \mod r))=\out(P((j+1) \mod r))$. Therefore, there are exactly two disjoint sets~$X_1$ and~$X_2$ such that
\iflong
\begin{align*}
\out(v_i) =
\begin{cases}
X_1 & i \text{ is even}\\
X_2 & i \text{ is odd.}
\end{cases}
\end{align*}
\else
$\out(v_i) = X_1$ for even~$i$ and $\out(v_i) = X_2$ for odd~$i$.
\fi
This also implies that~$P$ is a cycle of even length.

We continue with some intuition for the rest of the proof. We will use Lemma~\ref{Lemma: Move the weak edges} to move exactly one strong color from~$\bigcup_{v \in P} \out(v)$ into~$E(P)$. Since there are two alternating out-sets and the length of the cycle~$P$ is at least four (since~$P$ has even length), we obtain a labeling where one strong color occurs in~$E(P)$ and~$\bigcup_{v \in P} \out(v)$, which we already handled in Case~1. To this end, consider the following claim.

\begin{claim} \label{Claim: One cycle-vertex has nice neighbor}
There is~$v \in V(P)$ such that~$v$ has a neighbor~$w \in N_G(v)$ with~$w \in A \setminus V(P)$.
\end{claim}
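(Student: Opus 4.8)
The plan is to decompose the claim into a short structural fact followed by a connectivity argument. First I would show that the cycle cannot exhaust the entire component, i.e.\ that $A \setminus V(P) \neq \emptyset$, and then I would obtain the desired edge directly from the connectedness of $A$.

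For the first part, recall that $(G,c,k)$ is reduced with respect to Rule~\ref{Rule: delete low degree PCs} and that the component $A$ under consideration satisfies $A \setminus A^* \neq \emptyset$; hence no vertex of $A$ has degree below $\lfloor \frac{c}{2}\rfloor + 1$. Since every periphery vertex has degree at most $\lfloor \frac{c}{2}\rfloor + 1$, every vertex of $A$ then has degree \emph{exactly} $\lfloor \frac{c}{2}\rfloor + 1 = \frac{c}{2}+1 \geq 3$, using that $c \geq 4$ is even. Moreover, in the proof of Proposition~\ref{Prop: Cycle PCs are good} we may assume that $P$ is chordless, so each cycle vertex $v_i$ has exactly two neighbors on $P$. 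I would then argue by contradiction: if $A = V(P)$, each $v_i$ would still have $\frac{c}{2}-1 \geq 1$ neighbors lying outside $V(P) = A$; since the periphery-neighbors of $v_i$ are all contained in $A$, these remaining neighbors must lie in $\mathscr{C}$, so $v_i \in A^*$. This would yield $A = V(P) \subseteq A^*$, contradicting $A \setminus A^* \neq \emptyset$. Therefore $A \setminus V(P) \neq \emptyset$.

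For the second part, I would use that $A$ is a connected component of $G[\mathscr{P}]$, so $G[A]$ is connected. As both $V(P)$ and $A \setminus V(P)$ are nonempty, connectivity forces an edge of $G[A]$ with one endpoint $v \in V(P)$ and the other $w \in A \setminus V(P)$: otherwise $G[A]$ would split into these two nonempty parts with no edges between them, contradicting connectedness. This pair $(v,w)$ witnesses the claim.

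The argument is short, and the only point requiring care is the conversion step in the first part, namely translating ``$v_i$ has no off-cycle neighbor inside $A$'' into ``$v_i$ has a core neighbor'', which is exactly what makes $v_i$ a close vertex. Getting this right hinges on combining the exact-degree equality for vertices of $A$ (from reducedness together with $A \setminus A^* \neq \emptyset$), the chordlessness of $P$, and the bound $\frac{c}{2}-1 \geq 1$ coming from $c \geq 4$. As a slightly more direct alternative, I could start from any $u \in A \setminus A^*$: if $u \in V(P)$, then its at least $\frac{c}{2}-1$ off-cycle neighbors already lie in $A \setminus V(P)$ and finish the claim immediately, and if $u \notin V(P)$, then $A \setminus V(P) \neq \emptyset$ is immediate and the same connectivity argument closes the proof.
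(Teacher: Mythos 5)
Your proof is correct and takes essentially the same route as the paper's: the paper likewise argues via the exact degree~$\lfloor \frac{c}{2} \rfloor + 1 > 2$ (forced by reducedness under Rule~\ref{Rule: delete low degree PCs} together with $A \setminus A^* \neq \emptyset$) and the chordlessness of~$P$ to show that $A = V(P)$ would make every cycle vertex a close vertex, a contradiction. Your packaging as ``first $A \setminus V(P) \neq \emptyset$, then connectivity of $G[A]$ yields the edge'' (the paper instead deduces $A = V(P)$ directly from the negation of the claim) is only a minor reorganization of the same argument.
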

\begin{claimproof}
Assume towards a contradiction, that there is no such vertex~$v \in V(P)$. Then, for every~$v\in V(P)$ it holds that~$N_G(v) \subseteq \mathscr{C} \cup V(P)$. Then,~$V(P)$ is a connected component in~$G[\mathscr{P}]$. By definition of periphery components, this implies~$A= V(P)$. Moreover, since~$(G,c,k)$ is reduced regarding Rule~\ref{Rule: delete low degree PCs}, we have~$\deg(v) = \lfloor \frac{c}{2} \rfloor +1$ for every~$v \in V(P)$. From the fact that~$\lfloor \frac{c}{2} \rfloor +1>2$ we conclude~$N_G(v) \cap \mathscr{C} \neq \emptyset$ and therefore, every~$v \in V(P)$ is a close vertex of~$A$. This contradicts the fact that~$A \setminus A^* \neq \emptyset$. 
\end{claimproof}

Now, let~$v_j \in V(P)$ be a vertex such that there exists such~$w \in N_G(v)$ with~$w \in A \setminus V(P)$ as described in Claim~\ref{Claim: One cycle-vertex has nice neighbor}. Note that~$\{v_j,w\} \in S^q_{L'}$ for some strong color~$q$. Also note that there exists a vertex~$v' \in V(P)$ such that~$v'$ is distinct from~$v$ and~$q \in \out(v')$.

Consider the path~$P'=(v_0,v_1, \dots, v_j,w)$ in~$G[A]$. Since~$P$ is edge-simple so is~$P'$. Let~$Q_{L'}^{P'}=(q_0, q_1, \dots, q_{j})$ be the color sequence of~$P'$ under~$L'$. Note that~$q_0=0$, and~$q_j=q$. Then, by Lemma~\ref{Lemma: Move the weak edges} there exists an STC-labeling~$L''$ with~$L''|_{E \setminus E(P)} = L'|_{E \setminus E(P)}$ such that~$|W_{L''}| < |W_{L'}|$ or~$Q_{L''}^{P'}=(q_1, \dots, q_{j-1}, q, 0)$. In case of~$|W_{L''}| < |W_{L'}|$, nothing more needs to be proven. Otherwise, from~$Q_{L''}^{P'}=(q_1, \dots, q_{j-1}, q, 0)$ we conclude~$\{v_{j-1}, v_j\} \in S^q_{L''}$ and~$\{v_j, w\} \in W_{L''}$. Then, consider the path~$P''=(w, v_j, v_{j+1}, \dots, v_r, v_0, v_1)$. Since~$P$ is edge-simple so is~$P''$. Note that~$\{v_{j-1},v_j\} \not \in E(P'')$, and~$Q_{L''}^{P''}(1)=0$. Then, by Lemma~\ref{Lemma: Move the weak edges}, there exists an STC-labeling~$L'''$ with~$L'''|_{E \setminus E(P)} = L''|_{E \setminus E(P)}$ such that~$|W_{L'''}| < |W_{L''}|$ or~$Q_{L'''}^{P''}=(q_j, \dots, q_r, 0)$. In case of~$|W_{L'''}| < |W_{L''}|$ nothing more needs to be shown. Otherwise, from~$L'''|_{E \setminus E(P)} = L''|_{E \setminus E(P)}$ and~$Q_{L'''}^{P''}=(q_j, \dots, q_r, 0)$ it follows that~$\{v_0,v_1\} \in W_{L'''}$, and there exists an edge~$\{v_{j-1}, v_j\} \in S^q_{L'''}$ and a vertex~$v'\in V(P)$ which is incident with an edge~$e' \not \in E(P)$ with~$e \in S^q_{L'''}$. Then,~$L'''$ fulfills the conditions of Case 1.
\end{proof}
\fi

Propositions~\ref{Prop: Cycle PCs are good} and~\ref{Prop: Delete good PCs} imply the safeness of the final rule which together with Rules~\ref{Rule: Isolated PC}--\ref{Rule: delete triangle PCs}
 gives the kernel.
\begin{redrule} \label{Rule: delete cycle PCs}
If there is a periphery component~$A \subseteq \mathscr{P}$ with~$A \setminus A^* \neq \emptyset$ such that there exists a cycle~$P$ in~$G[A]$, then delete~$A \setminus A^*$ from~$G$.
\end{redrule}

\begin{theorem} \label{Theorem: STC xi kernel}
\textsc{Multi-STC} restricted to instances with~$c \geq 3$ admits a problem kernel with~$\Oh(\pe_{\lfloor \frac{c}{2} \rfloor + 1} \cdot c)$ vertices and~$\mathcal{O}(\pe_{\lfloor \frac{c}{2} \rfloor + 1} \cdot c^2)$ edges that can be computed in~$\mathcal{O}(n+m)$ time.
\end{theorem}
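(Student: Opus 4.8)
The plan is to treat odd and even values of~$c$ separately and then combine them. For odd~$c \geq 3$, Proposition~\ref{Prop: Kernel for odd c} already produces, in~$\Oh(n+m)$ time, an equivalent instance with~$\Oh(\pe_{\lfloor \frac{c}{2} \rfloor + 1} \cdot c)$ vertices and~$\Oh(\pe_{\lfloor \frac{c}{2} \rfloor + 1} \cdot c^2)$ edges, so nothing further is needed there. The real work lies in the even case~$c \geq 4$, where I would apply Rules~\ref{Rule: Isolated PC}--\ref{Rule: delete cycle PCs} exhaustively and then bound the size of the reduced instance~$G'=(V',E')$.

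First I would pin down the structure of a reduced instance. After computing a $2$-approximate edge-deletion set~$D:=D_{\lfloor \frac{c}{2} \rfloor + 1}$ with core~$\core$ and periphery~$\peri$ in~$\Oh(n+m)$ time, every periphery vertex has degree at most~$\lfloor \frac{c}{2} \rfloor + 1$, and since each core vertex is incident with at most~$\lfloor \frac{c}{2} \rfloor + 1$ edges of~$E \setminus D$, the number of edges between~$\core$ and~$\peri$ is at most~$|\core| \cdot (\lfloor \frac{c}{2} \rfloor + 1) \leq 2\pe_{\lfloor \frac{c}{2} \rfloor + 1} \cdot (\lfloor \frac{c}{2} \rfloor + 1)$. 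I would then argue that after exhaustive application of Rules~\ref{Rule: Isolated PC}--\ref{Rule: delete cycle PCs} every surviving periphery component~$A$ is of one of two types: either (a)~$A=A^*$, so every vertex of~$A$ is close; or (b)~$A \setminus A^* \neq \emptyset$, in which case none of Rules~\ref{Rule: delete low degree PCs}--\ref{Rule: delete cycle PCs} applies, so~$\deg_{G'}(v) = \lfloor \frac{c}{2} \rfloor + 1$ for every~$v \in A$ and~$G'[A]$ is a triangle-free tree. A point to verify carefully is that deleting~$A \setminus A^*$ can only create new periphery components consisting entirely of close vertices, and that such components are stable under all rules (they are of type~(a)); this guarantees that the classification really is exhaustive.

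The key counting step is the bound on the type-(b) components. For such a tree~$A$, summing degrees shows that the number of edges of~$G'$ between~$A$ and~$\core$ equals~$\sum_{v \in A} (\lfloor \frac{c}{2} \rfloor + 1 - \deg_{G'[A]}(v)) = |A| \cdot (\lfloor \frac{c}{2} \rfloor + 1) - 2(|A|-1) = |A| \cdot (\lfloor \frac{c}{2} \rfloor - 1) + 2$. Since~$c \geq 4$ gives~$\lfloor \frac{c}{2} \rfloor - 1 \geq 1$, this is at least~$|A| \cdot (\lfloor \frac{c}{2} \rfloor - 1)$. Summing over all type-(b) components and using that the total number of core--periphery edges is at most~$2\pe_{\lfloor \frac{c}{2} \rfloor + 1} \cdot (\lfloor \frac{c}{2} \rfloor + 1)$ yields~$\sum_{A} |A| \leq \frac{2\pe_{\lfloor \frac{c}{2} \rfloor + 1} (\lfloor \frac{c}{2} \rfloor + 1)}{\lfloor \frac{c}{2} \rfloor - 1} \leq 6\pe_{\lfloor \frac{c}{2} \rfloor + 1}$. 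All remaining periphery vertices are close, and their number is at most the number of core--periphery edges, i.e.~$\Oh(\pe_{\lfloor \frac{c}{2} \rfloor + 1} \cdot c)$. Together with~$|\core| \leq 2\pe_{\lfloor \frac{c}{2} \rfloor + 1}$ this gives~$|V'| \in \Oh(\pe_{\lfloor \frac{c}{2} \rfloor + 1} \cdot c)$. For the edge bound, every vertex of~$G'$ is incident with at most~$\lfloor \frac{c}{2} \rfloor + 1$ edges of~$E \setminus D$ besides the at most~$2\pe_{\lfloor \frac{c}{2} \rfloor + 1}$ edges of~$D$, so~$|E'| \in \Oh(|V'| \cdot c) = \Oh(\pe_{\lfloor \frac{c}{2} \rfloor + 1} \cdot c^2)$.

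I expect the main obstacle to be the structural bookkeeping rather than the arithmetic: establishing that the two-type classification of surviving periphery components holds after exhaustive reduction, and that deleting~$A \setminus A^*$ never reintroduces a deletable component. The counting is then a routine degree-sum argument over trees. For the running time, since every periphery vertex has degree at most~$\lfloor \frac{c}{2} \rfloor + 1$, checking each component for a low-degree vertex, for a triangle (via the bounded neighborhoods of the periphery endpoints of each edge), or for a cycle (by comparing the edge count to the vertex count of the component) costs time linear in the component size for fixed~$c$, so the entire reduction runs in~$\Oh(n+m)$ time, completing the even case and hence the theorem.
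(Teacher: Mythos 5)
Your proposal is correct and follows the paper's overall architecture --- the odd case is dispatched by Proposition~\ref{Prop: Kernel for odd c}, and the even case~$c \geq 4$ applies Rules~\ref{Rule: Isolated PC}--\ref{Rule: delete cycle PCs} exhaustively (with safety supplied by Propositions~\ref{Prop: Delete good PCs}--\ref{Prop: Cycle PCs are good}) and then bounds the reduced instance --- but your counting step is a genuinely different argument from the paper's. The paper bounds only the non-close periphery vertices: the leaves of each surviving tree~$G[A]$ are close vertices with exactly~$\frac{c}{2}$ neighbors in~$\core$, so the total number~$p$ of leaves satisfies~$p \cdot \frac{c}{2} \leq |E(\core, N(\core))| = \Oh(\pe_{\lfloor \frac{c}{2} \rfloor + 1} \cdot c)$, and a tree has at most as many vertices of degree at least three (which all non-close vertices are) as it has leaves. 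You instead bound the \emph{total} order of all components containing non-close vertices by a degree sum: such a component is a tree whose vertices all have degree~$\lfloor \frac{c}{2} \rfloor + 1$ in~$G'$, hence it sends exactly~$|A|(\lfloor \frac{c}{2} \rfloor - 1) + 2$ edges to the core, and dividing the core--periphery edge budget by~$\lfloor \frac{c}{2} \rfloor - 1 \geq 1$ gives~$\sum_A |A| = \Oh(\pe_{\lfloor \frac{c}{2} \rfloor + 1})$. Both arguments yield the claimed~$\Oh(\pe_{\lfloor \frac{c}{2} \rfloor + 1} \cdot c)$ vertex bound; yours is somewhat more direct, since it avoids the leaves-versus-internal-vertices tree lemma. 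Two minor points where you fall slightly short of the paper. First, the running time: you test for triangles through edges of~$E(A)$ explicitly, including triangles whose third vertex lies in~$\core$; this costs an extra factor of roughly~$c$, which is why you had to hedge with ``for fixed~$c$''. The paper achieves unconditional~$\Oh(n+m)$ time by testing only for cycles inside~$G[\peri]$; this may leave the instance not fully reduced with respect to Rule~\ref{Rule: delete triangle PCs}, but that is harmless, since every deletion actually performed remains safe (a cyclic component containing a triangle edge is good by Proposition~\ref{Prop: Triangles are good}, and one containing no triangle edge is good by Proposition~\ref{Prop: Cycle PCs are good}, whose proof only uses conditions local to~$A$), and the size analysis needs only the tree structure and the degree condition, not triangle-freeness. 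Second, you use~$|\core| \leq 2\pe_{\lfloor \frac{c}{2} \rfloor + 1}$ although you compute a~$2$-approximate deletion set, for which only~$|\core| \leq 4\pe_{\lfloor \frac{c}{2} \rfloor + 1}$ is guaranteed; this affects constants only, not the stated~$\Oh$-bounds.
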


\begin{proof}
  \iflong
  Throughout this proof let~$\pe:= 2 \pe_{\lfloor \frac{c}{2} \rfloor +1}$ denote the size of a 2-approximate set~$D_t$. From Proposition~\ref{Prop: Kernel for odd c} we know, that if~$c\geq 3$ is odd, we can compute a problem kernel with at most~$2\cdot \pe\cdot (\lfloor \frac{c}{2} \rfloor +1)$ vertices, and~$\mathcal{O}(\pe \cdot c^2)$ in~$\mathcal{O}(n+m)$ time. Let~$(G=(V,E),c,k)$ be an instance of \textsc{Multi-STC}, where~$c\geq 4$ is an even number. We compute an instance~$(G'=(V',E'),c,k)$ as follows: We start by applying Rules~\ref{Rule: Isolated PC} and \ref{Rule: delete low degree PCs} exhaustively. This can be done in~$\mathcal{O}(n+m)$ time by computing all connected components of~$G[\peri]$ and checking whether they have close vertices or vertices of low degree. Afterwards, we apply the Rules~\ref{Rule: delete triangle PCs} and \ref{Rule: delete cycle PCs} exhaustively. This can also be done in~$\mathcal{O}(n+m)$ time by testing if the connected components in~$G[\peri]$ with non-close vertices contain cycles. Note that at this point it is not important whether a cycle is a triangle or a cycle of length bigger than three.

We next show~$|V'| \leq(c+7) \cdot \pe$. \else   Throughout this proof let~$\pe:= 2\pe_{\lfloor \frac{c}{2} \rfloor +1}$ denote the size of a 2-approximate edge-deletion set~$D_{\lfloor \frac{c}{2} \rfloor +1}$ of~$G$ and~$\lfloor \frac{c}{2} \rfloor +1$. We defer the proof of the running time and show that~$|V'| \leq(c+7) \cdot \pe$. \fi Let~$\mathscr{C}$ be the set of core vertices of~$G'$ and~$\mathscr{P}$ be the set of periphery vertices of~$G'$. Since~$|\core| \leq 2 \pe$, and every~$v \in \core$ is incident with at most~$\frac{c}{2}+1$ edges, there are~$2\pe + 2\pe (\frac{c}{2} + 1) = \pe c + 4 \pe$ vertices in~$\core \cup N(\core)$.
It remains to show that there are at most~$3\pe$ non-close vertices in~$\peri$. %Since~$(G',c,k)$ is reduced regarding Rule~\ref{Rule: Isolated PC}, there are no isolated periphery components in~$G'$.
Consider the \iflong following family of periphery components.
\begin{align*} 
\mathcal{A} := \{ A \subseteq \peri \mid A \text{ is periphery component with } A \setminus A^* \neq \emptyset\} 
\end{align*}
\else
family~$\mathcal{A} := \{ A \subseteq \peri \mid A \text{ is periphery component with } A \setminus A^* \neq \emptyset\}$ of periphery components.
\fi
Since~$G'$ is reduced regarding Rules~\ref{Rule: delete low degree PCs}, \ref{Rule: delete triangle PCs}, and \ref{Rule: delete cycle PCs}, every~$G[A]$ with~$A \in \mathcal{A}$ is a tree, where every vertex~$v \in A$ has degree~$\deg_G(v)=\frac{c}{2}+1$ in~$G$. We define a \emph{leaf vertex} as a vertex~$v \in \bigcup_{A \in \mathcal{A}} A$ with~$\deg_{G[\peri]}(v)=1$. Note that these vertices are exactly the leaves of a tree~$G[A]$ for some~$A \in \mathcal{A}$, and all leaf vertices are close vertices in~$\peri$. Let~$p$ be the number of leaf vertices. We show that~$p \leq 3 \pe$. Since~$(G',c,k)$ is reduced regarding Rule~\ref{Rule: delete low degree PCs}, every vertex~$v \in \bigcup_{A \in \mathcal{A}} A$ has a degree of~$\deg_G(v) = \frac{c}{2}+1$, hence every leaf vertex has exactly~$\frac{c}{2}$ neighbors in~$\core$. We thus have \iflong
\begin{align*}
p \cdot \frac{c}{2} \leq |E(\core, N(\core))|\leq 2\pe( \frac{c}{2}+1),
\end{align*}
\else
$p \cdot \frac{c}{2} \leq |E(\core, N(\core))|\leq 2\pe( \frac{c}{2}+1),$
\fi
and therefore~$p \leq 2\pe+ \frac{4\pe}{c} \leq 3 \pe$, since~$c \geq 4$. Recall that every non-close vertex~$v$ in some tree~$G[A]$ satisfies~$\deg_{G[A]}(v)= \frac{c}{2}+1>2$. Since a tree has at most as many vertices with degree at least three as it has leaves, we conclude~$|(\bigcup_{A \in \mathcal{A}} A) \setminus (\bigcup_{A \in \mathcal{A}} A^*)| \leq 3\pe$. Hence, there are at most~$3 \pe$ non-close vertices in~$\peri$. Then,~$G'$ contains of at most~$(c+7) \cdot \pe \in \Oh(\pe c)$ vertices, as claimed. Since each vertex is incident with at most~$\frac{c}{2} + 1$ edges,~$G'$ has~$\Oh(\pe c^2)$~edges.
\end{proof}

\iflong
\section{ECS and Multi-STC with Lists}\label{sec:lists}
\newcommand{\meng}[1]{\ensuremath{  \{#1\} }}
\newcommand{\Q}{\ensuremath{ Q }}

In this section we present linear-size kernels for \textsc{EL-Multi-STC} and \textsc{EL-ECS} parameterized by~$\pe_2$ for all~$c$. The formal problem definitions are as follows.
\begin{definition}
Let $G=(V,E)$ be a graph, $\Psi: E \rightarrow 2^{\lbrace 1,2, \dots, c\rbrace}$ a mapping for some value $c \in \mathds{N}$ and $L=(S^1_L, \dots, S^c_L,W_L)$ a $c$-colored STC-labeling. We say that an edge $e \in E$ \emph{satisfies the $\Psi$-list property under $L$} if $e \in W_L$ or $ e \in S^{\alpha}_L$ for some $\alpha \in \Psi(e)$. We call a $c$-colored labeling \emph{$\Psi$-satisfying} if every edge $e \in E$ satisfies the $\Psi$-list property under $L$.
\end{definition}
\begin{quote}
  \textsc{Edge-List Edge-Colorable Subgraph (EL-ECS)}\\
  \textbf{Input}: An undirected graph~$G=(V, E)$,
  integers~$c \in \mathds{N}$ and~$k \in \mathds{N}$ and edge lists $\Psi: E \rightarrow 2^{\lbrace 1,2, \dots, c\rbrace}$.\\
  \textbf{Question}: Is there a $\Psi$-satisfying labeling~$L$ with~$|W_L| \leq k$?
\end{quote}

\begin{quote}
  \textsc{Edge-List Multi Strong Triadic Closure (EL-Multi-STC)}\\
  \textbf{Input}: An undirected graph~$G=(V, E)$,
  integers~$c \in \mathds{N}$ and~$k \in \mathds{N}$ and edge lists $\Psi: E \rightarrow 2^{\lbrace 1,2, \dots, c\rbrace}$.\\
  \textbf{Question}: Is there a $\Psi$-satisfying STC-labeling~$L$ with~$|W_L| \leq k$?
\end{quote}
Before showing the kernelization, we motivate the parameter~$\pe_2$ with the following negative result.
 % Since we know that for each pair~$r \geq 3, k\geq 3$ it is NP-complete to find the chromatic index of a~$r$-regular~$C_k$-free graph \cite{CE91}, it follows that \textsc{ECS} on cubic graphs is NP-hard even if $k=0, c=3$ and the input graph is triangle-free. 

\elhard*
\begin{proof}
We reduce from \textsc{ECS} on triangle-free cubic graphs which is known to be NP-hard \cite{CE91} to \textsc{EL-ECS}.  

Let~$G = (V,E), k = 0, c = 3$ be an instance of \textsc{ECS} such that~$G$ is a triangle-free cubic graph. Let~$k := 0$ and~$\Psi(e) := \meng{1,2,3}$ for all~$e \in E$. Since every edge~$e\in E$ is only allowed to be colored in either~$1,2,$ or~$3$, it is obvious that~$(G$,$c$,$\Psi$,$k)$ is a yes-instance of \textsc{EL-ECS} if and only if~$G$ is three-edge-colorable. Thus \textsc{EL-ECS} is NP-hard for all~$c \geq 3$ on triangle-free cubic graph even if~$\pe_3=k=0$.

Since \textsc{EL-Multi-STC} corresponds to \textsc{EL-ECS} on triangle-free graphs, the reduction to \textsc{EL-Multi-STC} is completely analogous.
\end{proof}

Next, we will present the linear kernel for the parameter $\pe_2$. We first show the kernel for \textsc{EL-ECS} and then show that (after some preprocessing) all rules are also safe for \textsc{EL-Multi-STC}.
Let~$(G,c,\Psi,k)$ be an instance of~\textsc{EL-ECS} with edge-deletion set~$D:=D_{\pe_2}$, and let~$\mathscr{C}$ and~$\mathscr{P}$ be the core and periphery of~$G$ as defined in Section \ref{sec:ed}. Every periphery component~$A \subseteq \mathscr{P}$ is either an isolated cycle, an isolated vertex, or a path. To differentiate between paths that are part of an isolated cycle and those that are not, we give the following definition. 
\begin{definition}   A \textit{bounded-degree path~$($BDP$)$}~$P=(v_1, \dots, v_r), r > 1,$ in~$G$ is a simple inclusion maximal path in~$G$ such that~$P$ is not part of an isolated cycle and~$ \deg(v_i) \leq 2$ for every~$v_i \in V(P)$. If at least one endpoint of~$P$ has degree one, then a BDP is called \textit{open}. Furthermore, we call a path~$P$ a \textit{bounded-degree subpath~$($BDSP$)$} if~$P$ is a (not necessarily proper) subpath of a BDP.
\end{definition}
%Note that every periphery component~$A \subseteq \mathscr{P}$ is either an isolated cycle, an isolated vertex, or a BDSP. 
By our definition, it is possible that endpoints of a BDP are core vertices. Obviously, the set of isolated cycles and the set of BDPs is unique and can be computed in~$O(n + m)$~time by finding all vertices~$v\in V$ with~$1 \leq \deg(v) \leq 2$, and then computing all induced subgraphs of these vertices that are not isolated vertices. Since all following rules only work on isolated cycles and BDSPs, we do not need to know $D$, $\mathscr{C},$ or~$\mathscr{P}$ in advance.  

With Algorithm~\ref{Algo} we will give a polynomial-time algorithm for finding an optimal labeling for isolated cycles and isolated BDPs which we will use to remove all of them. Thus the following rules aim to reduce the size of BDPs that are connected with at least one core vertex.
In the following, let~$ I_G(e)$ denote the incident edges of an edge~$e$ in~$G$. If~$G$ is clear from the context, we just write~$I(e)$. Observe that~$|I(e)|\leq 2$ for every edge~$e$ that is on a BDP. Furthermore, for a given labeling~$L$, an edge set~$E$, and a vertex~$v$, let~$\out^E_L(v) := \meng{L(e) \in \meng{1,\dots, c} \mid e \in E, v \in e}$ denote the set of strong colors of edges~$e \in E$ incident with~$v$ under~$L$. For a better use, we also extend the definition of~$\out^E_L$ to edges by setting~$\out^E_L(\meng{u,v}) := \out^E_L(u) \cup \out^E_L(v)$. Observe that~$|\out^{E \setminus E(P)}_L(e)|$ is at most two for every edge~$e$ that is on a BDSP~$P$ and at most one if~$P$ has also length at least 3.

\begin{redrule}\label{Rule: delete full or empty edges}
If~$(G, \Psi, k)$ contains an edge~$e$ on a BDP $P = (v_1, \dots, v_r)$ such that $P$ has length at least three or is an open BDP, then remove~$e$ if at least one of the following holds:
\begin{enumerate}
\item\label{Case: empty list}~$\Psi(e) = \emptyset$,
\item\label{Case: list > neighbors}~$|\Psi(e)| > |I(e)|$  or
\item\label{Case: conflictfree color}~$\Psi(e) \setminus \left( \bigcup_{e' \in I(e)} \Psi(e') \right) \neq \emptyset, v_1 \not\in e,$ and $v_r \not\in e$

\end{enumerate}
and set~$\Psi := \Psi|_{E \setminus \meng{e}}$. If~$ \Psi(e) = \emptyset$, also decrease~$k$ by one.
\end{redrule}

\begin{proposition}
Rule~\ref{Rule: delete full or empty edges} is safe and can be exhaustively applied in~$\Oh(n^2)$~time.
\end{proposition}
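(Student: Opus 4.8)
The plan is to prove safeness by a case analysis matching the three removal conditions, and then to bound the running time. Write $(G', \Psi', k')$ for the reduced instance, so $G' = (V, E \setminus \{e\})$, $\Psi' = \Psi|_{E \setminus \{e\}}$, and $k' = k-1$ exactly when $\Psi(e) = \emptyset$ (Condition~\ref{Case: empty list}) and $k' = k$ otherwise. Safeness is the statement that $(G,c,\Psi,k)$ and $(G',c,\Psi',k')$ are equivalent, and I would prove both directions by restricting and extending $\Psi$-satisfying proper labelings. The forward direction is uniform: given a $\Psi$-satisfying proper labeling $L$ of $G$ with $|W_L| \le k$, the restriction $L|_{E \setminus \{e\}}$ is a $\Psi'$-satisfying proper labeling of $G'$, and its weak-edge count is $|W_L|$ minus one if $e \in W_L$ and $|W_L|$ otherwise. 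When $\Psi(e) = \emptyset$ the edge $e$ must be weak in $L$, so the count drops by one, matching $k' = k-1$; in the other two cases $k' = k$ and $|W_{L'}| \le |W_L| \le k'$ regardless of whether $e$ was weak.

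The substance is the reverse direction, where a $\Psi$-satisfying proper labeling $L'$ of $G'$ must be extended to $G$ by coloring the single edge $e$; the remaining edges keep their colors from $L'$ and so still satisfy their lists and create no conflict. Here I would exploit that an edge on a BDP has at most two incident edges, so $|I(e)| \le 2$. If $\Psi(e) = \emptyset$, then $e$ cannot be strong in any $\Psi$-satisfying labeling, so I set $e$ weak; this adds one weak edge and respects the bound because $k' = k-1$. If $|\Psi(e)| > |I(e)|$, the incident edges occupy at most $|I(e)|$ strong colors, which cannot cover all of $\Psi(e)$, so there is a color $\alpha \in \Psi(e)$ used by no incident edge, and coloring $e$ with $\alpha$ yields a proper $\Psi$-satisfying labeling of $G$ with no new weak edge. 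If some $\alpha \in \Psi(e) \setminus \bigcup_{e' \in I(e)} \Psi(e')$ exists, then no incident edge can ever carry $\alpha$ in a $\Psi$-satisfying labeling, so again coloring $e$ with $\alpha$ produces no monochromatic incident pair and no new weak edge. In each case the weak-edge count is preserved, or increased by exactly one in lockstep with the decrease of $k$, which yields equivalence. The restriction $v_1 \notin e$, $v_r \notin e$ in Condition~\ref{Case: conflictfree color} simply confines that case to internal edges of $P$, whose two incident edges both lie on $P$.

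For the running time I would first compute all isolated cycles and BDPs once in $O(n+m)$ time, as already noted, and keep them in a structure supporting local updates. Every edge the rule deletes lies on a BDP, and since BDP-vertices have degree at most two there are only $O(n)$ such edges; hence at most $O(n)$ deletions occur. Testing the three conditions for a candidate edge only inspects $\Psi(e)$ and the lists of its at most two incident edges, costing $O(c)$ time, and a deletion can only change the applicability of the constantly many edges adjacent to $e$, which I re-examine locally. Summing the $O(c) \le O(n)$ work over the $O(n)$ deletions gives the claimed $O(n^2)$ bound.

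The routine parts are the restriction direction and the running-time bookkeeping, so I expect the main obstacle to be the extension direction, specifically arguing in the two strong-coloring cases that a free color $\alpha$ for $e$ always exists and that choosing it never introduces a conflict. This is exactly where the bound $|I(e)| \le 2$ for BDP-edges and the precise inequalities of Conditions~\ref{Case: list > neighbors} and~\ref{Case: conflictfree color} are used, and where the alignment of each case with the update of $k$ (so that no weak edge is spuriously created or lost) must be checked carefully.
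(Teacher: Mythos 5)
Your safeness argument is correct and essentially identical to the paper's: both directions restrict or extend labelings edge-locally, and in the two strong-color cases you select a color~$\alpha \in \Psi(e)$ that no incident edge uses (Condition~\ref{Case: list > neighbors}: because at most~$|I(e)|$ colors are occupied) or can ever use (Condition~\ref{Case: conflictfree color}: because~$\alpha$ lies outside the incident lists), exactly as the paper picks~$c_x \in \Psi(e) \setminus \out^{E'}_{L'}(e)$ and notes this set is nonempty under either condition. The weak-edge accounting (decrease~$k$ only in the empty-list case) also matches.

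The running-time argument, however, has a genuine gap: you charge~$\Oh(c)$ per applicability test and then assert~$\Oh(c) \le \Oh(n)$, but~$c$ is an arbitrary input integer and in the list variants it is \emph{not} bounded by~$n$ (large~$c$ cannot be dismissed as for plain \textsc{ECS}, since the lists rather than~$c$ carry the structure), so your accounting only yields~$\Oh(n \cdot c)$. The paper avoids this by ordering the sub-cases: Conditions~\ref{Case: empty list} and~\ref{Case: list > neighbors} are applied exhaustively \emph{before} Condition~\ref{Case: conflictfree color} is ever tested. Afterwards every surviving BDP edge satisfies~$1 \le |\Psi(e)| \le |I(e)| \le 2$, and since Condition~\ref{Case: conflictfree color} is restricted to internal edges of a BDP, both incident edges are themselves BDP edges with lists of size at most two; hence each Condition~\ref{Case: conflictfree color} test costs~$\Oh(1)$, a candidate edge is found in~$\Oh(n)$ time among the~$\Oh(n)$ BDP edges, and each application deletes an edge, giving~$\Oh(n^2)$ in total. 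Your local-update scheme goes through with the same fix (never test Condition~\ref{Case: conflictfree color} on~$e$ before Conditions~\ref{Case: empty list} and~\ref{Case: list > neighbors} have been exhausted on~$e$ and its incident edges); only the unjustified bound~$\Oh(c) \le \Oh(n)$ must be dropped.
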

\begin{proof}
Let~$(G, \Psi, k)$ be an \textsc{EL-ECS} instance with an edge~$e$ satisfying the conditions of Rule~\ref{Rule: delete full or empty edges}. Furthermore, let~$(G', \Psi', k')$ be the modified instance of \textsc{EL-ECS} constructed by Rule~\ref{Rule: delete full or empty edges}. First, we show that~$(G, \Psi, k)$ is a yes-instance if and only if~$(G', \Psi', k')$ is a yes-instance.

\textbf{Case~\ref{Case: empty list}:~$ \Psi(e) = \emptyset$.} In this case,~$k' = k - 1$.

$(\Rightarrow)$ Let~$L$ be a proper~$\Psi$-satisfying labeling for~$G$ with at most~$k$ weak edges. We define~$L' := L|_{E\setminus \meng{e}}$. Since~$\Psi(e) = \emptyset$ it is obvious that~$L(e) = 0$. Thus~$L'$ has at most~$k-1=k'$ weak edges. It is obvious that~$L'$ is a proper~$\Psi'$-satisfying labeling for~$G'$ since~$L$ is a proper~$\Psi$-satisfying labeling for~$G$.

$(\Leftarrow)$ Let~$L'$ be a proper~$\Psi'$-satisfying edge labeling for~$G'$ with at most~$k'$ weak edges. We define~$L$ by setting~$L(e) := 0$ and~$L|_{E \setminus \meng{e}} := L'|_{E \setminus \meng{e}}$. Obviously,~$L$ is a proper~$\Psi$-satisfying labeling with at most~$k'+1=k$ weak edges for~$G$. 

\textbf{Cases~\ref{Case: list > neighbors} and~\ref{Case: conflictfree color}.} In these cases,~$k' = k$.

$(\Rightarrow)$ Let~$L$ be a proper~$\Psi$-satisfying labeling for~$G$ with at most~$k$ weak edges. We define~$L' := L|_{E\setminus \meng{e}}$. It is obvious that~$L'$ is a proper~$\Psi'$-satisfying labeling for~$G'$ with at most~$k=k'$ weak edges.

$(\Leftarrow)$ Let~$L'$ be a proper~$\Psi'$-satisfying labeling for~$G'$ with at most~$k'$ weak edges and let~$e := \meng{v_1,v_2}$. We choose an arbitrary color~$c_x \in \Psi(e) \setminus (\out^{E'}_{L'}(e))$. This set is non-empty because of the conditions of Cases~\ref{Case: list > neighbors} and~\ref{Case: conflictfree color}. We define~$L$ by setting~$L(e) := c_x$ and~$L|_{E \setminus \meng{e}} := L'|_{E \setminus \meng{e}}$. Since~$c_x \in \Psi(e)$ is a strong color such that no incident edge of~$e$ is colored in~$c_x$ under~$L$, it follows directly that~$L$ is a proper~$\Psi$-satisfying labeling with at most~$k'=k$ weak edges.

Next, we show the running time. Finding an edge $e$ that satisfies the condition of Case~\ref{Case: conflictfree color} can only be done in $\Oh(n)$~time if the sizes of the allowed lists of colors of $e$ and its incident edges is constant. So we assume that Case~\ref{Case: empty list} and Case~\ref{Case: list > neighbors} are exhaustively applied, before we apply Case~\ref{Case: conflictfree color}. Every application of Rule~\ref{Rule: delete full or empty edges} removes one edge from~$G$ and an edge can be found in~$\Oh(n)$~time with the previous argumentation. Consequently, Rule~\ref{Rule: delete full or empty edges} can be exhaustively applied in~$\Oh(n^2)$~time.
\end{proof}

From this point onwards we assume that Rule~\ref{Rule: delete full or empty edges} is exhaustively applied.
Let~$P$ be a BDP of length at least 3 or an open BDP, then for every edge~$e\in E(P)$ it now holds that~$|\Psi(e)| \leq 2$ since~$|I(e)| \leq 2$ and Case~\ref{Case: list > neighbors} is exhaustively applied. In combination with Case~\ref{Case: empty list} it is obvious that~$1 \leq |\Psi(e)| \leq 2$. Furthermore, we get that there is no BDSP~$P'=(v_1,v_2,v_3,v_4)$ where~$e_i := \meng{v_i, v_{i+1}}$ for all~$i \in \meng{1,2,3}$ such that~$\Psi(e_1) = \Psi(e_3), \Psi(e_1) \neq \Psi(e_2),$ and~$|\Psi(e_2)| = 2$ because this implies that~$\Psi(e_2) \setminus \left( \bigcup_{e' \in I(e_2)} \Psi(e') \right) \neq \emptyset$ which is not possible after Case~\ref{Case: conflictfree color} is exhaustively applied. 

The next rule splits the center vertex $v_2$ in a BDSP of length three into two new vertices with one incident edge each, if the allowed colors of both incident edges of $v_2$ are disjoint. Hence, this rule splits some BDP into two open BDPs.

\begin{redrule}\label{Rule: split disjoint neighbors}
If~$(G, \Psi, k)$ contains a BDSP~$P = (v_1, v_2, v_3)$ with~$\Psi(e_1) \cap \Psi(e_2) = \emptyset$ where~$e_1 := \meng{v_1, v_2}, e_2 := \meng{v_2, v_3}$, then remove~$v_2$ from~$G$, add two new vertices~$u, w$, two new edges~$e'_1 := \meng{v_1, u}$ and~$e'_2 := \meng{w, v_3}$ to~$G$ and set~$\Psi(e'_1) := \Psi(e_1)$ and~$\Psi(e'_2) := \Psi(e_2)$.
\end{redrule}

\begin{proposition}
Rule~\ref{Rule: split disjoint neighbors} is safe and can be exhaustively applied in~$\Oh(n^2)$~time.
\end{proposition}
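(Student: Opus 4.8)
The plan is to mirror the proof of Rule~\ref{Rule: delete full or empty edges}: I would establish an equivalence between the colorings of the old instance $(G,\Psi,k)$ and the new one $(G',\Psi',k')$ (with $k'=k$) by simply transferring the colors of $e_1,e_2$ to $e_1',e_2'$ and back. The one essential observation is that, since $P=(v_1,v_2,v_3)$ is a BDSP, the center vertex $v_2$ has $\deg_G(v_2)=2$ with $N_G(v_2)=\{v_1,v_3\}$, so $e_1$ and $e_2$ are the only edges incident with $v_2$; after the split, $v_2$ is replaced by the degree-one vertices $u,w$ and the single incidence $\{e_1,e_2\}$ at $v_2$ disappears. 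Because $\Psi(e_1')=\Psi(e_1)$ and $\Psi(e_2')=\Psi(e_2)$ are disjoint, this lost incidence never carried a real constraint, which is precisely what makes the rule safe.

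For the forward direction, given a proper $\Psi$-satisfying labeling $L$ of $G$ with $|W_L|\le k$, I would define $L'$ on $G'$ by $L'(e_1'):=L(e_1)$, $L'(e_2'):=L(e_2)$, and $L'|_{E\setminus\{e_1,e_2\}}:=L|_{E\setminus\{e_1,e_2\}}$. The list property holds by construction and $|W_{L'}|=|W_L|\le k$, since an edge is weak under $L'$ exactly when its counterpart is weak under $L$. Properness is immediate: at $v_1$ and $v_3$ only the name of one incident edge changes but not its color, and $u,w$ carry a single incident edge each.

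The backward direction is where the disjointness hypothesis does the work. Given a proper $\Psi'$-satisfying $L'$ of $G'$ with $|W_{L'}|\le k$, I would set $L(e_1):=L'(e_1')$, $L(e_2):=L'(e_2')$, and $L|_{E\setminus\{e_1,e_2\}}:=L'|_{E\setminus\{e_1',e_2'\}}$ (note $E(G)\setminus\{e_1,e_2\}=E(G')\setminus\{e_1',e_2'\}$). The list property and the weak-edge count transfer as before, and properness at $v_1,v_3$ and all untouched vertices is inherited from $L'$ exactly as above. The only new potential conflict is at $v_2$, whose incident edges are exactly $e_1,e_2$: if both are strong then $L(e_1)\in\Psi(e_1)$ and $L(e_2)\in\Psi(e_2)$, and $\Psi(e_1)\cap\Psi(e_2)=\emptyset$ forces $L(e_1)\ne L(e_2)$; if one is weak there is nothing to check. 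Hence no two edges of the same strong color meet at $v_2$ and $L$ is proper.

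For termination and the running time I would use a monotone potential: each application deletes the degree-two vertex $v_2$ and introduces only the degree-one vertices $u,w$, while leaving the degrees of $v_1,v_3$ and of every other vertex unchanged. Thus the number of degree-two vertices strictly decreases by one and never increases, so the total number of applications is bounded by the initial count, which is at most $n$. An applicable BDSP can be found in $\Oh(n)$ time by scanning the degree-two vertices on BDPs and testing the $\Oh(1)$-size lists $\Psi(e_1),\Psi(e_2)$ for disjointness (the bound $|\Psi(e)|\le 2$ follows from the exhaustive application of Rule~\ref{Rule: delete full or empty edges}), giving $\Oh(n^2)$ overall. The part to get right is not a deep obstacle but the bookkeeping: confirming that removing $v_2$ eliminates exactly the one incidence it contributed and creates no new same-color incidence elsewhere, and that the potential argument truly forbids the rule from re-triggering on the freshly created open BDPs.
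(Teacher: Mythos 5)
Your proof is correct and follows essentially the same route as the paper's: transfer the colors between $e_1,e_2$ and $e_1',e_2'$, use $I_{G'}(e_i')\subseteq I_G(e_i)$ for properness in the forward direction, and use the disjointness $\Psi(e_1)\cap\Psi(e_2)=\emptyset$ to rule out a same-color conflict at $v_2$ in the backward direction. The only (minor) difference is the termination argument: you use a monotone potential (the number of degree-two vertices strictly decreases), whereas the paper bounds the number of applicable BDSPs each edge can lie on; both yield $\Oh(n)$ applications and the same $\Oh(n^2)$ overall bound.
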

\begin{proof}
Let~$(G, \Psi, k)$ be an \textsc{EL-ECS} instance with~$P = (v_1, v_2, v_3)$ satisfying the conditions of Rule~\ref{Rule: split disjoint neighbors}. Furthermore, let~$(G', \Psi', k)$ be the modified instance of \textsc{EL-ECS} constructed by Rule~\ref{Rule: split disjoint neighbors}. First, we show that~$(G, \Psi, k)$ is a yes-instance if and only if~$(G', \Psi', k)$ is a yes-instance.

$(\Rightarrow)$ Let~$L$ be a proper~$\Psi$-satisfying labeling for~$G$ with at most~$k$ weak edges. We define~$L'(e'_1) := L(e_1), L'(e'_2) := L(e_2), L'|_{E \setminus E(P)} := L|_{E \setminus E(P)}$. First, we show that~$L'$ is a proper~$\Psi'$-satisfying labeling with at most~$k$ weak edges. From the definition of~$L'$ and~$\Psi'$ it is obvious that~$L'$ is~$\Psi'$-satisfying, since~$L$ is~$\Psi$-satisfying and~$\Psi'(e'_1) = \Psi(e_1)$ and~$\Psi'(e'_2) = \Psi(e_2)$. It is also clear that~$L'$ has at most~$k$ weak edges because~$L$ has at most~$k$ weak edges. It remains to show that~$L'$ is a proper labeling. Since~$L$ is a proper labeling and~$I_{G'}(e'_i) \subseteq I_G(e_i)$ for all~$i \in \meng{1,2}$ this condition also holds.

$(\Leftarrow)$ Let~$L'$ be a proper~$\Psi'$-satisfying labeling for~$G'$ with at most~$k$ weak edges. We define $L$ by setting~$L(e_1) := L'(e'_1), L(e_2) := L'(e'_2)$, and $L|_{E \setminus E(P)} := L'|_{E \setminus E(P)}$.
First, we show that~$L$ is a proper~$\Psi$-satisfying labeling with at most~$k$ weak edges. Since~$L'$ is a proper labeling and~$I_G(e_i) = I_{G'}(e'_i) \cup E(P) \setminus \meng{e_i}$ for all~$i \in \meng{1,2}$ it is clear that the only conflict of two incident edges receiving the same strong color under~$L$ could be the edges of~$E(P)$. But since~$\Psi(e_1) \cap \Psi(e_2) = \emptyset$ it follows that~$e'_1$ and~$e'_2$ receive different strong colors under~$L'$ and hence also under~$L$. Thus it is is clear that~$L$ is a proper labeling. Clearly,~$L$ is also~$\Psi$-satisfying and has at most~$k$ weak edges. 

Next, we show the running time. Since every edge~$e$ in a BDP has at most two incident edges, it can only be on at most two BDSPs that satisfy the condition of Rule~\ref{Rule: split disjoint neighbors}. So Rule~\ref{Rule: split disjoint neighbors} can be applied at most~$\Oh(n)$~times and a BDSP~$P$ that fulfills the conditions of Rule~\ref{Rule: delete full or empty edges} can be found in~$\Oh(n)$~time. Consequently, Rule~\ref{Rule: split disjoint neighbors} can be exhaustively applied in~$\Oh(n^2)$.
\end{proof}
From this point onwards we assume that Rule~\ref{Rule: split disjoint neighbors} is exhaustively applied.

The next rule moves all edges with a list of size one in a BDP $P$  to one side of the BDP and all other edges to the other side.  
To prevent this rule to be applied infinitely often, we define an order on the vertices of every BDP. Let~$P=(v_1,\dots,v_r), r \geq 2,$ be a BDP and assume without loss of generality that~$\deg(v_1) \geq \deg(v_r)$, then we define an order~$\prec_P : V(P) \times V(P)$ in a way that~$v_i \prec_P v_j :\Leftrightarrow i < j$.

\begin{redrule}\label{sizeOneHugs}
If~$(G, \Psi, k)$ contains a BDP~$P$ and a subpath~$P' = (v_1, v_2, v_3)$ of~$P$ with~$v_1 \prec_P v_2, |\Psi(e_1)| = 2$ and~$|\Psi(e_2)| = 1$ where~$e_1 := \meng{v_1, v_2}, e_2 := \meng{v_2, v_3}$, then set~$\Psi(e_1) := \Psi(e_1) \setminus \Psi(e_2)$ and~$\Psi(e_2) := \Psi(e_1)$.
\end{redrule}

\begin{proposition}
Rule~\ref{sizeOneHugs} is safe and can be exhaustively applied in~$\Oh(n^2)$~time.
\end{proposition}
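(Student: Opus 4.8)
The plan is to prove the two claims separately: safeness by a purely local recolouring that touches only $e_1$ and $e_2$, and the running-time bound by a monovariant. First I would pin down the exact effect of the rule. Writing $\Psi(e_1)=\{a,b\}$, the precondition $|\Psi(e_2)|=1$ together with the exhaustive application of Rule~\ref{Rule: split disjoint neighbors} (which forbids consecutive edges of a BDSP with disjoint lists) forces $\Psi(e_1)\cap\Psi(e_2)\neq\emptyset$, hence $\Psi(e_2)=\{b\}$. Reading the assignment as simultaneous (old values on the right-hand sides), the rule replaces these by $\Psi'(e_1)=\{a\}$ and $\Psi'(e_2)=\{a,b\}$, leaving all other lists, the graph, and $k$ unchanged. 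I would also record the crucial local structure: since $v_2\in V(P)$ lies strictly inside the BDP, it has degree exactly two, so its only edges are $e_1,e_2$; consequently, the only edge incident with $e_1$ other than $e_2$ is a single edge $e_0$ at $v_1$, and the only edge incident with $e_2$ other than $e_1$ is a single edge $e_3$ at $v_3$ (either of which may be absent). This is exactly what will let the rest of the labelling stay frozen.

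For safeness I would show that $(G,\Psi,k)$ and $(G,\Psi',k)$ are equivalent by converting any solution into one for the other while changing only the colours of $e_1,e_2$, so the number of weak edges outside $\{e_1,e_2\}$ is untouched. $(\Rightarrow)$ Given a proper $\Psi$-satisfying $L$ with $|W_L|\le k$, I split on $L(e_1)$, using the convention $L(e)=0$ for a weak edge. If $L(e_1)\in\{a,0\}$, then $L(e_1)$ already obeys $\Psi'(e_1)$ and $L(e_2)\in\{b,0\}$ obeys $\Psi'(e_2)$, so $L$ is itself $\Psi'$-satisfying. If $L(e_1)=b$, properness at $v_2$ forces $L(e_2)=0$; I then set $L'(e_1):=0$ and give $e_2$ a colour of $\{a,b\}$ distinct from the (at most one) strong colour $L(e_3)$ on $e_3$ — such a colour exists because $|\{a,b\}|=2$. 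This leaves the weak count unchanged (one strong edge becomes weak, one weak becomes strong) and introduces no conflict, since $e_1$ is now weak and $e_2$ dodges $e_3$. The direction $(\Leftarrow)$ is symmetric: from a proper $\Psi'$-satisfying $L'$, if $L'(e_2)\in\{b,0\}$ then $L'$ is already $\Psi$-satisfying; otherwise $L'(e_2)=a$, which forces $L'(e_1)=0$, and I set $L(e_2):=0$ while colouring $e_1$ with a colour of $\{a,b\}$ distinct from $L'(e_0)$.

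For termination and the time bound I would use a monovariant. On each BDP $P$, number its edges $f_1,f_2,\dots$ in the order $\prec_P$ and let $\Phi$ be the sum, over all BDPs and all size-one edges $f_i$, of the index $i$. An application turns the earlier, size-two edge $e_1$ into a size-one edge and the later, size-one edge $e_2$ into a size-two edge; that is, a size-one edge at some position $i{+}1$ is traded for one at position $i$, so $\Phi$ decreases by exactly one and the rule terminates. Since distinct BDPs are edge-disjoint and all their vertices have degree at most two, the total number of BDP-edges is $\Oh(n)$, whence $\Phi\le\sum_P \ell_P^2\le(\sum_P \ell_P)^2=\Oh(n^2)$, bounding the number of applications by $\Oh(n^2)$. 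Processing each BDP by a single bubble-sort-style sweep of adjacent swaps costs $\Oh(\ell_P^2)$ per BDP and $\Oh(n^2)$ in total (the BDP decomposition being computed once in $\Oh(n+m)$ time), which gives the claimed bound.

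I expect the main obstacle to be the recolouring in the case $L(e_1)=b$: the naive choice $L'(e_2):=b$ can clash with a $b$-coloured edge $e_3$ at $v_3$, and the fix relies precisely on $e_2$ now carrying the two-element list $\{a,b\}$, so that it can avoid the single forbidden colour. The supporting structural point — that $e_0$ and $e_3$ are the \emph{only} external edges that could be violated, which is what lets the remainder of the labelling remain fixed — hinges on $\deg(v_2)\le 2$, and is the fact that makes the whole local argument go through.
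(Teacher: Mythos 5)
Your proposal is correct and follows essentially the same route as the paper's proof: the same normalization via the exhaustively applied Rule~\ref{Rule: split disjoint neighbors} (so that, in the paper's notation, $\Psi(e_1)=\{1,2\}$, $\Psi(e_2)=\{1\}$ becomes $\Psi'(e_1)=\{2\}$, $\Psi'(e_2)=\{1,2\}$), the same observation that the only problematic labeling is the one where $e_1$ carries the color removed from its list (forcing $e_2$ weak), the same local recoloring that swaps weak/strong between $e_1$ and $e_2$ while avoiding the single external edge at the far endpoint, and the same bubble-sort-style $\Oh(n^2)$ accounting for exhaustive application. Your explicit monovariant and the spelled-out converse direction are just more detailed versions of what the paper compresses into ``swap'' and ``completely analogous.''
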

\begin{proof}
Let~$(G, \Psi, k)$ be an \textsc{EL-ECS} instance with~$P' = (v_1, v_2, v_3)$ satisfying the conditions of Rule~\ref{sizeOneHugs}. Furthermore, let~$(G, \Psi', k)$ be the modified instance of \textsc{EL-ECS} constructed by Rule~\ref{sizeOneHugs}. First, we show that~$(G, \Psi, k)$ is a yes-instance if and only if~$(G, \Psi', k)$ is a yes-instance. 

$(\Rightarrow)$ Let~$L$ be a proper~$\Psi$-satisfying labeling for~$G$ with at most~$k$ weak edges. Since Rule~\ref{Rule: split disjoint neighbors} is applied exhaustively, we can assume without loss of generality that~$\Psi(e_1) = \meng{1,2}, \Psi(e_2) = \meng{1}$ and thus~$\Psi'(e_1) = \meng{2}$ and~$ \Psi'(e_2) = \meng{1,2}$. Because~$L$ is a proper labeling,~$e_1$ and~$e_2$ are not colored in the same strong color. Hence,~$\Q_L^{P'} \in (\meng{w,1,2} \times \meng{w,1}) \setminus \meng{(1,1)}$.

\textbf{Case 1:~$\Q_L^{P'} \neq (1,0)$}: Obviously,~$L$ is already a~$\Psi'$-satisfying labeling for~$G$ with at most~$k$ weak edges.

\textbf{Case 2:~$\Q_L^{P'} = (1,0)$}: Let~$c_x$ be an arbitrary color of~$\Psi'(e_2) \setminus \out^{E \setminus E(P')}_L(e_2)$. We define~$L'(e_1) := 0$ and~$L'(e_2) := c_x$ and~$L'|_{E \setminus E(P')} := L|_{E \setminus E(P')}$. Obviously,~$L'$ is a proper~$\Psi'$-satisfying labeling for~$G$ with at most~$k-1+1=k$ weak edges.

$(\Leftarrow)$ Since we do not use the order~$\prec_P$, this direction is completely analogous to the first one.

Next, we show the running time. Rule~\ref{sizeOneHugs} can be interpreted as a ''swap'' of the~$\Psi$-values of two incident edges, so we can exhaustively apply Rule~\ref{sizeOneHugs} on every BDP~$P$ in time~$\Oh(|V(P)|^2)$ with a modified version of Bubblesort. Since there are at most~$|V(P)|$ edges that lie on BDPs in~$G$ and thus~$\sum_{P : P \textrm{ is BDP}} |V(P)|^2 \in \Oh(|V|^2)$, Rule~\ref{sizeOneHugs} can be exhaustively applied in~$\Oh(n^2)$~time.
\end{proof}

After Rule~\ref{sizeOneHugs} is exhaustively applied, every BDP~$P = (v_1, \dots , v_r), r \geq 2,$ starts with edges that have only one allowed color and since Rule~\ref{Rule: split disjoint neighbors} is exhaustively applied, this unique color is the same for all these edges. From a specific vertex~$v_t$ onwards all edges have an allowed set of two colors. So the following rules aim to reduce the length of these two subpaths of~$P$ to a constant size. Observe that if~$P$ is open, then it holds that~$\deg(v_r) = 1$ and so~$|I(e_{r-1})| \leq 1$ where~$e_{r-1} := \meng{v_{r-1}, v_r}$. Thus~$|\Psi(e_{r-1})| = 1$ since otherwise Rule~\ref{Rule: delete full or empty edges} is applies. Consequently, after Rule~\ref{sizeOneHugs} is exhaustively applied, for every open BDP~$P$ it holds that~$|\Psi(e)| = 1$ for all~$e\in E(P)$. So the next rule reduces each open BDPs to one of length at most one, so that we only have to handle non-open BDPs afterwards. 

\begin{redrule}\label{Rule: size one at open end}
If~$(G, \Psi, k)$ contains a BDSP~$P = (v_1, v_2, v_3)$ with~$\Psi(e_1) = \Psi(e_2), |\Psi(e_1)| = 1$ and~$\deg(v_3) = 1$ where~$e_1 := \meng{v_1, v_2}, e_2 := \meng{v_2, v_3}$, then remove~$v_2, v_3$ from~$G$ and decrease~$k$ by one. Furthermore, set~$\Psi := \Psi|_{E \setminus E(P)}$
\end{redrule}

\begin{proposition}
Rule~\ref{Rule: size one at open end} is safe and can be exhaustively applied in~$\Oh(n^2)$~time.
\end{proposition}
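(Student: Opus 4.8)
The plan is to prove both directions of the equivalence for a single application and then bound the running time. Fix the BDSP $P=(v_1,v_2,v_3)$ meeting the conditions of the rule and write $\meng{a}:=\Psi(e_1)=\Psi(e_2)$ for the common singleton list, where $e_1=\meng{v_1,v_2}$ and $e_2=\meng{v_2,v_3}$. The first observation I would record is structural: since $P$ lies on a BDP, every vertex of $P$ has degree at most two, so the interior vertex $v_2$ is incident only with $e_1$ and $e_2$, and by assumption $\deg(v_3)=1$, so $v_3$ is incident only with $e_2$. Hence deleting $v_2$ and $v_3$ removes exactly the edge set $E(P)=\meng{e_1,e_2}$, and the reduced instance is $(G',\Psi',k-1)$ with $E(G')=E\setminus\meng{e_1,e_2}$ and $\Psi'=\Psi|_{E(G')}$.

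For the forward direction, let $L$ be a proper $\Psi$-satisfying labeling of $G$ with at most $k$ weak edges. Because $e_1$ and $e_2$ share the vertex $v_2$ and both have the list $\meng{a}$, they cannot both be strong under a proper labeling; thus at least one of them is weak. Restricting $L$ to $E(G')$ therefore yields a proper $\Psi'$-satisfying labeling $L'$ whose number of weak edges is at most $|W_L|-1\le k-1$.

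The backward direction is where the degree-one endpoint does the work. Given a proper $\Psi'$-satisfying labeling $L'$ of $G'$ with at most $k-1$ weak edges, I would extend it by setting $L(e_1):=0$, $L(e_2):=a$, and $L|_{E(G')}:=L'$. Colouring $e_2$ strongly with $a$ is always safe: the only edge incident with $e_2$ inside $E(P)$ is $e_1$, which is weak, and there is no further edge at $v_3$ since $\deg(v_3)=1$. Hence no two incident edges share a strong colour, so $L$ is proper; it is $\Psi$-satisfying because $a\in\Psi(e_2)$ and $e_1$ is weak; and it has $|W_{L'}|+1\le k$ weak edges. The key point, and the only place where any subtlety arises, is the choice to keep the edge at the degree-one end strong rather than $e_1$: this makes the one extra weak edge independent of whatever colour the possible second edge at $v_1$ receives under $L'$, so the rule charges exactly one weak edge and this matches the decrement of $k$.

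For the running time, all BDPs can be identified once in $\Oh(n+m)$ time, after which a triple $(v_1,v_2,v_3)$ satisfying the conditions can be located by a single scan in $\Oh(n)$ time. Each application deletes two vertices, so at most $\Oh(n)$ applications occur, giving the claimed $\Oh(n^2)$ bound.
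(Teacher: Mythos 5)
Your proof is correct and follows essentially the same route as the paper's: the forward direction uses that $e_1$ and $e_2$ share $v_2$ and carry the same singleton list, so at least one must be weak and restriction saves a weak edge, while the backward direction extends $L'$ by making $e_1$ weak and $e_2$ strong with the unique allowed color, which is safe precisely because $\deg(v_3)=1$ (and $\deg(v_2)\le 2$) forces $I(e_2)=\{e_1\}$. Your explicit structural remark that deleting $v_2,v_3$ removes exactly $E(P)$, and counting applications by deleted vertices rather than deleted edges, are only minor presentational differences from the paper's argument.
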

\begin{proof}
Let~$(G, \Psi, k)$ be an \textsc{EL-ECS} instance with~$P = (v_1, v_2, v_3)$ satisfying the conditions of Rule~\ref{Rule: size one at open end}. Furthermore, let~$(G', \Psi', k-1)$ be the modified instance of \textsc{EL-ECS} constructed by Rule~\ref{Rule: size one at open end}. First, we show that~$(G, \Psi, k)$ is a yes-instance if and only if~$(G', \Psi', k-1)$ is a yes-instance.

$(\Rightarrow)$ Let~$L$ be a proper~$\Psi$-satisfying labeling for~$G$ with at most~$k$ weak edges. Since~$\Psi(e_1) = \Psi(e_2)$, and~$|\Psi(e_1)| = 1$ at least one of these two edges is weak under~$L$ and thus it is obvious that~$L' := L|_{E \setminus E(P)}$ is a~$\Psi'$-satisfying labeling for~$G'$ with at most~$k - 1$ weak edges.

$(\Leftarrow)$ Let~$L'$ be a proper~$\Psi'$-satisfying labeling for~$G'$ with at most~$k-1$ weak edges and let~$c_x$ be the unique color of~$\Psi(e_1)$. We define~$L$ with~$L(e_1) := 0, L(e_2) := c_x$ and~$L|_{E \setminus E(P)} := L'|_{E \setminus E(P)}$. Since~$\deg(v_3) = 1$ it follows that~$I(e_2) = \meng{e_1}$ and thus that~$L$ is a proper~$\Psi$-satisfying labeling for~$G$ with at most~$k-1+1 = k$ weak edges.

Next, we show the running time. Every application of Rule~\ref{Rule: size one at open end} removes two edges from~$G$ and a BDSP~$P$ that fulfills the conditions of Rule~\ref{Rule: size one at open end} can be found in~$\Oh(n)$~time. Consequently, Rule~\ref{Rule: size one at open end} can be exhaustively applied in~$\Oh(n^2)$~time.
\end{proof}

As mentioned earlier, for every open BDP~$P$ it holds that~$\Psi(e_1) = \Psi(e_2)$ and~$|\Psi(e_1)| = 1$ for all~$e_1, e_2 \in E(P)$. So obviously, after Rule~\ref{Rule: size one at open end} is exhaustively applied,~$P$~has length at most two. Furthermore if~$P$ is an isolated BDP it follows with Case~\ref{Case: list > neighbors} of~Rule~\ref{Rule: delete full or empty edges}  that~$E(P) = \emptyset$. Thus after the Rules~\ref{Rule: delete full or empty edges} -~\ref{Rule: size one at open end} are exhaustively applied, there is no edge that lies on an isolated BDP.

\begin{proposition}\label{Prop: solve isolated paths}
The number of weak edges in an optimal~$\Psi$-satisfying labeling for an isolated BDP~$P$ can be computed in~$\Oh(|V(P)|^2)$~time.
\end{proposition}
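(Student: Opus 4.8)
The plan is to treat an isolated BDP $P=(v_1,\dots,v_r)$ as exactly what it is, a path forming its own connected component (its endpoints are not attached to any core vertex), and to exploit that on such a component the only coloring constraints act between consecutive edges. Writing $e_i:=\{v_i,v_{i+1}\}$ for $i\in\{1,\dots,r-1\}$, every internal vertex $v_i$ is incident only to $e_{i-1}$ and $e_i$, and $P$ has no chords, so for \textsc{EL-ECS} a labeling is proper exactly when no two consecutive edges carry the same strong color. Since $P$ is triangle-free, each consecutive pair $e_{i-1},e_i$ induces a $P_3$ on $v_{i-1},v_i,v_{i+1}$ with $\{v_{i-1},v_{i+1}\}\notin E$, so the STC constraint coincides with properness; hence the same computation solves both \textsc{EL-ECS} and \textsc{EL-Multi-STC} on $P$. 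The problem is therefore: pick for every edge $e_i$ either a color from $\Psi(e_i)$ or the weak label so that consecutive strong colors differ, minimizing the number of weak edges.

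First I would set up a left-to-right dynamic program. For $i\in\{1,\dots,r-1\}$ and $x\in\Psi(e_i)\cup\{0\}$ (with $0$ denoting weak) let $W[i,x]$ be the minimum number of weak edges among $e_1,\dots,e_i$ over all proper $\Psi$-satisfying labelings of $(v_1,\dots,v_{i+1})$ in which $e_i$ receives color $x$. The base case is $W[1,0]=1$ and $W[1,x]=0$ for $x\in\Psi(e_1)$. For $i\ge 2$ the recurrence is
\begin{align*}
W[i,0] &= 1+\min_{y\in\Psi(e_{i-1})\cup\{0\}}W[i-1,y],\\
W[i,x] &= \min_{y\in(\Psi(e_{i-1})\cup\{0\})\setminus\{x\}}W[i-1,y]\quad\text{for }x\in\Psi(e_i),
\end{align*}
and the optimum is $\min_{x\in\Psi(e_{r-1})\cup\{0\}}W[r-1,x]$. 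Correctness follows by induction on $i$: among $e_1,\dots,e_i$ the only edge sharing a vertex with $e_i$ is $e_{i-1}$, so a partial labeling of the prefix ending with $e_i=x$ is proper if and only if its restriction to the first $i-1$ edges is proper and $e_{i-1}$ is weak or colored differently from $x$, which is exactly what the recurrence encodes.

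Finally I would bound the running time. Keeping at each level the minimum value of $W[i-1,\cdot]$, a color attaining it, and the second-smallest value lets every entry $W[i,x]$ be evaluated in constant time, so level $i$ costs $\Oh(|\Psi(e_i)|+1)$ and the whole table $\Oh(\sum_{i=1}^{r-1}(|\Psi(e_i)|+1))$. To reach the claimed $\Oh(|V(P)|^2)$ bound I would first shrink every list of size at least three to size two: an edge $e_i$ with $|\Psi(e_i)|\ge 3$ can always be colored strong, since its at most two path-neighbors forbid at most two colors, and a left-to-right recoloring argument shows that truncating such a list never increases the optimum. After this preprocessing $|\Psi(e_i)|\le 2$ for every edge, so the table has $\Oh(|V(P)|)$ entries and is filled in $\Oh(|V(P)|)$ time, which lies within $\Oh(|V(P)|^2)$. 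The only genuinely delicate point is this list-truncation step—checking that, for each long-list edge, a color avoiding both neighbors can be chosen consistently along the whole path without forcing any previously strong edge to become weak; everything else is a routine path dynamic program.
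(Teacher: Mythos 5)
Your core idea---a left-to-right dynamic program with states $\Psi(e_i)\cup\{0\}$ per edge---is sound, and it is a genuinely different route from the paper: the paper never designs a dedicated algorithm for isolated BDPs, but instead builds the auxiliary \textsc{EL-ECS} instance $(G[V(P)],\Psi,|E|)$, applies its already-proven reduction rules for BDPs (Rules~\ref{Rule: delete full or empty edges}--\ref{Rule: size one at open end}) exhaustively, observes that the result is edgeless, and reads off the optimum as $|E|-k'$ from the remaining budget $k'$. Your recurrence, base case, and min/second-minimum trick are all correct as far as they go, and your observation that STC-labelings and proper labelings coincide on an induced path is also correct.

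However, there is a genuine gap exactly at the point you yourself flag as delicate: the preprocessing step that shrinks every list of size at least three down to size two is wrong as stated, and without some such step your claimed running time does not follow, since the DP costs $\Oh\bigl(\sum_i(|\Psi(e_i)|+1)\bigr)$ and the lists live in $\{1,\dots,c\}$ with $c$ unbounded in terms of $|V(P)|$. Truncating to an arbitrary pair of colors can strictly increase the optimum: take the path with edges $e_1,e_2,e_3$ and lists $\Psi(e_1)=\{1\}$, $\Psi(e_2)=\{1,2,3\}$, $\Psi(e_3)=\{2\}$; the optimum is $0$ (color $e_2$ with $3$), but after truncating $\Psi(e_2)$ to $\{1,2\}$ no all-strong labeling exists, so the optimum becomes $1$. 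Safety thus depends on \emph{which} two colors are kept, you give no rule for choosing them, and a color outside both neighbors' lists need not exist in general. The correct operation---and the one the paper uses in Case~2 of Rule~\ref{Rule: delete full or empty edges}---is to \emph{delete} any path edge $e$ with $|\Psi(e)|>|I(e)|$ outright, without charging the budget: such an edge has at most two incident edges and at least three allowed colors, so whatever labeling the rest of the path receives, $e$ can be colored strong, and hence removing it preserves the optimum while splitting the path into independent subpaths. After that replacement all remaining lists have size at most two, your table has $\Oh(|V(P)|)$ entries, and your proof goes through within the claimed $\Oh(|V(P)|^2)$ bound.
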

\begin{proof}
Let~$P$ be an isolated BDP in~$G$ and $\Psi : E \rightarrow 2^{\meng{1,\dots, c}}$, we set~$G_2 := G[V(P)]$. Then~$P$ is obviously an isolated (and thus open) BDP in~$G_2$. We construct the \textsc{EL-ECS} instance~$I := (G_2, \Psi, |E|)$. Since~$|E|$ is a trivial upper bound for the number of weak edges in an optimal labeling for~$G_2$,~$I$ is a yes-instance. Let~$I' := (G', \Psi', k')$ be the reduced instance after we apply Rules~\ref{Rule: delete full or empty edges} -~\ref{Rule: size one at open end} exhaustively. Since~$|E(G')| =\emptyset$ and~$I$ and~$I'$ are equivalent instances it follows that~$|E| - k'$ is the minimum number of weak edges for every optimal~$\Psi$-satisfying labeling of~$G$.
Since Rules~\ref{Rule: delete full or empty edges}--\ref{Rule: size one at open end} can all be exhaustively applied in~$\Oh(|V(P)|^2)$ time, this algorithm also runs in~$\Oh(|V(P)|^2)$~time.
\end{proof}

With this proposition at hand we can also compute the optimal number of weak edges for an for an isolated cycle~$C$.

\begin{algorithm} [t]
\caption{EL-ECS minimum weak edges in cycle}\label{Algo}
\begin{algorithmic}[1]
\State \textbf{Input:} A cycle~$C=(v_1,v_2,\dots,v_r, v_1), r\geq 3$,~$G=(V(C),E(C))$ and~$\Psi : E \rightarrow 2^{\meng{1,\dots,c}}$
\State \textbf{Output:} The number of weak edges in an optimal~$\Psi$-satisfying labeling for~$G$
\State$e_x := \meng{v_1, v_2}$
\State$G' := (V, E \setminus \meng{e_x})$
\State$\Psi' := \Psi|_{E(G') \meng{e_x}}$
\If {$|\Psi(e_x)| \geq 3$}
\State \Return minimum number of weak edges for~$(G', \Psi')$
\Else
\State$k' := 1$ + minimum number of weak edges for~$(G', \Psi')$ 
\For{$\alpha \in \Psi(e_x)$} \label{line:foreach}
\State Set $\Psi'|_{I(e_x)}(e) = \Psi(e) \setminus \meng{\alpha}$
\State$k' := \min(k',$ minimum number of weak edges for~$(G', \Psi'))$
\EndFor
\State \Return$k'$
\EndIf
\end{algorithmic}
\end{algorithm}

\begin{proposition}\label{Prop: solve isolated cycle}
Algorithm~\ref{Algo} is correct and runs in~$\Oh(|V(C)|^2)$~time.
\end{proposition}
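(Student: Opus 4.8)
The plan is to argue that deleting a single edge $e_x$ from the cycle turns it into an isolated bounded-degree path, so that every recursive call in Algorithm~\ref{Algo} reduces to the path case solved by Proposition~\ref{Prop: solve isolated paths}, and that branching on the ``state'' of $e_x$ exhausts all relevant labelings. Write $\mathrm{opt}(G',\Psi')$ for the minimum number of weak edges over all proper $\Psi'$-satisfying labelings of $G'$. First I would observe that $G'=(V,E\setminus\{e_x\})$ is the isolated BDP $P=(v_2,v_3,\dots,v_r,v_1)$, so each call ``minimum number of weak edges for $(G',\Psi')$'' is well-defined and, by Proposition~\ref{Prop: solve isolated paths}, correctly computes $\mathrm{opt}(G',\Psi')$ in $\Oh(|V(C)|^2)$ time. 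Note that $e_x$ is incident with exactly the two edges $\{v_r,v_1\}$ and $\{v_2,v_3\}$ in $G$, so in any proper $\Psi$-satisfying labeling it conflicts with at most two distinct strong colors.

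For the branch $|\Psi(e_x)|\ge 3$, I would show $\mathrm{opt}(G)=\mathrm{opt}(G')$. For ``$\le$'', restricting any optimal labeling of $G$ to $G'$ cannot increase the number of weak edges, so $\mathrm{opt}(G')\le\mathrm{opt}(G)$. For ``$\ge$'', take an optimal $\Psi'$-satisfying labeling $L'$ of $G'$; since $e_x$ has at least three allowed colors but at most two incident edges, the set $\Psi(e_x)\setminus\out^{E\setminus\{e_x\}}_{L'}(e_x)$ is nonempty, so we may color $e_x$ strongly and obtain a proper $\Psi$-satisfying labeling of $G$ with the same weak count, giving $\mathrm{opt}(G)\le\mathrm{opt}(G')$. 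Hence returning $\mathrm{opt}(G')$ is correct in this case.

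For the branch $|\Psi(e_x)|\le 2$, the key step is a case distinction on the color $L(e_x)$ that $e_x$ receives in an optimal labeling $L$ of $G$: either $L(e_x)=0$, or $L(e_x)=\alpha$ for some $\alpha\in\Psi(e_x)$. If $e_x$ is weak, then $L|_{G'}$ is $\Psi'$-satisfying with $|W_L|-1$ weak edges, and conversely any $\Psi'$-satisfying labeling of $G'$ extends to $G$ by setting $e_x$ weak; this is exactly the term $1+\mathrm{opt}(G',\Psi')$. If $L(e_x)=\alpha$, then both edges incident with $e_x$ avoid color $\alpha$, so $L|_{G'}$ is precisely a labeling of $G'$ for the list assignment obtained from $\Psi$ by deleting $\alpha$ from the two incident edges (the assignment built in the \textbf{for}-loop), and has the same weak count; conversely every such labeling extends to $G$ by coloring $e_x$ with $\alpha$. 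Taking the minimum over the weak case and all $\alpha\in\Psi(e_x)$ thus yields $\mathrm{opt}(G)$, which is what the algorithm returns.

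For the running time I would note that each branch performs at most $1+|\Psi(e_x)|\le 3$ calls to the path solver, each costing $\Oh(|V(C)|^2)$ by Proposition~\ref{Prop: solve isolated paths}, while constructing $G'$ and the modified lists costs $\Oh(|V(C)|)$; hence the total is $\Oh(|V(C)|^2)$. The only point requiring care is the correspondence in the $\alpha$-branch: one must verify that deleting $\alpha$ from the two incident edges captures exactly the proper-labeling constraint imposed by coloring $e_x$ with $\alpha$, and that a list becoming empty causes no trouble, since the path solver simply forces that edge to be weak via Case~\ref{Case: empty list} of Rule~\ref{Rule: delete full or empty edges}. Beyond this bookkeeping I expect no genuine obstacle.
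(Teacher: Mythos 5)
Your proposal is correct and follows essentially the same route as the paper's proof: delete $e_x$ to reduce to the isolated-BDP case of Proposition~\ref{Prop: solve isolated paths}, dispose of the branch $|\Psi(e_x)|\ge 3$ via the free-color argument (as in Case~\ref{Case: list > neighbors} of Rule~\ref{Rule: delete full or empty edges}), and otherwise branch on the color of $e_x$ (weak or $\alpha\in\Psi(e_x)$) by pruning $\alpha$ from the lists of the two incident edges, with the constant number of path-solver calls giving the $\Oh(|V(C)|^2)$ bound. Your write-up is in fact more detailed than the paper's terse argument (which even contains a typo, writing $r\ge 3$ where $|\Psi(e_x)|\ge 3$ is meant), but the underlying ideas coincide.
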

\begin{proof}
First, we show the correctness. Let $(C,G,\Psi)$ be the input of the algorithm, then $e_x$ is an arbitrary edge of this cycle. Obviously, after removing $e_x$ from $G$, the remaining graph consists of an isolated BDP for which we can find the number of weak edges in an optimal~$\Psi$-satisfying labeling with Proposition \ref{Prop: solve isolated paths}. 
If $r\geq 3$, it is safe to remove $e_x$ from $G$ with the same argumentation as in Rule \ref{Rule: delete full or empty edges}. Otherwise, we can branch over all possible colors $\alpha \in \Psi(e_x) \cup \meng{0}$ by removing $\alpha$ from the allowed set of strong colors of the incident edges of $e_x$. This is correct, since for every proper $\Psi$-satisfying labeling $L$ for $G$ it holds that $L(e_x) \in \Psi(e_x) \cup \meng{0}$ and $L(e_x) \not\in \out^{E\setminus E(C)}_L(e_x)$. Hence, one of the prelabelings is part of an optimal one.

Next, we show the running time. Since every operation in this algorithm has running time at most $\Oh(|V(C)|^2)$ and Line \ref{line:foreach} is finished after at most two turns, the whole algorithm obviously runs in $\Oh(|V(C)|^2)$~time.
\end{proof}

With Proposition~\ref{Prop: solve isolated cycle} it follows directly that the following Rule is safe and can be be exhaustively applied in~$\Oh(n^2)$~time.

\begin{redrule}\label{Rule: cycles}
  If~$(G, \Psi, k)$ contains an isolated cycle~$C$, then compute the number of weak edges~$k'$ in an optimal~$\Psi|_{E(C)}$-satisfying labeling for~$G[C]$ with Algorithm \ref{Algo}. Remove~$C$ from~$G$ and reduce~$k$ by~$k'$.
\end{redrule}
From this point onwards we assume that Rule~\ref{Rule: cycles} is exhaustively applied. So every periphery component is either an isolated vertex, an open BDSP of length two that is connected with a core vertex or a non-open BDSP. So the following rules aim to reduce every non-open BDP to length at most four.

\begin{redrule}\label{Rule: same lists max length two}
If~$(G, \Psi, k)$ contains a proper BDSP~$P = (v_1, v_2, v_3, v_4)$ with~$\Psi(e_1) = \Psi(e_2) = \Psi(e_3)$ where~$e_i := \meng{v_i, v_{i+1}}$ for all~$i \in \meng{1,2,3}$, then remove~$v_2, v_3$ from~$G$, add a new edge~$e' := \meng{v_1, v_4}$ and set~$\Psi(e) := \Psi(e_1)$. Also decrease~$k$ by one if~$|\Psi(e_1)| = 1$.
\end{redrule}

In other words: if there is a proper BDSP of length four where all three edges have the exact same list of allowed colors, then remove all these edges and connect both endpoints directly by an edge that has the same list of allowed colors as the removed edges.

\begin{proposition}
Rule~\ref{Rule: same lists max length two} is safe and can be exhaustively applied in~$\Oh(n^2)$~time.
\end{proposition}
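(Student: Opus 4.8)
The plan is to establish the yes/no equivalence of $(G,\Psi,k)$ and the reduced instance $(G',\Psi',k')$, where $k'=k-1$ if $|\Psi(e_1)|=1$ and $k'=k$ otherwise, by a local exchange argument that fixes the labeling on $E\setminus E(P)$ and then compares, for that fixed outside labeling, the minimum number of weak edges needed on the three path edges $e_1,e_2,e_3$ in $G$ with the minimum needed on the single replacement edge $e'$ in $G'$. The structural facts I would exploit are that $v_2$ and $v_3$ are interior vertices of a BDP, so $\deg(v_2)=\deg(v_3)=2$ and neither has an edge leaving $P$ (hence $e_2$ carries no constraint from outside), and that $v_1,v_4$ lie on a BDP, so each has at most one incident edge outside $\meng{e_1,e_2,e_3}$. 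Writing $A_1:=\out^{E\setminus E(P)}_L(v_1)$ and $A_4:=\out^{E\setminus E(P)}_L(v_4)$ for the colours blocked at the two endpoints by the fixed outside labeling, this yields the crucial bound $|A_1|,|A_4|\le 1$. I would also note that $v_1$ and $v_4$ are non-adjacent, since otherwise $v_1v_2v_3v_4$ would close an isolated cycle, which BDPs exclude; hence adding $e'$ keeps $G'$ simple.

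For the case $|\Psi(e_1)|=2$, say $\Psi(e_1)=\meng{\alpha,\beta}$, I would show the two minima coincide, justifying $k'=k$. A weak-free colouring of the three-edge path on two colours is forced to have the form $e_1=e_3=x$, $e_2=\bar{x}$, which blocks the single colour $x$ at both endpoints, exactly the effect of colouring $e'=x$; thus the path admits a weak-free colouring iff some colour lies outside $A_1\cup A_4$ iff $e'$ can be made strong. When instead every colour of $\Psi(e_1)$ lies in $A_1\cup A_4$, the bound $|A_1|,|A_4|\le 1$ still lets me colour $e_1$ avoiding $A_1$ and $e_3$ avoiding $A_4$ while making $e_2$ weak, so the path needs exactly one weak edge, matching the forced single weak edge of $e'$. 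This is the step I expect to be the main obstacle: one must rule out that the path could profit from blocking \emph{different} colours at $v_1$ and $v_4$, something $e'$ cannot imitate. The $|A_i|\le 1$ bound together with the observation that a configuration blocking only one endpoint is dominated by a configuration of the same weak-count blocking nothing is precisely what closes this gap.

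For the case $|\Psi(e_1)|=1$, say $\Psi(e_1)=\meng{\alpha}$, all three path edges are forced to colour $\alpha$, so at least one is weak, and the only non-dominated configurations are the colour sequences $(\alpha,0,\alpha)$ (one weak edge, blocking $\alpha$ at both ends, available iff $\alpha\notin A_1\cup A_4$) and $(0,\alpha,0)$ (two weak edges, blocking nothing). These mirror the two options for $e'$, namely strong $\alpha$ (zero weak, available iff $\alpha\notin A_1\cup A_4$) and weak (one weak), with a uniform shift of one weak edge, which justifies decreasing $k$ by one. I would then spell out both directions explicitly: from a labeling of $G$, recolour the path to a non-dominated configuration without increasing its weak count and read off the colour of $e'$; conversely, lift the colour of $e'$ to the matching path configuration, each time verifying properness at $v_1$ and $v_4$ via $\alpha\notin A_1$ and $\alpha\notin A_4$.

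Finally, for the running time I would argue that each application deletes the two vertices $v_2,v_3$, so at most $\Oh(n)$ applications occur, and a qualifying length-three BDSP whose three edges share the same colour list can be located in $\Oh(n)$ time by scanning the BDPs, giving the claimed $\Oh(n^2)$ bound.
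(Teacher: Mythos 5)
Your proof is correct and follows essentially the same route as the paper's: both arguments fix the labeling outside $E(P)$ and exchange path configurations for configurations of the new edge~$e'$ (and back), relying on the same structural facts — the interior vertices $v_2,v_3$ carry no outside constraints, each endpoint blocks at most one colour, weak-free colourings of the path are forced into the alternating form $(x,\bar{x},x)$, and in the $|\Psi(e_1)|=1$ case the optima differ by exactly one weak edge, justifying the decrease of~$k$. The running-time argument (each application removes two vertices/edges, a qualifying BDSP is found in $\Oh(n)$ time, hence $\Oh(n^2)$ overall) also matches the paper's.
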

\begin{proof}
Let~$(G, \Psi, k)$ be an \textsc{EL-ECS} instance with~$P = (v_1, v_2, v_3, v_4)$ satisfying the conditions of Rule~\ref{Rule: same lists max length two}. Furthermore, let~$(G', \Psi', k')$ be the modified instance of \textsc{EL-ECS} constructed by Rule~\ref{Rule: same lists max length two} and let~$P' := (v_1,v_4)$. First, we show that~$(G, \Psi, k)$ is a yes-instance if and only if~$(G', \Psi', k')$ is a yes-instance.  

\textbf{Case 1:~$ |\Psi(e_1)| = 1$.} We can assume without loss of generality that~$\Psi(e_1) = \Psi(e_2) = \Psi(e_3) = \meng{1}$ and thus~$\Psi'(e') = \meng{1}$. 

$(\Rightarrow)$ Let~$L$ be a proper~$\Psi$-satisfying labeling with at most~$k$ weak edges for~$G$. Obviously,~$\Q^P_L$ contains at least one weak color since~$(1,1,1)$ is not a proper labeling for~$E(P)$. Initialize $L'$ with~$L'|_{E \setminus E(P)} = L|_{E \setminus E(P)}$.
First, assume that~$\Q^P_L$ contains exactly one weak color. Then,~$\Q^P_L = (1,0,1)$ since otherwise~$L$ is not a proper labeling. We define~$L'(e') = 1$. Since~$1 \not \in \out^{E \setminus E(P)}_L(e_1) \cup \out^{E \setminus E(P)}_L(e_3)$ it is obvious that~$L'$ is a proper~$\Psi'$-satisfying labeling for~$G'$ with at most~$k-1=k'$ weak edges.
Second, assume that~$\Q^P_L$ contains at least two weak colors. We define~$L'(e') = 0$. Obviously,~$L'$ is a proper~$\Psi'$-satisfying labeling for~$G'$ with at most~$k-2+1=k'$ weak edges.

$(\Leftarrow)$ Let~$L'$ be a proper~$\Psi'$-satisfying labeling with at most~$k'$ weak edges for~$G'$. Initialize $L$ with~$L|_{E \setminus E(P)} = L'|_{E \setminus E(P)}$.
First, assume that~$L'(e') = 0$. We define~$L(e_1) = L(e_3) = 0$ and $L(e_2) = 1$. It is obvious that~$L$ is a proper~$\Psi$-satisfying labeling for~$G$ with at most~$k'-1+2=k$ weak edges.
Second, assume that~$L'(e') = 1$. Since~$1 \not \in \out^{E' \setminus E(P')}_{L'}(e')$, we define $L(e_1) = L(e_3) = 1$ and $L(e_2) = 0$. It is obvious that~$L$ is a proper~$\Psi$-satisfying labeling for~$G$ with at most~$k'+1=k$ weak edges.

\textbf{Case 2:~$ |\Psi(e_1)| = 2$.} We can assume without loss of generality that~$\Psi(e_1) = \Psi(e_2) = \Psi(e_3) = \meng{1,2}$ and thus~$\Psi'(e') = \meng{1,2}$.

$(\Rightarrow)$ Let~$L$ be a proper~$\Psi$-satisfying labeling with at most~$k$ weak edges for~$G$. Initialize $L'$ with~$L'|_{E \setminus E(P)} = L|_{E \setminus E(P)}$. 
First, assume that~$\Q^P_L$ contains no weak color. Hence,~$\Q^P_L \in \meng{(1,2,1),(2,1,2)}$. We assume without loss of generality that~$\Q^P_L = (1,2,1)$ and define~$L'(e') = 1$. Since~$1 \not \in \out^{E \setminus E(P)}_L(e_1) \cup \out^{E \setminus E(P)}_L(e_3)$ it is obvious that~$L'$ is a proper~$\Psi'$-satisfying labeling for~$G'$ with at most~$k = k'$ weak edges.
Second, assume~$\Q^P_L$ contains at least one weak color. We define~$L'(e') = 0$. Obviously,~$L'$ is a proper~$\Psi'$-satisfying labeling for~$G'$ with at most~$k-1+1=k'$ weak edges.

$(\Leftarrow)$ Let~$L'$ be a proper~$\Psi'$-satisfying labeling with at most~$k'=k$ weak edges for~$G'$. Initialize $L$ with~$L|_{E \setminus E(P)} = L'|_{E \setminus E(P)}$.
First, assume that~$L'(e') = 0$. Choose an arbitrary color~$c_x \in \Psi(e_3) \setminus \out^{E \setminus E(P)}_{L'}(e_3)$ and let~$c_y$ be the unique remaining color in~$\Psi(e_2) \setminus \meng{c_x}$. We define~$L(e_1) = 0, L(e_2) = c_y$ and $L(e_3) = c_x$. Obviously,~$L$ is a proper~$\Psi$-satisfying edge with at most~$k'-1+1=k$ weak edges for~$G$.
Second, assume that~$L'(e') \neq 0$. Assume without loss of generality that~$L'(e') = 1$. We define $L(e_1) = L(e_3) = 1$ and $L(e_2) = 2$. Since~$1 \not \in \out^{E' \setminus E(P')}_{L'}(e')$ it is obvious that~$L$ is a proper~$\Psi$-satisfying labeling with at most~$k'=k$ weak edges for~$G$.

Next, we show the running time. Every application of Rule~\ref{Rule: same lists max length two} removes two edge from~$G$ and a BDSP~$P$ that fulfills the conditions of Rule~\ref{Rule: same lists max length two} can be found in~$\Oh(n)$~time. Consequently, Rule~\ref{Rule: same lists max length two} can be exhaustively applied in~$\Oh(n^2)$~time.
\end{proof}

So after Rule~\ref{Rule: same lists max length two} is applied exhaustively, every BDP of length at least five contains at most two edges that have a list of size one. Since we aim to reduce all BDPs to length at most four, the following Rule decreases the number of edges on BDPs that have a list of allowed colors of size two by changing the lists on those edges or removing them. 

\begin{redrule}\label{Rule: P4 with different size two lists}
If~$(G, \Psi, k)$ contains a proper BDSP~$P = (v_1, v_2, v_3, v_4)$ with~$|\Psi(e_i)| = 2, \Psi(e_1) \neq \Psi(e_3)$ and~$\Psi(e_2) \neq \Psi(e_3)$ where~$e_i := \meng{v_i, v_{i+1}}$ for all~$i \in \meng{1,2,3}$, then increase~$k$ by one, and do the following:
\begin{itemize}
\item\label{Case: first two same}\textbf{If~$ \Psi(e_1) = \Psi(e_2)$}, then let~$c_x$ be the unique color of~$\Psi(e_2) \cap \Psi(e_3)$ and set~$\Psi(e_1) = \Psi(e_2) = \meng{c_x}$. 
\item\label{Case: actually four colors}\textbf{If~$ \Psi(e_1) \neq \Psi(e_2)$ and $\Psi(e_1) \cap \Psi(e_3) = \emptyset$}, then let~$c_x$ be the unique color of~$\Psi(e_1) \cap \Psi(e_2)$ and~$c_y$ be the unique color of~$\Psi(e_3) \setminus \Psi(e_2)$. Set~$\Psi(e_1) = \Psi(e_2) = \meng{c_x}$ and~$\Psi(e_3) = \meng{c_x, c_y}$.
\item\label{Case: loop color}\textbf{If~$ \Psi(e_1) \neq \Psi(e_2)$ and $\Psi(e_1) \cap \Psi(e_3) \neq \emptyset$}, then let~$c_x$ be the unique color of~$\Psi(e_1) \cap \Psi(e_3)$, remove~$v_3$ from~$G$ and add a new edge~$e' := \meng{v_2, v_4}$. Furthermore, set~$\Psi(e_1) = \Psi(e') = \meng{c_x}$.
\end{itemize}
\end{redrule}

\begin{proposition}
Rule~\ref{Rule: P4 with different size two lists} is safe and can be exhaustively applied in~$\Oh(n^2)$~time.
\end{proposition}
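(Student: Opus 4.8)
The plan is to prove the two assertions in turn: first safeness, via the usual scheme of showing that $(G,\Psi,k)$ is a yes-instance if and only if the modified instance $(G',\Psi',k')$ is, and then the $\Oh(n^2)$ running time. For safeness I would, in each of the three cases, give explicit constructions in both directions that agree with the given labeling on $E\setminus E(P)$ and only re-color (or, in the third case, re-route) the at most three path edges. Before the case split I would record the structural facts that make the rule well-defined. Since Rule~\ref{Rule: split disjoint neighbors} is applied exhaustively, any two consecutive edges on the BDSP have overlapping lists; together with $|\Psi(e_i)|=2$ and the hypotheses $\Psi(e_1)\neq\Psi(e_3)$, $\Psi(e_2)\neq\Psi(e_3)$ this forces $|\Psi(e_2)\cap\Psi(e_3)|=1$ and, in the cases where $\Psi(e_1)\neq\Psi(e_2)$, also $|\Psi(e_1)\cap\Psi(e_2)|=1$, so the colors $c_x$ and $c_y$ named in the rule exist and are unique. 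Moreover, because $P$ has length three and $v_2,v_3$ each carry both their incident edges on $P$, we have $\deg(v_2)=\deg(v_3)=2$, hence $\out^{E\setminus E(P)}_L(e_2)=\emptyset$ and $|\out^{E\setminus E(P)}_L(e_1)|\le 1$, $|\out^{E\setminus E(P)}_L(e_3)|\le 1$ for every labeling $L$; this single external color per outer edge is what supplies the recoloring freedom.

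The conceptual heart of all three cases is identical. Each variant rewrites the lists (and in Case~3 the graph) so that two incident path edges must take their strong color from the same singleton list $\{c_x\}$; since incident edges cannot share a strong color, at least one of them becomes weak, costing exactly one extra weak edge relative to the original path, which is precisely why the rule sets $k'=k+1$. I would make this quantitative by arguing that, for every fixed choice of the at most two external colors, the minimum number of weak edges among $\{e_1,e_2,e_3\}$ in the modified instance equals that in the original plus one. For the forward direction I would take a proper $\Psi$-satisfying labeling $L$ of $G$ with at most $k$ weak edges and build $L'$ agreeing with $L$ off $P$: using that $e_2$ is externally unconstrained and each of $e_1,e_3$ is blocked by at most one external color, I can realize an admissible coloring of the three edges under the new lists that by the count above adds at most one weak edge, yielding at most $k+1=k'$ weak edges.

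For the backward direction I would reverse this. In Cases~1 and~2 the graph is unchanged, so I need only re-color $e_1,e_2,e_3$ consistently with the original, larger lists while saving one weak edge; the freedom regained by enlarging a list (restoring the second color of $\Psi(e_1)$, or restoring the color $q$ to $\Psi(e_3)$ in Case~2), combined with the single external color on each of $e_1,e_3$, always admits a strong coloring removing one weak edge. In Case~3 the construction additionally re-inserts $v_3$ together with $e_2=\{v_2,v_3\}$ and $e_3=\{v_3,v_4\}$, translating the color of the contracted edge $e'=\{v_2,v_4\}$ back into colors for $e_1,e_2,e_3$: I would color the two outer edges using $c_x\in\Psi(e_1)\cap\Psi(e_3)$ and pick for $e_2$ any color of its size-two list distinct from its two path-neighbors, which is possible because $e_2$ has no external constraint. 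In every case I finish by checking that no two incident edges share a strong color (properness) and that every edge is weak or colored from its list ($\Psi$-satisfaction).

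For the running time, a qualifying BDSP is found in $\Oh(n)$ time by scanning the degree-at-most-two vertices and their lists; each application either deletes edges/vertices or strictly decreases the number of size-two lists lying on bounded-degree paths, so the rule fires $\Oh(n)$ times, giving $\Oh(n^2)$ overall. I expect the main obstacle to be the case-by-case bookkeeping in the backward direction, especially Case~3 where the graph itself changes: one must verify that re-inserting $v_3$ with the reconstructed colors never creates a strong $P_3$ at $v_2$ or $v_4$ despite the single external color possibly present there, and that the weak-edge count comes out to exactly $k$. Confirming that the one extra weak edge charged by $k\mapsto k+1$ is tight in \emph{every} external configuration---neither over- nor undercounted---is the delicate point, and it is exactly where the size-two-list and single-external-color facts must be invoked with care.
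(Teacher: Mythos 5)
Your plan follows the paper's proof quite closely: the same three-way case split, forward and backward constructions that agree with the given labeling on $E\setminus E(P)$, the structural facts that $e_2$ has no incident edge outside $P$ while $e_1$ and $e_3$ have at most one each, and the same running-time argument via the number of size-two lists on bounded-degree paths. Your organizing invariant---for every fixed pair of external colors, the minimum number of weak edges on $P$ in the rewritten instance equals the original minimum plus one---is exactly what the paper's enumeration of color sequences verifies case by case.

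The genuine gap is that this invariant is \emph{false} in your second case if $c_x$ is read, as you explicitly commit to, as ``the color named in the rule'', i.e.\ the unique color of $\Psi(e_1)\cap\Psi(e_2)$. Take $\Psi(e_1)=\{1,2\}$, $\Psi(e_2)=\{2,3\}$, $\Psi(e_3)=\{3,4\}$, so $c_x=2$, $c_y=4$ and the rewritten lists are $\{2\},\{2\},\{2,4\}$. Consider the external configuration in which the edge outside $P$ at $v_1$ is strong with color $1$ and the edge outside $P$ at $v_4$ is strong with color $4$. In the original instance the path needs exactly one weak edge: all-strong would force $e_1=2$, then $e_2=3$, and then $e_3\in\{3,4\}$ is blocked both by $e_2$ and by the external color $4$. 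But in the rewritten instance the path also needs only one weak edge: $e_1=2$, $e_2$ weak, $e_3=2$ is proper, because $e_1$ and $e_3$ are not incident and color $2$ is blocked at neither end. (Symmetrically, external colors $2$ and $4$ make the rewritten minimum exceed the original one by two.) So the minimum does not rise by one while $k$ does; embedding this configuration into an instance reduced with respect to all preceding rules turns a no-instance with budget $k$ into a yes-instance with budget $k+1$, so the equivalence you plan to prove simply does not hold for this reading of Rule~\ref{Rule: P4 with different size two lists}. What the paper's own proof of Case~2 actually does is rewrite the lists to $\Psi'(e_1)=\Psi'(e_2)=\{1\}$ and $\Psi'(e_3)=\{1,4\}$, i.e.\ it takes $c_x$ to be the unique color of $\Psi(e_1)\setminus\Psi(e_2)$; for that version your invariant holds in every external configuration (as it does in Cases~1 and~3 as stated). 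Hence your deferred ``delicate point'' is not mere bookkeeping: executed literally, the Case-2 verification cannot be completed, and completing the proof requires first detecting and resolving this discrepancy between the rule's wording and the lists the paper's proof uses. Relatedly, your description of the backward direction as recoloring ``consistently with the original, larger lists'' is inaccurate: in Case~2 the rewritten lists are not subsets of the original ones in either version, so going back is a genuine color translation rather than an extension---and that translation is exactly where the counterexample above lives.
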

\begin{proof}
Let~$(G, \Psi, k)$ be an \textsc{EL-ECS} instance with~$P = (v_1, v_2, v_3, v_4)$ satisfying the conditions of Rule~\ref{Rule: P4 with different size two lists}. Furthermore, let~$(G', \Psi', k')$ be the modified instance of \textsc{EL-ECS} constructed by Rule~\ref{Rule: P4 with different size two lists}. First, we show that~$(G, \Psi, k)$ is a yes-instance if and only if~$(G', \Psi', k')$ is a yes-instance.  

\textbf{Case 1:~$\Psi(e_1) = \Psi(e_2)$.} Since Rule~\ref{Rule: split disjoint neighbors} is exhaustively applied, we can assume without loss of generality that~$\Psi(e_1) = \Psi(e_2) = \meng{1, 2}, \Psi(e_3) = \meng{2,3}$ and thus~$\Psi'(e_1) = \Psi'(e_2) = \meng{2}, \Psi'(e_3) = \meng{2,3}$. 

$(\Rightarrow)$ Let~$L$ be a proper~$\Psi$-satisfying labeling with at most~$k$ weak edges for~$G$. Initialize $L'$ with $L'|_{E \setminus E(P)} = L|_{E \setminus E(P)}$.
First, assume that~$\Q^P_L$ contains at least one weak color. We define~$L'(e_1) = L'(e_3) = 0$ and~$L'(e_2) = 2$. Obviously, $L'$ is a proper~$\Psi'$-satisfying labeling for~$G'$ with at most~$k-1+2=k'$ weak edges.
Second, assume that~$\Q^P_L$ contains no weak color. Then,~$\Q^P_L \in \meng{(1,2,3)$,$ (2,1,2)$,$(2,1,3)}$. 
If~$\Q^P_L = (1,2,3)$, define~$L'(e_1) = 0,L'(e_2) = 2$ and~$L'(e_3) = 3$. Otherwise, define~$L'(e_1) = 2,L'(e_2) = 0$ and~$L'(e_3) = L(e_3)$. In both cases it is obvious that $L'$ is a proper~$\Psi'$-satisfying labeling for~$G'$ with at most~$k+1=k'$ weak edges.

$(\Leftarrow)$ Let~$L'$ be a proper~$\Psi'$-satisfying labeling with at most~$k'$ weak edges for~$G'$. Since~$\Psi'(e_1) = \Psi'(e_2) = \meng{2}$ it follows that~$Q^P_{L'}$ contains at least one weak color. Initialize $L$ with $L|_{E \setminus E(P)} = L'|_{E \setminus E(P)}$.
First, assume that~$Q^P_{L'}$ contains at least two weak colors. We choose two arbitrary colors~$c_x \in \Psi(e_3) \setminus \out^{E \setminus E(P)}_{L'}(e_3),c_y \in \Psi(e_2) \setminus \meng{c_x}$ and define~$L(e_1) = 0, L(e_2) = c_y, L(e_3) = c_x$. Obviously, $L$ is a proper~$\Psi$-satisfying labeling for~$G$ with at most~$k'-1=k$ weak edges.
Second, assume that~$Q^P_{L'}$ contains exactly one weak color. Then,~$Q^P_{L'} \in \meng{(2,0,2)$,$(2,0,3)$,$(0,2,3)}$.
If~$\Q^P_{L'} = (2,0,2)$, define~$L(e_1) = L(e_3) = 2$ and~$L(e_2) = 1$. Otherwise, choose an arbitrary color~$c_x \in \Psi(e_1) \setminus \out^{E \setminus E(P)}_{L'}(e_1)$, let~$c_y$ be the unique remaining color in~$\Psi(e_2) \setminus \meng{c_x}$ and define~$L(e_1) = c_x,L(e_2) = c_y$ and~$L(e_3) = 3$. In both cases it is obvious that $L$ is a proper~$\Psi'$-satisfying labeling for~$G'$ with at most~$k'-1=k$ weak edges.

\textbf{Case 2:~$\Psi(e_1) \neq \Psi(e_2), \Psi(e_1) \cap \Psi(e_3) = \emptyset$.} Since Rule~\ref{Rule: split disjoint neighbors} is exhaustively applied, we can assume without loss of generality that~$\Psi(e_1) = \meng{1,2},  \Psi(e_2) = \meng{2, 3}, \Psi(e_3) = \meng{3,4}$ and thus~$\Psi'(e_1) = \Psi'(e_2) = \meng{1}, \Psi'(e_3) = \meng{1,4}$. 

$(\Rightarrow)$ Let~$L$ be a proper~$\Psi$-satisfying labeling with at most~$k$ weak edges for~$G$. 
Initialize $L'$ with $L'|_{E \setminus E(P)} = L|_{E \setminus E(P)}$.
First, assume~$\Q^P_L$ contains at least one weak color. We define~$L'(e_1) = L'(e_3) = 0$ and~$L'(e_2) = 1$. Obviously, $L'$ is a proper~$\Psi'$-satisfying labeling for~$G'$ with at most~$k-1+2=k'$ weak edges.
Second, assume that~$\Q^P_L$ contains no weak color. Then,~$\Q^P_L \in \meng{(2,3,4)$,$(1,2,3)$,$(1,2,4)$,$(1,3,4)}$.
If~$\Q^P_L = (2,3,4)$, define~$L'(e_1) = 0,L'(e_2) = 1$ and~$L'(e_3) = 4$. Otherwise, choose an arbitrary color~$c_x \in \Psi(e_3) \setminus \out^{E \setminus E(P)}_{L}(e_3)$ and define~$L'(e_1) = 1,L'(e_2) = 0$ and~$L'(e_3) = c_x$. In both cases it is obvious that $L'$ is a proper~$\Psi'$-satisfying labeling for~$G'$ with at most~$k+1=k'$ weak edges.

$(\Leftarrow)$ Let~$L'$ be a proper~$\Psi'$-satisfying labeling with at most~$k'$ weak edges for~$G'$. Since~$\Psi'(e_1) = \Psi'(e_2) = \meng{1}$ it follows that~$Q^P_{L'}$ contains at least one weak color.
Initialize $L$ with $L|_{E \setminus E(P)} = L'|_{E \setminus E(P)}$.
First, assume~$Q^P_{L'}$ contains at least two weak colors. We choose two arbitrary colors~$c_x \in \Psi(e_3) \setminus \out^{E \setminus E(P)}_{L'}(e_3),c_y \in \Psi(e_2) \setminus \meng{c_x}$ and we define~$L(e_1) = 0, L(e_2) = c_y, L(e_3) = c_x$. Obviously, $L$ is a proper~$\Psi$-satisfying labeling for~$G$ with at most~$k'-1=k$ weak edges.
Second, assume that~$Q^P_{L'}$ contains exactly one weak color. Then,~$Q^P_{L'} \in \meng{(1,0,1),(1,0,4),(0,1,4)}$.
If~$\Q^P_{L'} = (0,1,4)$, then choose an arbitrary color~$c_x \in \Psi(e_1) \setminus \out^{E \setminus E(P)}_{L'}(e_1)$ and define~$L(e_1) = c_x,L(e_2) = 3$ and~$L(e_3) = 4$. Otherwise, choose two arbitrary colors~$c_x \in \Psi(e_3) \setminus \out^{E \setminus E(P)}_{L'}(e_3), c_y \in \Psi(e_2) \setminus \meng{c_x}$ and define~$L(e_1) = 1,L(e_2) = c_y$ and~$L(e_3) = c_x$. In both cases it is obvious that $L$ is a proper~$\Psi'$-satisfying labeling for~$G'$ with at most~$k'-1=k$ weak edges.

\textbf{Case 3:~$\Psi(e_1) \neq \Psi(e_2), \Psi(e_1) \cap \Psi(e_3) \neq \emptyset$.} Since Rule~\ref{Rule: split disjoint neighbors} is exhaustively applied, we can assume without loss of generality that~$\Psi(e_1) = \meng{1,2},  \Psi(e_2) = \meng{2, 3}, \Psi(e_3) = \meng{3,1}$ and thus~$\Psi'(e_1) = \Psi'(e') = \meng{1}$. Furthermore, let~$P' := (v_1,v_2,v_4)$ 

$(\Rightarrow)$ Let~$L$ be a proper~$\Psi$-satisfying labeling with at most~$k$ weak edges for~$G$. 
Initialize $L$ with $L|_{E \setminus E(P)} = L'|_{E \setminus E(P)}$.
First, assume that~$\Q^P_L$ contains at least one weak color. We define~$L'(e_1) = L'(e') = 0$. Obviously, $L'$ is a proper~$\Psi'$-satisfying labeling for~$G'$ with at most~$k-1+2=k'$ weak edges.
Second, assume that~$\Q^P_L$ contains no weak color. Then,~$\Q^P_L \in \{(1,2,1)$,$ (1,2,3)$,$(1,3,1)$,$(2,3,1)\}$.
If~$\Q^P_L = (2,3,1)$, define~$L'(e_1) = 0$ and~$L'(e') = 1$. Otherwise, define~$L'(e_1) = 1$ and~$L'(e') = 0$. In both cases it is obvious that $L'$ is a proper~$\Psi'$-satisfying labeling for~$G'$ with at most~$k+1=k'$ weak edges.

$(\Leftarrow)$ Let~$L'$ be a proper~$\Psi'$-satisfying labeling with at most~$k'$ weak edges for~$G'$. Since~$\Psi'(e_1) = \Psi'(e') = \meng{1}$ it follows that~$Q^{P'}_{L'}$ contains at least one weak color and so~$Q^{P'}_{L'} \in \meng{(0,0), (1,0), (0,1)}$.
Initialize $L'$ with $L'|_{E \setminus E(P)} = L|_{E \setminus E(P)}$.
First, assume that~$Q^{P'}_{L'} = (0,0)$. We choose an arbitrary color~$c_x \in \Psi(e_3) \setminus \out^{E \setminus E(P)}_{L'}(e_3)$ and we define~$L(e_1) = 0, L(e_2) = 2$ and $ L(e_3) = c_x$. Since~$2\not\in \Psi(e_3)$, it is obvious that $L$ is a proper~$\Psi$-satisfying labeling for~$G$ with at most~$k-2+1=k$ weak edges.
Second, assume that~$Q^{P'}_{L'} = (1,0)$. We choose an arbitrary color~$c_x \in \Psi(e_3) \setminus \out^{E \setminus E(P)}_{L'}(e_3)$ and we define~$L(e_1) = 1, L(e_2) = 2$ and $L(e_3) = c_x$. Since~$2\not\in \Psi(e_3)$, it is obvious that $L$ is a proper~$\Psi$-satisfying labeling for~$G$ with at most~$k'-1=k$ weak edges.
Finally, assume that~$Q^{P'}_{L'} = (0,1)$. We choose an arbitrary color~$c_x \in \Psi(e_1) \setminus \out^{E \setminus E(P)}_{L'}(e_1)$ and we define~$L(e_1) = c_x, L(e_2) = 3$ and $L(e_3) = 1$. Since~$2\not\in \Psi(e_1)$, it is obvious that $L$ is a proper~$\Psi$-satisfying labeling for~$G$ with at most~$k'-1=k$ weak edges.

Next, we show the running time. Every application of Rule~\ref{Rule: P4 with different size two lists} decreases the number of edges that have a set of exactly two allowed colors by at least two and a BDSP~$P$ that fulfills the conditions of Rule~\ref{Rule: P4 with different size two lists} can be found in~$\Oh(n)$~time. Consequently, Rule~\ref{Rule: P4 with different size two lists} can be exhaustively applied in~$\Oh(n^2)$~time.
\end{proof}

From this point onwards we assume that Rules~\ref{Rule: delete full or empty edges}--\ref{Rule: P4 with different size two lists} are exhaustively applied. Since every application of Rule~\ref{Rule: P4 with different size two lists} decreases the number of edges that have two allowed colors by at least two, we get that every BDP~$P$ of length at least five contains at most two edges that have two allowed colors. Together with Rule~\ref{Rule: same lists max length two} we get that~$P$ has at most two edges that have only one allowed color. This gives us that every BDP has length at most five. By this fact it is possible to show that Rules~\ref{Rule: delete full or empty edges} --\ref{Rule: P4 with different size two lists} already give a kernel with at most~$13\pe_2$ edges. But since the linear factor can be improved to~$11\pe_2$, we first present a  rule to reduce the length of BDP to at most four.

\begin{redrule}\label{Rule: size one-two-two}
If~$(G, \Psi, k)$ contains a proper BDSP~$P = (v_1, v_2, v_3, v_4)$ with~$|\Psi(e_1)| = 1, |\Psi(e_2)| = |\Psi(e_3)| = 2$ where~$e_i := \meng{v_i, v_{i+1}}$ for all~$i \in \meng{1,2,3}$, do the following 
\begin{itemize}
\item\label{Case: same lists}\textbf{If~$ \Psi(e_2) = \Psi(e_3)$,} then remove~$v_2, v_3$, add a new edge~$e' := \meng{v_1, v_4}$ and set~$\Psi(e') := \Psi(e_1)$.
\item\label{Case: diff lists}\textbf{If~$ \Psi(e_2) \neq \Psi(e_3)$.} then remove~$v_3$ from~$G$, add a new edge~$e' := \meng{v_2, v_4}$ and set~$\Psi(e') := (\Psi(e_2) \cup \Psi(e_3)) \setminus (\Psi(e_2) \cap \Psi(e_3))$.
\end{itemize}
\end{redrule}

\begin{proposition}
Rule~\ref{Rule: size one-two-two} is safe and can be exhaustively applied in~$\Oh(n^2)$~time.
\end{proposition}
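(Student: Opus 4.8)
The plan is to prove safeness by showing that $(G,\Psi,k)$ is a yes-instance if and only if the modified instance $(G',\Psi',k)$ is one, noting that $k$ is unchanged in both cases of Rule~\ref{Rule: size one-two-two}, and to treat the two cases separately, giving in each the two directions $(\Rightarrow)$ and $(\Leftarrow)$ in the style of the preceding proofs. The single structural fact that drives everything is that $v_1,v_2,v_3,v_4$ all lie on a BDP and hence have degree at most two; consequently each endpoint of the manipulated path has at most one incident edge outside the path, so $|\out^{E\setminus E(P)}_L(v_1)|\le 1$ and $|\out^{E\setminus E(P)}_L(v_4)|\le 1$ for every labeling $L$, while the internal vertices $v_2,v_3$ have no incident edges outside $P$ at all. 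This is exactly what prevents an endpoint from simultaneously blocking all colors of a size-two list. Since Rule~\ref{Rule: split disjoint neighbors} is applied exhaustively, I may also assume $\Psi(e_1)\cap\Psi(e_2)\neq\emptyset$ and $\Psi(e_2)\cap\Psi(e_3)\neq\emptyset$.

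For Case~1 ($\Psi(e_2)=\Psi(e_3)$), after renaming colors I would write $\Psi(e_1)=\{a\}$ and $\Psi(e_2)=\Psi(e_3)=\{a,b\}$, where the shared color $a$ exists by the split rule. The key observation is that the only way to color $e_1,e_2,e_3$ all strong is $(a,b,a)$, which occupies color $a$ at both $v_1$ and $v_4$; this mirrors the replacement edge $e'=\{v_1,v_4\}$ with $\Psi(e')=\{a\}$, which also occupies $a$ at both endpoints. I would then show that the path admits a labeling with no weak edge exactly when $a\notin\out^{E\setminus E(P)}_L(v_1)$ and $a\notin\out^{E\setminus E(P)}_L(v_4)$, which is precisely the condition under which $e'$ can be colored strong, and that in every other case both sides need exactly one weak edge. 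Here the degree bound is used twice: if $e_1$ is forced weak because $a$ is blocked at $v_1$, then $e_2,e_3$ can still be colored strong, and since $|\out^{E\setminus E(P)}_L(v_4)|\le 1$ the edge $e_3$ always retains a free color in $\{a,b\}$ and is never forced weak. Because $e_2$ and $e_3$ can thus always receive strong colors, the weak count is governed solely by the availability of $a$, and translating the colorings into the two directions yields the required inequalities $|W_{L'}|\le|W_L|$ and $|W_L|\le|W_{L'}|$.

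For Case~2 ($\Psi(e_2)\neq\Psi(e_3)$), both lists have size two and, by the split rule, intersect in exactly one color, so $\Psi(e')=\Psi(e_2)\triangle\Psi(e_3)$ has size two; here $e_1$ is kept and only $e_2,e_3$ are replaced by $e'=\{v_2,v_4\}$. The crucial point is that the interface of the replaced subpath is preserved: at $v_2$ both $e_2$ and $e'$ compete only with $e_1$, and at $v_4$ both $e_3$ and $e'$ compete only with the at most one external edge. I would therefore perform a case distinction on the color of $e_1$, equivalently on whether $a$ is the shared color $q$ or the private color of $\Psi(e_2)$, together with the at most one external color at $v_4$, and verify in each configuration that the minimum number of weak edges among $\{e_2,e_3\}$ equals the number of weak edges of $e'$. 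This bookkeeping is the main obstacle: unlike $e'$, the pair $(e_2,e_3)$ may place two \emph{different} colors at $v_2$ and $v_4$, so I must check that every coloring of $e'$ lifts to a coloring of $(e_2,e_3)$ with the same weak count and conversely, again using $|\out^{E\setminus E(P)}_L(v_4)|\le 1$ to guarantee that a suitable strong color for $e_3$ always remains.

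Finally, for the running time I would argue that each application of either case removes at least one edge lying on a BDP, and that the total number of such edges is $\Oh(n)$; since a qualifying BDSP can be located in $\Oh(n)$ time by scanning the vertices of degree at most two, the rule is applied at most $\Oh(n)$ times and the exhaustive application takes $\Oh(n^2)$ time. The same arguments transfer verbatim to \textsc{EL-Multi-STC}, since a BDSP contains no triangle and hence the STC constraints on its edges coincide with the proper-coloring constraints used throughout.
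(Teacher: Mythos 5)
Your proposal is correct and takes essentially the same route as the paper: the same case split on whether $\Psi(e_2)=\Psi(e_3)$, the same exploitation of the degree bound on BDP vertices (at most one external color at $v_1$ and $v_4$, none at $v_2,v_3$) to match weak-edge counts in both directions after the WLOG list-normalization provided by the earlier rules, and the same running-time argument ($\Oh(n)$ applications, each found in $\Oh(n)$ time). The differences are only presentational — you characterize the minimum weak count at the path boundary where the paper enumerates color sequences $Q^P_L$ and constructs the labelings explicitly, and your extra sub-configuration in Case 2 (where $\Psi(e_1)$ equals the shared color of $\Psi(e_2),\Psi(e_3)$) is harmless but actually excluded by exhaustive application of the earlier deletion rule — with one caveat: your closing claim that everything transfers \emph{verbatim} to \textsc{EL-Multi-STC} because ``a BDSP contains no triangle'' is too strong, since triangles can lie on non-open BDPs of length two; the paper defers STC safety to a separate proposition that needs exactly this care.
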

\begin{proof}
Let~$(G, \Psi, k)$ be an \textsc{EL-ECS} instance with~$P = (v_1, v_2, v_3, v_4)$ satisfying the conditions of Rule~\ref{Rule: size one-two-two}. Furthermore, let~$(G', \Psi', k)$ be the modified instance of \textsc{EL-ECS} constructed by Rule~\ref{Rule: size one-two-two}. First, we show that~$(G, \Psi, k)$ is a yes-instance if and only if~$(G', \Psi', k)$ is a yes-instance.  

\textbf{Case 1:~$ \Psi(e_2) = \Psi(e_3)$.} Since Rule~\ref{Rule: split disjoint neighbors} is exhaustively applied we can assume without loss of generality that~$\Psi(e_1) = \meng{1}, \Psi(e_2) = \Psi(e_3) = \meng{1,2}$ and thus~$\Psi'(e') = \meng{1}$. We set~$P_2 := (v_1,v_4)$. Initialize $L'$ with $L'|_{E \setminus E(P)} = L|_{E \setminus E(P)}$.

$(\Rightarrow)$ Let~$L$ be a proper~$\Psi$-satisfying labeling with at most~$k$ weak edges for~$G$. 
First, assume that~$\Q^P_L$ contains no weak color. Since~$\Psi(e_1) = \meng{1}$ it follows that~$\Q^P_L = (1,2,1)$. We define~$L'(e') = 1$. Since~$1 \not \in \out^{E \setminus E(P)}_L(e_1) \cup \out^{E \setminus E(P)}_L(e_3)$ it is obvious that $L'$ is a proper~$\Psi'$-satisfying labeling for~$G'$ with at most~$k$ weak edges.
Second, assume that~$\Q^P_L$ contains at least one weak color. We define~$L'(e') = 0$. Obviously, $L'$ is a proper~$\Psi'$-satisfying labeling for~$G'$ with at most~$k-1+1=k$ weak edges.

$(\Leftarrow)$ Let~$L'$ be a proper~$\Psi'$-satisfying labeling with at most~$k$ weak edges for~$G'$. Initialize $L$ with $L|_{E \setminus E(P)} = L'|_{E \setminus E(P)}$.
First, assume that~$L'(e') = 0$. We choose an arbitrary color~$c_x \in \Psi(e_3) \setminus \out^{E \setminus E(P)}_{L'}(e_3)$, set~$c_y$ as the unique remaining color of~$\Psi(e_2) \setminus \meng{c_x}$ and we define~$L(e_1) = 0, L(e_2) = c_y$ and $L(e_3) = c_x$. Obviously, $L$ is a proper~$\Psi$-satisfying labeling for~$G$ with at most~$k-1+1=k$ weak edges.
Second, assume that~$L'(e') = 1$. Since~$1 \not \in \out^{E \setminus E(P)}_{L'}(e_1) \cup \out^{E \setminus E(P)}_{L'}(e_3)$ we define~$L(e_1) = L(e_3) = 1$ and $L(e_2) = 2$. Obviously, $L$ is a proper~$\Psi$-satisfying labeling for~$G$ with at most~$k$ weak edges.

\textbf{Case 2:~$ \Psi(e_2) \neq \Psi(e_3)$.} Since Rule~\ref{Rule: split disjoint neighbors} is exhaustively applied we can assume without loss of generality that~$\Psi(e_1) = \meng{1}, \Psi(e_2) = \meng{1,2}, \Psi(e_3) = \meng{2,3}$ and thus~$\Psi'(e') = \meng{1,3}$. We let~$P_2 := (v_1, v_2 ,v_4)$.

$(\Rightarrow)$ Let~$L$ be a proper~$\Psi$-satisfying labeling with at most~$k$ weak edges for~$G$. 
Initialize $L'$ with $L'|_{E \setminus E(P)} = L|_{E \setminus E(P)}$.
First, assume that~$\Q^P_L$ contains no weak color. Since~$\Psi(e_1) = \meng{1}$ it follows that~$\Q^P_L = (1,2,3)$ so we define~$L'(e_1) = 1$ and~$L'(e') = 3$. Since~$1 \not \in \out^{E \setminus E(P)}_L(e_1), 3 \not \in \out^{E \setminus E(P)}_L(e_3)$ it is obvious that $L'$ is a proper~$\Psi'$-satisfying labeling for~$G'$ with at most~$k$ weak edges.
Second, assume that~$\Q^P_L$ contains at least one weak color. Choose an arbitrary color~$c_x \in \Psi'(e') \setminus \out^{E \setminus E(P)}_L(e_3)$ and define~$L'(e_1) = 0$ and~$L'(e') = c_x$. Obviously, $L'$ is a proper~$\Psi'$-satisfying labeling for~$G'$ with at most~$k-1+1=k$ weak edges.

$(\Leftarrow)$ Let~$L'$ be a proper~$\Psi'$-satisfying labeling with at most~$k$ weak edges for~$G'$. Initialize $L$ with $L|_{E \setminus E(P)} = L'|_{E \setminus E(P)}$.
First, assume that~$\Q^{P_2}_{L'}$ contains no weak color. Since~$\Psi(e_1) = \meng{1}$ it follows that~$\Q^P_L = (1,3)$ so we define~$L(e_1) = 1, L(e_2) = 2$ and~$L(e_3) = 3$. Since~$1 \not \in \out^{E \setminus E(P)}_{L'}(e_1), 3 \not \in \out^{E \setminus E(P)}_{L'}(e_3)$ it is obvious that $L$ is a proper~$\Psi$-satisfying labeling for~$G$ with at most~$k$ weak edges.
Second, assume that~$\Q^{P_2}_{L'}$ contains at least one weak color. Choose an arbitrary color~$c_x \in \Psi(e_3) \setminus \out^{E' \setminus E(P_2)}_{L'}(e')$ and define~$L(e_1) = 0, L(e_2) = 1$ and~$L(e_3) = c_x$. Obviously, $L$ is a proper~$\Psi$-satisfying labeling for~$G$ with at most~$k-1+1=k$ weak edges since~$c_x \not \in \out^{E' \setminus E(P_2)}_{L'}(e')$.

Next, we show the running time. Every application of Rule~\ref{Rule: size one-two-two} decreases the number of edges in~$G$ by one and a BDSP~$P$ that fulfills the conditions of Rule~\ref{Rule: size one-two-two} can be found in~$\Oh(n)$~time. Consequently, Rule~\ref{Rule: size one-two-two} can be exhaustively applied in~$\Oh(n^2)$~time.
\end{proof}

\begin{proposition}
  \label{prop:el-ecs-kernel}
For all~$c\in \mathds{N}$, \textsc{EL-ECS} and \textsc{EL-Multi-STC} admit an~$11\pe_2$-edge and~$10\pe_2$-vertex kernel for \textsc{EL-ECS} that can be computed in~$\Oh(n^2)$~time.
\end{proposition}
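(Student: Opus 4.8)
The plan is to show that, after exhaustively applying Rules~\ref{Rule: delete full or empty edges}--\ref{Rule: size one-two-two} and deleting isolated vertices, the instance already has the claimed size, and then to argue that the same rules solve \textsc{EL-Multi-STC}. First I would collect the structural facts already in place: isolated cycles are removed by Rule~\ref{Rule: cycles}, isolated BDPs carry no edges, and every open BDP has length at most two. The remaining structural claim is that every non-open BDP is now short. This is exactly the role of Rule~\ref{Rule: size one-two-two}: the counting preceding it shows that a long BDP has at most two edges with a two-element list and at most two with a one-element list, so any surviving long BDP must contain a length-three subpath of the shape $(1,2,2)$ handled by that rule; as the rule is exhausted, no such BDP remains. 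Without Rule~\ref{Rule: size one-two-two} one only gets a weaker length bound yielding a $13\pe_2$-edge kernel, and the extra rule is precisely what tightens the constant to $11\pe_2$.

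For the size bound I would fix an optimal edge-deletion set $D$ with $|D|=\pe_2$, so the core $\core$ has $|\core|\le 2\pe_2$, every periphery vertex has degree at most two, and every vertex of degree at least three is a core vertex. I partition $E$ into $E_{\core}$ (both endpoints in $\core$), $E_{\text{mix}}$ (exactly one endpoint in $\core$), and $E_{\peri}$ (both endpoints in $\peri$). Since all $D$-edges join two core vertices and $G-D$ has maximum degree two, counting core degrees gives $2|E_{\core}|+|E_{\text{mix}}|=\sum_{v\in\core}\deg_G(v)\le 2|\core|+2|D|\le 6\pe_2$; moreover $E_{\text{mix}}\subseteq E(G-D)$ yields $|E_{\text{mix}}|\le 2|\core|\le 4\pe_2$ and $D\subseteq E_{\core}$ yields $|E_{\core}|\ge\pe_2$. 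The components of $G[\peri]$ are exactly the periphery interiors of the BDPs; each is a short path attached to the core by one mix edge (if it comes from an open BDP) or two mix edges (if non-open) at its endpoints, so the number of components, and hence $|E_{\peri}|$, is controlled by $|E_{\text{mix}}|$. Feeding the three inequalities above into this accounting, together with the length bound from Rule~\ref{Rule: size one-two-two}, gives $|E|=|E_{\core}|+|E_{\text{mix}}|+|E_{\peri}|\le 11\pe_2$; the vertex bound follows because each periphery component has one more vertex than edges and the number of components is at most $|E_{\text{mix}}|\le 4\pe_2$, so that at most $8\pe_2$ periphery vertices and $2\pe_2$ core vertices remain.

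For \textsc{EL-Multi-STC} I would use that an STC-labeling and a proper labeling impose the same constraint on any edge lying in no triangle, so on triangle-free instances the two problems coincide. Every edge that a rule inspects, deletes, or creates lies on a BDP or an isolated cycle, and such an edge is triangle-free inside $G[\peri]$; the only way it can belong to a triangle is through a core vertex. I would therefore add a preprocessing step that clears these finitely many core-incident triangles on BDP and cycle edges (bounded in number by $|\core|$, in the spirit of Rule~\ref{Rule: delete triangle PCs}). After this preprocessing every rule's safeness proof carries over verbatim, because the only labeling property those proofs invoke is that two incident edges on a BDP may not share a strong color. Finally, each of Rules~\ref{Rule: delete full or empty edges}--\ref{Rule: size one-two-two} is exhaustively applicable in $\Oh(n^2)$ time and strictly decreases a simple measure, so the whole kernelization runs in $\Oh(n^2)$ time, giving the stated kernels for both problems.

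I expect the main obstacle to be the bookkeeping that pins the constant down to $11\pe_2$: the extremal case is a non-open BDP whose two endpoints are themselves periphery vertices, so that its two core attachments are separate off-path edges and the entire path counts towards $E_{\peri}$, and one must verify that the length reduction of Rule~\ref{Rule: size one-two-two} is exactly what turns the easy $13\pe_2$ bound into $11\pe_2$. A secondary difficulty is designing the triangle-preprocessing for \textsc{EL-Multi-STC} so that it genuinely leaves all BDP and cycle edges triangle-free without disturbing the core, which is what allows the entire rule set to be reused unchanged.
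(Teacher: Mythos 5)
Your treatment of \textsc{EL-ECS} follows the paper's route: exhaust Rules~\ref{Rule: delete full or empty edges}--\ref{Rule: size one-two-two}, then count edges incident to the core (both you and the paper arrive at $5\pe_2$, you via the degree identity $2|E_{\core}|+|E_{\text{mix}}|\le 2|\core|+2|D|$, the paper by direct counting) and bound the leftover periphery edges using the shortness of open and non-open BDPs; this part is sound modulo the bookkeeping you yourself flag.

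The genuine gap is in the \textsc{EL-Multi-STC} part, in two places. First, your claim that every edge a rule touches ``is triangle-free inside $G[\peri]$; the only way it can belong to a triangle is through a core vertex'' is false: an isolated triangle is an isolated cycle of length three lying entirely in the periphery, and Rule~\ref{Rule: cycles} (via Algorithm~\ref{Algo}) solves it with \textsc{EL-ECS} semantics, which is wrong under STC semantics --- for an isolated triangle with all lists equal to $\{1\}$ the EL-ECS optimum has two weak edges while the EL-Multi-STC optimum has none, since the three edges pairwise form no induced $P_3$. This is exactly why the paper introduces Rule~\ref{Rule: K3 for stc}, applied \emph{before} the other rules, and your proposal omits it; your ``core-incident'' clearing would never see these triangles. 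Second, your proposed preprocessing that ``clears'' core-apex triangles ``in the spirit of Rule~\ref{Rule: delete triangle PCs}'' has no safeness argument in the list setting: the safeness of Rule~\ref{Rule: delete triangle PCs} rests on the good-periphery-component machinery, which freely recolors edges --- precisely the freedom that edge lists destroy (this is the content of Proposition~\ref{prop:el-hard}, NP-hardness even with $\pe_3=k=0$). The paper needs no such clearing: it proves (Lemma~\ref{Lemma: K3 safe}) that after isolated triangles are removed, the only BDP edges lying in triangles sit on non-open BDPs consisting of a single periphery edge whose endpoints share a core neighbor, and then verifies (Proposition~\ref{Prop: safe for stc}) that Rules~\ref{Rule: delete full or empty edges}--\ref{Rule: size one-two-two} never modify such a BDP nor shrink any BDP down to one, so the EL-ECS safeness proofs apply unchanged to the edges the rules actually touch. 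Without that lemma, your ``carries over verbatim'' assertion is unsupported, and with it, your extra preprocessing step is unnecessary.
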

\begin{proof}
As argued in Section~\ref{sec:ed},~$|\core| \leq 2\pe_2$ and that there are at most~$5\pe_2$ edges incident to core vertices. Since Rules~\ref{Rule: delete full or empty edges} -~\ref{Rule: cycles} are exhaustively applied, every periphery component is either an isolated~$K_1$ or contains at least one close vertex. Hence, it is easy to see that every edge that is not incident with at least one core vertex has to lie on an BDP. The set of close vertices~$A^*$ has size at most~$4\pe_2$ since every core vertex has at most two neighbors that are not in the core, so there are at most~$4 \pe_2$ open BDPs and at most~$2 \pe_2$ non-open BDPs. By the facts that Rules~\ref{Rule: delete full or empty edges} -~\ref{Rule: size one at open end} reduced the size of open BDPs to at most two, Rules~\ref{Rule: delete full or empty edges} -~\ref{Rule: size one-two-two} reduced the size of non-open BDPs to at most four, so there are at most~$3 * 2 \pe_2 = 6 \pe_2$ edges that are not connected to core vertices. Altogether, we get that there are at most~$11\pe_2$ edges and at most~$10\pe_2$ vertices in the reduced instance of \textsc{EL-ECS} after Rules~\ref{Rule: delete full or empty edges} -~\ref{Rule: size one-two-two} are exhaustively applied. 
\end{proof}

Some of the previous reduction rules may look strange in a way that they have restriction under which they should not be applied but these restrictions were neither used to prove the correctness nor the running time of these reduction rules. The reason for this is, that these reduction rules should also work for \textsc{EL-Multi-STC}. To prove the same kernel we first give a reduction rule that solves \textsc{EL-Multi-STC} on all isolated triangles in~$\Oh(n)$~time and show afterwards, that if there is a non-isolated triangle in~$G$, it is also contained in~$G'$ and vice versa.

\begin{redrule}\label{Rule: K3 for stc}
If~$(G, \Psi, k)$ contains an isolated triangle consisting of the vertices~$v_0, v_1, v_2$, then remove all three vertices from~$G$ and decrease~$k$ by the number of edges~$e$ in~$G[\meng{v_1,v_2,v_3}]$ with~$\Psi(e)=\emptyset$.
\end{redrule}
\begin{proposition}
Rule~\ref{Rule: K3 for stc} is safe and can be exhaustively applied in~$\Oh(n)$~time.
\end{proposition}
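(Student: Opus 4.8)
The plan is to exploit the defining feature of STC-labelings on a triangle: an induced~$P_3$ needs two non-adjacent endpoints, so no two edges of a triangle can ever witness an STC violation, and since the triangle~$T=\{v_0,v_1,v_2\}$ is an \emph{isolated} component, none of its three edges is incident with any edge outside~$T$. Hence the edges in~$E(T)$ are subject to no STC constraint whatsoever. In the list setting this means that each edge~$e\in E(T)$ can be freely colored with an arbitrary color of~$\Psi(e)$ whenever~$\Psi(e)\neq\emptyset$, and must be weak precisely when~$\Psi(e)=\emptyset$. Consequently the minimum number of weak edges that any~$\Psi$-satisfying STC-labeling places on~$T$ equals the number~$w$ of edges of~$T$ with empty list, which is exactly the amount by which the rule decreases~$k$.

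To establish safeness I would argue both directions, writing~$(G',\Psi',k')=(G-T,\Psi|_{E\setminus E(T)},k-w)$. For the forward direction, let~$L$ be a~$\Psi$-satisfying STC-labeling of~$G$ with~$|W_L|\leq k$; since every edge of~$T$ with empty list is forced to be weak, $L$ places at least~$w$ weak edges on~$T$, so the restriction~$L|_{E\setminus E(T)}$ is a~$\Psi'$-satisfying STC-labeling of~$G'$ with at most~$k-w=k'$ weak edges, using that passing to the induced subgraph~$G'$ cannot create new induced~$P_3$s. For the backward direction, given a~$\Psi'$-satisfying STC-labeling~$L'$ of~$G'$ with at most~$k'$ weak edges, I would extend it to~$G$ by making each empty-list edge of~$T$ weak and coloring every remaining edge of~$T$ with an arbitrary color from its nonempty list; because~$T$ is isolated and its edges impose no STC constraint, the result is a~$\Psi$-satisfying STC-labeling of~$G$ with at most~$k'+w=k$ weak edges. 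This proves that~$(G,\Psi,k)$ and~$(G',\Psi',k')$ are equivalent.

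For the running time I would detect isolated triangles by scanning the degree-$2$ vertices: a vertex~$v$ with neighbors~$a,b$ lies in an isolated triangle exactly when~$\deg(a)=\deg(b)=2$ and~$a,b$ are adjacent. The adjacency test is performed only once all three degrees equal~$2$, in which case it inspects the length-two neighbor lists of~$a$ and~$b$ and thus costs~$\Oh(1)$. Since the isolated triangles are vertex- and edge-disjoint, a single pass over~$V$ discovers each triangle a constant number of times and removes it once, so the exhaustive application runs in~$\Oh(n)$ time.

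I do not expect a genuine obstacle here; the only subtlety to get right is the bookkeeping showing that empty-list edges are simultaneously \emph{forced} weak in every solution and \emph{counted exactly once} in~$w$, which is precisely what makes the two budget bounds~$|W_L|-w\le k'$ and~$|W_{L'}|+w\le k$ line up in the two directions.
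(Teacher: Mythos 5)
Your proof is correct and rests on exactly the same observation as the paper's (one-sentence) justification: inside an isolated triangle no pair of edges can form an induced~$P_3$, so every list-respecting labeling of its edges satisfies STC, forcing precisely the empty-list edges to be weak. Your write-up merely fills in the budget bookkeeping and the linear-time detection, which the paper leaves implicit.
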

 
The correctness of Rule~\ref{Rule: K3 for stc} follows directly from the fact, that every labeling on an isolated triangle is a proper STC labeling. So we can assume from now on that Rule~\ref{Rule: K3 for stc} is exhaustively applied and thus~$G$ does not contain any isolated triangle.

\begin{lemma}\label{Lemma: K3 safe}
Let~$P=(v_1, \dots, v_r), r \geq 2$ be a BDP in~$G$. Then there is no~$v \in V(P)$ such that~$v$ forms a triangle with two other vertices of~$G$ unless~$P$ is a non-open BDP and~$r = 2$.
\end{lemma}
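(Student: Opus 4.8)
The plan is to argue by the position of a hypothetical triangle-vertex on $P$, exploiting that every vertex of a BDP has degree at most two. Since a vertex lying on a triangle needs at least two neighbours, any candidate $v \in V(P)$ that forms a triangle must satisfy $\deg_G(v) = 2$. I would first dispose of the \emph{internal} vertices $v_i$ with $2 \le i \le r-1$. Such a vertex has exactly the two path-neighbours $v_{i-1}$ and $v_{i+1}$, so a triangle through $v_i$ forces the chord $\{v_{i-1}, v_{i+1}\} \in E$. Counting the degrees at $v_{i-1}$ and $v_{i+1}$ then forces $i = 2$ and $i = r-1$ simultaneously (otherwise one of these vertices would acquire a third neighbour $v_{i-2}$ or $v_{i+2}$), hence $r = 3$. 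But then $v_1,v_2,v_3$ are pairwise adjacent, and since all three already have degree exactly two they form an isolated triangle; as $P$ is part of it, this contradicts the defining property that a BDP is not part of an isolated cycle. Thus no internal vertex lies in a triangle.

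It remains to treat the endpoints, and by symmetry I only discuss $v_1$. If $\deg_G(v_1) = 1$ it cannot be in a triangle, so assume $\deg_G(v_1) = 2$ and let $w$ be its neighbour other than $v_2$. The key structural claim I would isolate is that $w \notin V(P)$ and $\deg_G(w) \ge 3$. For the first part, if $w = v_j$ with $j \ge 3$, then the extra edge $\{v_1, v_j\}$ would push $\deg_G(v_j)$ above two unless $j = r$, and $j = r$ closes $P$ into the cycle $v_1 v_2 \cdots v_r v_1$ in which every vertex has degree exactly two, i.e.\ an isolated cycle containing $P$, again a contradiction. Hence $w \notin V(P)$. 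For the second part, if $\deg_G(w) \le 2$ then $(w, v_1, \dots, v_r)$ is a strictly longer simple path of maximum degree two; it is either part of an isolated cycle, in which case so is $P$, or it is a valid longer BDP-candidate, contradicting the inclusion-maximality of $P$. Either way we obtain a contradiction, so $\deg_G(w) \ge 3$.

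With this claim, a triangle through $v_1$ would require $\{v_2, w\} \in E$. If $r \ge 3$, then $v_2$ is an internal vertex of degree two whose only neighbours are $v_1$ and $v_3$, so it cannot be adjacent to $w \notin V(P)$, and $v_1$ lies in no triangle. The only surviving case is $r = 2$, where $\{v_2, w\} \in E$ is possible and yields the triangle $v_1 v_2 w$; there $v_2$ is adjacent to both $v_1$ and $w$, so $\deg_G(v_2) = 2$ and $P = (v_1,v_2)$ is non-open, which is exactly the claimed exception. Assembling the cases shows that a triangle-vertex on $P$ can occur only when $P$ is a non-open BDP with $r = 2$. The step I expect to be the main obstacle, and would write most carefully, is the structural claim on the extra neighbour $w$ of a degree-two endpoint, since it is precisely here that both the inclusion-maximality and the ``not part of an isolated cycle'' conditions from the definition of a BDP must be invoked together.
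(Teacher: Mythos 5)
Your proof is correct, and it is organized differently from the paper's. The paper argues by contradiction with a case split on the \emph{shape of the BDP}: if $P$ is open, an induction starting at the degree-one endpoint shows no vertex of $P$ can lie on a triangle; if $r=3$, the only candidate triangle is $\{v_1,v_2,v_3\}$ itself, which would be an isolated triangle; and if $r>3$, chord arguments plus an induction along the path rule everything out. You instead split on the \emph{position of the offending vertex}: internal vertices are killed by the chord argument (forcing $r=3$ and an isolated triangle), and endpoints are handled via your structural claim about the off-path neighbour $w$, which funnels every surviving configuration into the exceptional case ($r=2$, non-open). Your route buys two things: it avoids induction entirely, and it makes explicit where the exception actually arises, whereas the paper's proof only shows the complementary cases are impossible. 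One remark: the second half of your structural claim, $\deg_G(w)\ge 3$, is never used afterwards --- only $w\notin V(P)$ feeds into the final step --- so the appeal to inclusion-maximality of the BDP (which the paper's proof never needs) could be dropped without loss; it is the one place where you invoke a property of BDPs that the statement does not actually require. Alternatively, you could skip $w \notin V(P)$ too and observe directly that a triangle through $v_1$ forces $w\in N(v_2)=\{v_1,v_3\}$, hence $w=v_3$, which for $r=3$ gives an isolated triangle and for $r\ge 4$ contradicts $\deg(v_3)\le 2$.
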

\begin{proof}
Proof by contradiction. We assume that there is a vertex~$v\in V(P)$ that forms a triangle with two other vertices in~$G$ and~$P$ is an open BDP or~$r = 2$.

\textbf{Case 1:}~$P$ is an open BDP\textbf{.} We can assume without loss of generality that~$\deg(v_1) < 2$. Thus it is obvious that~$v_1$ can not form a triangle in~$G$. Since~$v_1 \in N(v_2)$ and~$|N(v_2)| \leq 2$, neither can~$v_2$. So we can prove by  induction that for all~$v_x \in V(P)$ it holds that~$v_x$ can not form a triangle in~$G$. This contradicts the assumption that there is such a vertex~$v$ in an open BDP.

\textbf{Case 2:} $r = 3$\textbf{.} Since~$v_2 \in N(v_1) \cap N(v_3)$ it remains to show that~$v_1, v_2, v_3$ do not form a triangle in~$G$ which is equivalent to~$v_3 \not \in N(v_1)$. So we assume that~$v_3 \in N(v_1)$. By the fact that~$|N(v_i)| \leq 2$ for all~$i \in \meng{1,2,3}$, it follows that~$v_1, v_2, v_3$ form an isolated triangle in~$G$ which contradicts the fact that~$P$ is a BDP.

\textbf{Case 3:}~$r > 3$\textbf{.} Since~$v_2 \in N(v_1) \cap N(v_3)$ and~$v_3 \not \in N(v_1)$ it is obvious that~$v_2$ can not form a triangle in~$G$ with its neighbors and neither can~$v_1$ nor~$v_3$. By induction no~$v_x \in V(P)$ can form a triangle with its neighbors and thus it cannot form a triangle in~$G$. This contradicts the assumption that there is such a vertex~$v$.
\end{proof}

\begin{proposition}\label{Prop: safe for stc}
The Rules~\ref{Rule: delete full or empty edges} -~\ref{Rule: size one-two-two} are safe for \textsc{EL-Multi-STC} if the instance is already reduced with respect to Rule~\ref{Rule: K3 for stc}.
\end{proposition}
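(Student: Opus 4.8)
The plan is to reduce the claim to a single structural fact: in a reduced instance the only way an STC-labeling can differ from a proper labeling is that two incident edges may share a strong color when they lie in a common triangle, and Rule~\ref{Rule: K3 for stc} has already removed every \emph{isolated} triangle. I would therefore argue that, on exactly the edges that Rules~\ref{Rule: delete full or empty edges}--\ref{Rule: size one-two-two} ever touch, the STC constraint and the proper-coloring constraint coincide, so that every EL-ECS safety proof can be reused almost verbatim. The two ingredients are (i) that every endpoint of a manipulated edge is triangle-free, and (ii) that this stays true after the rule is applied, i.e.\ that the rules neither create nor destroy a non-isolated triangle.

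For (i) and (ii) I would first establish the \emph{triangle-preservation} statement announced before the proposition: $G$ and $G'$ contain the same non-isolated triangles. Every rule only deletes or adds edges incident to vertices lying on a BDP of the kind the rule is restricted to (length at least three, open, or a proper BDSP). By Lemma~\ref{Lemma: K3 safe}, no vertex on such a BDP is part of a triangle in $G$, so no triangle is destroyed, and plain edge or vertex deletions create none. The delicate rules are the ones that add an edge, namely Rules~\ref{Rule: split disjoint neighbors}, \ref{Rule: same lists max length two}, \ref{Rule: P4 with different size two lists} and~\ref{Rule: size one-two-two}. Here I would use that the relevant path is a \emph{proper} (strict) subpath of a longer simple BDP: in Rule~\ref{Rule: split disjoint neighbors} the new edges are pendant, while in the reconnecting rules the two joined endpoints have their remaining neighbours strictly further along the simple BDP (or are new degree-one vertices), hence share no common neighbour, so the inserted edge cannot close a triangle. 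This is the main obstacle of the proof, and it is precisely where the otherwise unused side conditions of the rules (the properness of the BDSP, and the requirement $v_1\notin e,\ v_r\notin e$ in case~\ref{Case: conflictfree color} of Rule~\ref{Rule: delete full or empty edges}) are needed.

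Granting triangle-preservation, I would finish as follows. Fix a rule, let $e$ be an edge that it recolours, adds or removes, and let $v$ be one of its endpoints. By Lemma~\ref{Lemma: K3 safe} together with triangle-preservation, $v$ is triangle-free in both $G$ and $G'$; thus any two edges incident with $v$ form an induced $P_3$, and at $v$ the STC condition says exactly the same as ``no two incident edges share a strong color''. Consequently a $\Psi$-satisfying STC-labeling is, at every such $v$, already locally proper, which is exactly the property that the EL-ECS safety proof uses about its proper input labeling; and the labeling the proof constructs, being proper at these vertices, satisfies STC there as well. Away from the manipulated edges the labeling and all relevant adjacencies are unchanged, so any potential strong $P_3$ is inherited from the STC-labeling we started with. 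Hence every forward and backward construction in the safety proofs of Rules~\ref{Rule: delete full or empty edges}--\ref{Rule: size one-two-two} turns a $\Psi$-satisfying STC-labeling with at most $k$ weak edges into one for the modified instance with the claimed budget, which establishes that each of these rules is safe for \textsc{EL-Multi-STC}.
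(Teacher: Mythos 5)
Your overall strategy---use Lemma~\ref{Lemma: K3 safe} to argue that every edge the rules manipulate is triangle-free in both $G$ and $G'$, so that on these edges the STC condition coincides with the proper-coloring condition and the \textsc{EL-ECS} safety proofs carry over---is exactly the paper's strategy, and your treatment of the edge-adding rules (pendant new edges, and reconnected endpoints whose remaining neighbours are distinct because they lie strictly further along the simple BDP or outside it) is, if anything, more explicit than the paper's one-line remark that a BDSP is only shortened to length two when it is a \emph{proper} BDSP. However, there is a genuine gap: Rule~\ref{Rule: cycles} is one of the rules covered by the proposition, and your argument never applies to it. Your blanket claim that ``every rule only deletes or adds edges incident to vertices lying on a BDP'' is false for Rule~\ref{Rule: cycles}, which deletes an entire isolated cycle; by definition a BDP is never part of an isolated cycle, so Lemma~\ref{Lemma: K3 safe} says nothing about these vertices, and your final paragraph (which derives triangle-freeness of every manipulated endpoint from that lemma) silently skips this rule.

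The omission matters because Rule~\ref{Rule: cycles} is precisely the rule whose safety hinges on the hypothesis that Rule~\ref{Rule: K3 for stc} has already been applied. That rule decreases $k$ by the optimum computed by Algorithm~\ref{Algo}, which is a proper-labeling (\textsc{EL-ECS}) optimum; on an isolated triangle this value can be strictly larger than the \textsc{EL-Multi-STC} optimum (with lists $\Psi(e)=\{1\}$ on all three edges: two weak edges versus zero, since inside a triangle no two edges form an induced $P_3$), so without the hypothesis the rule would turn yes-instances into no-instances. The missing argument is short, and it is how the paper concludes: an isolated cycle is chordless, hence contains a triangle only if it \emph{is} one, i.e.\ an isolated triangle, which Rule~\ref{Rule: K3 for stc} has removed; therefore $G[C]$ is triangle-free, \textsc{EL-Multi-STC} and \textsc{EL-ECS} coincide on it, and Algorithm~\ref{Algo} computes the correct STC value as well. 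Add this case explicitly (and restrict your blanket claim to Rules~\ref{Rule: delete full or empty edges}--\ref{Rule: size one at open end} and~\ref{Rule: same lists max length two}--\ref{Rule: size one-two-two}); with that addition your proof is complete and matches the paper's.
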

\begin{proof}
Let~$(G, \Psi, k)$ be an \textsc{EL-Multi-STC} instance reduced with respect to Rule~\ref{Rule: K3 for stc}. With Lemma~\ref{Lemma: K3 safe} we know that for every edge that lies on a BDP~$P$ and on a triangle at the same time it holds that~$P$ is a non-open BDP of length exactly two. So we will show in the following, that none of the Rules~\ref{Rule: delete full or empty edges} -~\ref{Rule: size one-two-two} modifies a non-open BDP of length two or decreases a non-open BDP to one of length of two. For Rule~\ref{Rule: delete full or empty edges} this is obvious since this rule can not be applied on edges that lie on non-open BDPs of length two. Since the Rules~\ref{Rule: split disjoint neighbors} -~\ref{Rule: size one at open end} and~\ref{Rule: same lists max length two} -~\ref{Rule: size one-two-two} can not be applied on BDPs of length two and only decrease a BDSP to a length of two if it is a proper BDSP, all these rules are safe with respect to Lemma~\ref{Lemma: K3 safe}. The only thing left to show is that Rule~\ref{Rule: cycles} is also safe for \textsc{EL-Multi-STC}. By the fact that Rule~\ref{Rule: K3 for stc} is exhaustively applied, there are no isolated triangle in~$G$ and thus, for isolated cycles~$C$ it holds that~$G[C]$ is triangle-free. Since \textsc{EL-Multi-STC} is equivalent to \textsc{EL-ECS} on triangle-free graphs, it follows that Rule~\ref{Rule: cycles} is also safe for \textsc{EL-Multi-STC}.
\end{proof}

With this proposition, it is clear the previous reduction rules also admit the same linear edge kernel for \textsc{EL-Multi-STC}. Thus, Propositions~\ref{prop:el-ecs-kernel} and~\ref{Prop: safe for stc} give our main result for this section.

\elkernel*
% \begin{corollary}
% Rule~\ref{Rule: K3 for stc} and Rules~\ref{Rule: delete full or empty edges} -~\ref{Rule: size one-two-two} admit an~$11\pe_2$ edge and~$10\pe_2$ vertex kernel for \textsc{EL-Multi-STC} in~$\Oh(n^2)$ for every~$c$.
% \end{corollary}

\fi

%\subsection{STC parameterized by Vertex Cover}

%\todo[inline]{Add handwavy arguments that \textsc{STC} has no poly kernel for every fixed value of~$c$} 

%\newpage
%\bibliographystyle{plain} 
%\bibliography{mSTCclasses}

\end{document}